\definecolor{mycolor1}{rgb}{0.10000,0.60000,0.30000}%
\newfont{\bbb}{msbm10 scaled 500}
\newfont{\bb}{msbm10 scaled 1100}
\newcommand{\xv}{{\bf x}}
\newcommand{\onev}{{\bf 1}}
\newcommand{\Bc}{{\cal B}}
\newcommand{\CDc}{{\cal{CD}}}
\newcommand{\Cc}{{\cal C}}
\newcommand{\Dc}{{\cal D}}
\newcommand{\Ec}{{\cal E}}
\newcommand{\Ic}{{\cal I}}
\newcommand{\Pc}{{\cal P}}
\newcommand{\Sc}{{\cal S}}
\newcommand{\Tc}{{\cal T}}
\newcommand{\Wc}{{\cal W}}
\newcommand{\Xc}{{\cal X}}
\newcommand{\Yc}{{\cal Y}}
\newcommand{\Zc}{{\cal Z}}
\renewcommand{\xv}{\pmb{x}}
\newtheorem{theorem}{Theorem}
\newtheorem{definition}{Definition}
\newtheorem{lemma}{Lemma}
\newtheorem{proposition}{Proposition}
\newtheorem{corollary}{Corollary}
\newtheorem{remark}{Remark}
\newcommand{\qt}{\textnormal{q}}
\renewcommand{\C}{\mathsf{C}}
\newcommand{\D}{\mathsf{D}}
\newcommand{\B}{\mathsf{B}}
\newcommand{\R}{\mathsf{R}}
\newcommand\redsout{\bgroup\markoverwith{\textcolor{red}{\rule[0.5ex]{2pt}{0.4pt}}}\ULon}
\begin{document}
	
	\title{An Information-Theoretic Approach to Joint Sensing and Communication}

	\author{Mehrasa Ahmadipour, Mari Kobayashi, Mich\`ele Wigger, Giuseppe Caire\thanks{Manuscript received July 28, 2021; revised May 4, 2022; accepted May 6, 2022. Part of this material was presented at IEEE Int. Symp. Inf. Theory (ISIT) 2018 \cite{kobayashi2018joint} and  at {IEEE} Inf. Theory Workshop {(ITW)} 2020  \cite{ahmadipour2021joint}.}	
		\thanks{ M. Ahmadipour and  M. Wigger are with LTCI, Telecom Paris, IP Paris, F-91120 Palaiseau, France,   \{mehrasa.ahmadipour,michele.wigger\}@telecom-paris.fr. }
		\thanks{{Mari Kobayashi is with Apple Technology Engineering B.V. Co. KG, 85579 Neubiberg, Germany.
				This work was done while she was at Technical University of Munich (email: kobamari@gmail.com). 
		}}
		\thanks{{ G. Caire is with 
				Technical University of Berlin, Germany,
				caire@tu-berlin.de.	}}
		\thanks{The works of M. Ahmadipour and M. Wigger were supported by the European Research Council (ERC) under the European Union’s Horizon 2020 programme, grant agreement number 715111. The works of M. Kobayashi and G. Caire were supported by the DFG, Grant agreement numbers KR 3517/11-1 and CA 1340/11-1, respectively. }
	}
	
	\maketitle

	\begin{abstract} 
		A communication setup is considered where a single transmitter wishes to convey  messages to one or two receivers and simultaneously estimates the states of the receivers \textcolor{black}{through the backscattered signals of the emitted waveform}. The scenario at hand is motivated by joint radar and communication, which aims to co-design radar sensing and communication over a shared  spectrum and hardware. 
	\textcolor{black}{
		In this paper, we model the  communication channel as a simple memoryless   channel with independent and identically distributed (i.i.d.) time-varying state sequences and we model the backscattered signals by (strictly causal) generalized feedback.  
		 For  single-receiver systems of this form, we fully characterize the capacity-distortion tradeoff,  defined as the largest  rate at which a message can reliably be conveyed  to the receiver while simultaneously allowing the transmitter to sense the state sequence with a given allowed distortion. Our results show a tradeoff between the achievable rates and distortions, and that this tradeoff only stems from a common choice of the input distribution (the waveform) but not from other properties of the utilized codes. To better illustrate the capacity-distortion tradeoff, we  propose a numerical method to compute the optimal inputs (waveforms)  that achieve the desired tradeoff. 
		For two-receiver systems with two states, we characterize the capacity-distortion tradeoff region of {physically degraded broadcast channels (BC)}  as a rather straightforward extension of  the single receiver case. Here, a tradeoff not only arises between sensing and communication performances but also between the various rates and the distortions of the different states. Similarly to the single-receiver case, the optimal co-design scheme exploits the generalized feedback  signals only for sensing but not for improving communication performance. This is different for general two-receiver BCs, where optimal co-design schemes exploit generalized feedback also to improve capacity. However, as we show,  also for BCs the optimal sensing performance only depends on the chosen input distribution (waveform) but not on the code construction used to accomplish the communication task. For  general BCs,  we  provide inner and outer bounds on the capacity-distortion region,  as well as a sufficient condition when this capacity-distortion region is equal to the product of the capacity region and the set of achievable distortions, in which case no tradeoff between sensing and communication occurs.  A number of illustrative examples demonstrate that the optimal co-design schemes outperform conventional schemes that split the resources between  sensing and communication, both for single-receiver and BC systems. }
		
	\end{abstract}
\begin{IEEEkeywords}	Integrated sensing and communication, Generalized feedback, Communication, Radar sensing.
	\end{IEEEkeywords}
	\IEEEpeerreviewmaketitle
	

\section{Introduction}
Future generation wireless networks are expected to support several autonomous and intelligent applications that strongly rely on accurate sensing and localization techniques \cite{6G,bourdoux20206g}. {\color{black}An example are  intelligent transportation systems where  vehicles interact in a cooperative radar sensor network with the goal  to provide unique safety features and intelligent traffic routing.
The key enabler of such applications is the ability to sense the dynamically changing environment continuously, hereafter called the \emph{state}, and to react accordingly by {exchanging information}.  The standard assumption of such a  joint radar sensing and communication system is a transmitter equipped with a co-located radar receiver that wishes to convey a message to a (already detected) receiver and simultaneously estimate the state parameters of that receiver.

	A common but naive approach to address   sensing and communication is to separate the two tasks in independent systems and accordingly split the available resources such as bandwidth and power between the two systems. In our information-theoretic model that we present shortly, such a system corresponds to time-sharing between communication and sensing; we shall call this   {\it basic time-sharing (TS)}. 
	The high cost of spectrum and hardware however  encourages integrating the sensing and communications tasks via a single waveform and a single hardware platform (see, e.g., \cite{zheng2019radar,liu2020joint} and references therein). First attempts towards such integrated systems  use a standard communication system and exploit the backscattered signal from this waveform for sensing purposes, where the employed transmit  waveform is  either  optimized for sensing or for communication; we shall call these \emph{the sensing and communication modes of improved TS.}

 The scenario at hand has been extensively studied in the literature (see e.g. \cite{sturm2011waveform,gaudio2019effectiveness} and references therein). 
In particular, several {\it{ joint sensing and communication}} schemes, or co-design schemes, have been proposed to optimize performance metrics capturing some tension between two performances \cite{bliss2014cooperative,chiriyath2016inner,paul2017survey,kumari2018ieee,kumari2017performance}. 
Despite of these works providing system guidelines or proposing waveforms suitable to some specific scenarios, none has addressed the fundamental performance limits above which a joint sensing and communication system cannot operate irrespectively of computational complexities, choices of state parameters, or further assumptions. This observation inspires us to study the fundamental limit of joint sensing and communication from an information-theoretic perspective. 
	
	Our  work is the first information-theoretic work on joint sensing and communication. We  emphasize the difference  to the information-theoretic works in \cite{kim2008state,zhang2011joint,choudhuri2013causal, Shraga,Sibi} where sensing (state-estimation) is performed at the receiver and not at the transmitter, which models different real-world applications. In  \cite{kim2008state,choudhuri2013causal, Shraga,Sibi}, the transmitter even knows the state a priori. 

%

In this paper, we build on a simple single-transmitter communication model with a discrete  memoryless channel and independent and identically distributed (i.i.d.) state-sequences. The transmitter observes  strictly causal {\it generalized feedback signals}, used for state sensing, while each receiver is assumed to perfectly know its corresponding channel state. The generalized feedback model captures two underlying assumptions used in radar signal processing. 
On the one hand, it captures the inherently passive nature of the backscattered signal observed at the transmitter, which cannot be controlled but is determined by its surrounding environment. 
On the other hand, it models the fact  that the  backscattered signal depends on the waveform employed by the transmitter. It is thus clear, that the employed waveform affects both the communication and sensing performances of the system and should be designed in a synergistic manner. 
%
Our goal is to characterize the fundamental  tradeoff between the communication and sensing performance of such systems and the improvements an optimally designed scheme achieves over the separation scheme (i.e., the described basic TS) and over integrated systems that either prioritize sensing or communication (i.e., above described improved TS). To this purpose, we consider the {\it capacity-distortion tradeoff} as a performance measure since it suitably balances between two ultimate objectives:  maximizing communication rate and minimizing state estimation error or {\it{distortion}}. 
The presented model was introduced in our conference publications  \cite{kobayashi2018joint,ahmadipour2021joint} and was also extended to the two-user multiple-access channel in \cite{kobayashi2019joint, ahmadipour2022jointMAC}.
	
	In this work we consider the single-transmitter single-receiver point-to-point (P2P) channel and the single-transmitter two-receiver BC. 
For the P2P channel we exactly  characterize the capacity-distortion-cost tradeoff, which allows  to  quantify the merit of an optimal co-design scheme over the described basic and improved TS schemes. Not surprisingly, our results show  that  without loss in optimality the communication scheme can ignore the generalized feedback signals, which are only used for state sensing, and  communication and sensing performances only depend on each  other through the choice of the common waveform. Our results further show that in most situations a tradeoff between the simultaneously achievable sensing and communication performances arises. Based on  our results we further identify  ``matched" situations where the same waveform simultaneously achieves capacity and minimum distortion. A Blahut-Arimoto type algorithm is presented that evaluates the capacity-distortion-cost tradeoff numerically.

While feedback does not increase  capacity of memoryless P2P channels, it can significantly increase capacity of memoryless BCs \cite{shayevitz2012capacity,venkataramanan2013achievable,gastpar2014coding} because it enables the transmitter to send some common information that is useful to both receivers at the same time (see e.g., \cite[Section 17]{el2011network}).  ln our joint sensing and communication-over-BC setup, the generalized feedback  thus improves both   sensing and  communication performances. Nevertheless, like in the P2P setup, the two performances only depend on each other through the common choice of the waveform. In other words, we show that the optimal state-sensing is independent of the employed BC-feedback-code and only depends on the chosen waveform but not on other details of the code construction. This allows to base  joint coding and sensing systems on known BC-feedback code constructions such as \cite{shayevitz2012capacity,venkataramanan2013achievable,gastpar2014coding}. Based on the scheme in \cite{shayevitz2012capacity}, we provide a general inner bound 
on the capacity-distortion region for general memoryless BCs with generalized feedback. We also provide a general outer bound by extending a known converse  technique that  reveals the outputs at one of the receivers to the other receiver. Inner and outer bounds coincide only in special cases. 
Completely characterizing the capacity-distortion tradeoff region for a general memoryless state-dependent BC seems extremely challenging since even the capacity region (without sensing) is unknown both in the case without and with feedback (see e.g., \cite{Marton-noFB,shayevitz2012capacity,venkataramanan2013achievable,gastpar2014coding, Gohari2020OuterBC}). 
Instead, we characterize the capacity-distortion  region for the special case of physically degraded BCs. 
Analogously to the single-user case, feedback  does not enlarge the capacity of physically degraded BCs and is useful only for sensing but not for communication. Through various numerical examples we illustrate the merit of optimal co-design schemes the basic and improved TS for physically degraded and general BCs.}
\subsection{Contributions}
The paper provides the following technical contributions: 
\begin{enumerate}
	\item It characterizes the capacity-distortion-cost tradeoff of state-dependent memoryless channels in Theorem \ref{th:tradeoff} and states the optimal estimator (a deterministic symbol-by-symbol  estimator) in Lemma \ref{1user:lemma:Shat}.   A modified Blahut-Arimoto algorithm \cite{arimoto1972algorithm,blahut1972computation} is proposed to calculate the tradeoff region numerically. To this end, the optimality of an alternating optimization approach is proved in Theorem \ref{theorem:modifiedBA}. 
	\item As a rather straightforward extension of Theorem~\ref{th:tradeoff}, we characterize the capacity-distortion tradeoff region of physically degraded state-dependent memoryless broadcast channels in Theorem \ref{Th:physically_BC}. 
	\item For general state-dependent BCs, we provide an outer bound on the capacity-distortion region in Theorem~\ref{outer1} and an inner bound in Proposition \ref{prp:inner}. The inner bound is based on \cite{shayevitz2012capacity} and can be achieved using a block-Markov strategy that combines  Marton coding with a lossy version of  Gray-Wyner coding with side-information.  
	\item Corollary \ref{cor1:notradeoff} (for single-user channels) and Proposition \ref{prp1:BC:notradeoff} (for broadcast channels) identify sufficient conditions for channels where no capacity-distortion tradeoff arises. 
	\item Many illustrative examples are provided to demonstrate the benefits of the optimal co-design scheme compared to the aforementioned baseline schemes. These include a binary channel with a multiplicative Bernoulli state in Corollary \ref{cor:ex2}, a real Gaussian channel, a binary BC with multiplicative Bernoulli states in Corollaries \ref{corollary:binaryBC1} and \ref{corollary:binaryBC2}, as well as the state-dependent Dueck BC in Corollaries \ref{corollary:DueckOuter} and \ref{corollary:DueckInner}. 
\end{enumerate}
\subsection{Organization}
The rest of this paper is organized as follows. The following Section \ref{section:singlRx} formulates the joint sensing and communication problem in a single-receiver channel and provides the corresponding capacity-distortion-cost tradeoff. Section \ref{section:physBC} extends the obtained results to two-user broadcast channels. Finally, Section \ref{section:conclusion} concludes the paper.  
%
\subsection{Notation}
We use calligraphic letters to denote sets,  e.g.,  $\Xc$. The sets of real and nonnegative real numbers, however, are denoted by $\mathbb{R}$ and $\mathbb{R}_0^+$.
Random variables are denoted by uppercase letters,  e.g.,  $X$,  and their realizations by lowercase letters,  e.g.,  $x$.
For vectors, we use boldface notation,  i.e.,  lower case boldface letters such as $\xv$ for deterministic vectors.
We use $[1:X]$ to denote the set $\{1, \cdots, X\}$. 
We use $X^n$ for the tuple of random variables $(X_1, \cdots, X_n)$. 
We abbreviate \emph{independent and identically distributed} as \emph{i.i.d.} {and \emph{probability mass function} as \emph{pmf}.} Logarithms are taken with respect to base $2$.
We use $\perp$ to indicate independence between random variables.
	\section{A Single Receiver}\label{section:singlRx} 
	\subsection{System Model}\label{section:model}
	\begin{figure}[t]
		\begin{center}	
			\includegraphics[scale=0.8]{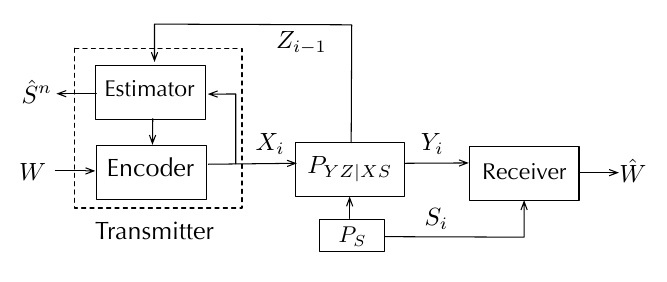}
			\caption{Joint sensing and communication model.}
			\label{fig:Model1User}
			\vspace{-2em}
		\end{center}
	\end{figure}
	
	Consider the point-to-point communication scenario depicted in Fig.~\ref{fig:Model1User}, where a transmitter wishes to communicate a message to a receiver over a memoryless state-dependent channel and simultaneously estimate the state from generalized feedback. In order to formulate the joint sensing and communication problem, we consider a state-dependent memoryless channel such that the channel output at the receiver $Y_{i}$ and the feedback signal $Z_i$ at a given time $i$ are generated according to its stationary channel law $P_{YZ|XS}(\cdot,\cdot|x_i,s_i)$ given the time-$i$ channel input $X_i=x_i$ and  state realization $S_i=s_i$, irrespective of the past inputs, outputs and state signals. 
	Except for some Gaussian examples, we assume that the channel states $S_i$,  inputs $X_i$,  outputs $Y_{i}$,  and feedback signals $Z_i$ take value in  finite sets $\Sc$, $\Xc$,  $\Yc$, and $\Zc$, respectively. The state sequence $\{S_i\}_{i\geq 1}$ is assumed i.i.d. according to a given state distribution $P_S(\cdot)$ and perfectly known to the receiver. 
	
	A $(2^{n\R},  n)$ code for the state-dependent memoryless channel (SDMC) consists of
	\begin{enumerate}
		\item a discrete message set $\Wc$ of size $|\Wc| \geq 2^{n\R}$;
		\item  a sequence of encoding functions $\phi_i\colon \Wc\times \Zc^{i-1} \to \Xc$,  for $i=1, 2, \ldots, n$;
		\item 
		a decoding function  $g\colon \Sc^n \times \Yc^n \to \Wc$; 
		\item
		a state estimator $h \colon \Xc^n \times \Zc^n \to \hat{\Sc}^n$,  where  $\hat{\Sc}$ denotes a given finite reconstruction  alphabet.
	\end{enumerate}

	For a given code,  the random message $W$ is uniformly distributed over the message set $\Wc$ and the inputs are obtained as $X_i=\phi_i(W,  Z^{i-1})$,  for $i=1, \ldots,  n$. The corresponding channel outputs $Y_{i}$ and $Z_i$ at time $i$ are obtained from the state $S_{i}$ and the input $X_i$ according to the  SDMC transition law $P_{YZ|SX}$. Let $\hat{S}^n:= (\hat{S}_{1}, \cdots, \hat{S}_{n} )=h(X^n,  Z^n)$ denote the state estimate at  the transmitter and $\hat{W}=g(S^n, Y^n)$  the decoded message at the receiver.

	The quality of the state estimates is measured by the expected average per-block distortion 
	\begin{equation}\label{def:distortion}
		\Delta^{(n)}:= \mathbb{E}[d(S^n,  \hat{S}^n)] 
		=\frac{1}{n} \sum_{i=1}^n \mathbb{E}[d(S_i,  \hat{S}_i)]
	\end{equation}
	where $d: \Sc\times \hat{\Sc} \mapsto \mathbb{R}_0^+$ is a given bounded \emph{distortion function}:
	\begin{equation}
		\max_{(s,  \hat{s})\in \Sc\times \hat{\Sc}} d(s,  \hat{s})<\infty.\label{eq:bounded_distortion}
	\end{equation}
	{In practical communication systems, we typically impose an expected cost constraint  on the channel inputs such as an average or peak power
		constraint. These cost constraints  can often be expressed as}
	\begin{equation}\label{eq:cost}
		\mathbb{E}[b(X^n)]  = \frac{1}{n} \sum_{i=1}^n \mathbb{E}[b(X_i)]
	\end{equation}
	{for some given} cost function{s} $b\colon \Xc \mapsto \mathbb{R}_0^+$.

	\begin{definition} \label{def:RDpairs}
		A rate-distortion-cost tuple $(\R,  \D, \B)$ is said achievable if there exists  a sequence (in $n$) of  $(2^{n\R},  n)$ codes that simultaneously satisfy
		\begin{subequations}\label{eq:asymptotics_user1}
			\begin{IEEEeqnarray}{rCl}
				\lim_{n\to \infty}	P_e^{(n)} &=&0,\label{eq:Pe} \label{eq:asymptotics:Pe}\\
				\varlimsup_{n\to \infty}\Delta^{(n)}& \leq& \D,  \label{eq:asymptotics:dist}\\
				\varlimsup_{n\rightarrow \infty} \frac{1}{n} \sum_{i=1}^n \mathbb{E}[b(X_i)] &\leq &\B \label{eq:cost_constraint}
			\end{IEEEeqnarray}				
		\end{subequations}
		for $	P_e^{(n)}:= \textnormal{Pr}\left( \hat{W} \neq W \right)$.
		
		
		The \emph{capacity-distortion-cost tradeoff $\C(\D, \B)$} is the largest rate $\R$ such that the rate-distortion-cost tuple $(\R,  \D, \B)$ is achievable.

	\end{definition}
	
	The main result of this section is an exact characterization of  $\C(\D,\B)$. 
	We begin by describing the optimal estimator $h$, which is independent of the choice of  encoding and decoding functions, and operates on a symbol-by-symbol basis, i.e., it computes estimate $\hat{S}_i$ only in function of $X_i$ and $Z_i$ but not  of the other inputs and feedback signals.
	
	\begin{lemma}\label{1user:lemma:Shat} 
		Define the function 
		\begin{IEEEeqnarray}{rCl}\label{eq:optimal_estimator}
			\hat{s}^*(x, z) &:= & {\rm arg}\min_{s'\in \hat{\Sc}} \sum_{s\in \Sc} P_{S|XZ}(s|x,z) d(s,  s'), \IEEEeqnarraynumspace
		\end{IEEEeqnarray}	
		where  ties can be broken arbitrarily and 
		\begin{equation}
			P_{S|XZ}(s|x,z)=\frac{ P_S(s) P_{Z|SX}(z|s,x)}{\sum_{\tilde{s} \in \Sc} P_S(\tilde{s}) P_{Z|SX}(z|\tilde{s},x)}.
		\end{equation}
		Irrespective of the choice of encoding and decoding functions, distortion $\Delta^{(n)}$ in \eqref{eq:asymptotics:dist} is minimized by the estimator
		\begin{equation}\label{eq:symbolwise}
			h^*(x^n,z^n):= ( \hat{s}^*(x_1,z_1), \hat{s}^*(x_2,z_2), \ldots, \hat{s}^*(x_n,z_n)).
		\end{equation}
			\textcolor{black}{Notice that the function $\hat{s}(\cdot, \cdot)$}	only depends on the SDMC channel law $P_{YZ|SX}$ and the state distribution $P_S$.
	\end{lemma}
	\begin{IEEEproof}
		See Appendix~\ref{app:lemma:Shat}.
	\end{IEEEproof}

	The optimal state estimator is thus a symbolwise estimator directly applied to the sequences observed at the transmitter. As we shall see later in this article,  this optimality of the symbolwise estimator   extends  also to the broadcast scenario. 

	Lemma \ref{1user:lemma:Shat} implies that we can focus without loss in optimality on a symbol-by-symbol deterministic estimator. 
	Based on \eqref{eq:optimal_estimator}, we define the estimation cost $c(x)$ {of the optimal estimator }as 
	\begin{equation} \label{eq:newcost}
		{c(x) : = \E{d(S,\hat{s}^*(X,Z))|X=x}.} 
	\end{equation}
	We are ready to present the capacity-distortion-cost tradeoff. 
	\subsection{Capacity-Distortion-Cost Tradeoff}\label{section:tradeoff}
	In order to characterize some useful properties of the capacity-distortion-cost function, we define the following sets:
	\begin{subequations}\label{eq:PxSet}
		\begin{align}  \label{eq:SetB}
			\Pc_{\B} &=\bigg\{ P_X ~\bigg|  \sum_{x\in \Xc} P_{X}(x) b(x) \leq \B\bigg\},
			\\
			\label{eq:SetD}
			\Pc_{\D} &=\bigg\{ P_X ~\bigg| \sum_{x\in \Xc} P_{X}(x) c(x) \leq \D \bigg\}.
		\end{align}
	\end{subequations}
	Then,  the minimum distortion for a given cost $\B$ is given by
	\begin{equation}\label{eq:Dmin}
		\D_{\min}(\B) := \min_{P_X \in \Pc_{\B}} \sum_{x\in \Xc} P_X(x) c(x). 
	\end{equation}

	\begin{definition}\label{def:Cinf}
		Define the \emph{information-theoretic tradeoff function} $\C_{\textnormal{inf}}: [\D_{\min}(\B),\infty)\times [0, \infty) \to \mathbb{R}_0^+$ as 
		\begin{equation}\label{eq:Cinf}
			\C_{\textnormal{inf}}(\D, \B) := \max_{P_X \in \Pc_{\D} \cap \Pc_{\B} } I(X;Y\mid S) 
		\end{equation}
		where
		$(X,  S,  Y,  Z)\sim$
		$P_{X} P_{S} P_{YZ|SX}$ and the maximum is over all $P_X$ satisfying both the distortion and cost constraints 
		\eqref{eq:SetD} and \eqref{eq:SetB}. 
	\end{definition}
	
	\begin{lemma}\label{lemma:properties} 
		Given a  SDMC $P_{YZ|SX}$ with state-distribution $P_S$, the 
		capacity-distortion-cost tradeoff function $\C_{\textnormal{inf}}(\D, \B)$ has the following properties.
		\begin{itemize}
			\item[i)] $\C_{\textnormal{inf}}(\D, \B)$ is non-decreasing and concave in $\D\geq \D_{\min}(\B)$ and $\B \geq 0$. 
			\item[ii)] 
			$\C_{\textnormal{inf}}(\D, \B)$ saturates at the channel capacity: 
			\begin{equation}
				\C_{\textnormal{inf}}(\D, \B) = \C_{\textnormal{NoEst}}(\B), \quad \forall \D \geq \D_{\max}(\B),
			\end{equation}
			where $\C_{\textnormal{NoEst}}(\B):=\max_{P_X \in \Pc_{\B}} I(X;Y|S)$ denotes the classical channel capacity of the SDMC for a given cost $\B$,  
			and $\D_{\max}(\B)$ denotes the corresponding distortion
			\begin{equation}
				\D_{\max}(\B) := \sum_{x\in \Xc} P_{X_{\max}}(x) c(x). 
			\end{equation}
			for $P_{X_{\max}} := \argmax_{P_X \in \Pc_{\B}} I(X;Y|S)$. 
		\end{itemize}
	\end{lemma}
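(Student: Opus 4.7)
\begin{IEEEproof}[Proof plan for Lemma \ref{lemma:properties}]
The strategy is to exploit the fact that both the distortion functional $P_X \mapsto \sum_x P_X(x) c(x)$ and the cost functional $P_X \mapsto \sum_x P_X(x) b(x)$ are \emph{linear} in $P_X$, while the conditional mutual information $P_X \mapsto I(X;Y|S)$ is \emph{concave} in $P_X$ for fixed channel law $P_{YZ|SX}$ and fixed state law $P_S$. Once these two facts are in hand, both parts of the lemma follow from elementary set-inclusion and convex-combination arguments.

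For part i), monotonicity is immediate: if $\D_1 \leq \D_2$ and $\B_1 \leq \B_2$, then $\Pc_{\D_1} \cap \Pc_{\B_1} \subseteq \Pc_{\D_2} \cap \Pc_{\B_2}$ by inspection of \eqref{eq:SetD} and \eqref{eq:SetB}, so the maximum in \eqref{eq:Cinf} cannot decrease. For concavity, I would fix $\lambda \in [0,1]$ and two feasible pairs $(\D_1,\B_1)$, $(\D_2,\B_2)$ with $\D_j \geq \D_{\min}(\B_j)$, and let $P_{X,j}$ denote a maximizer in \eqref{eq:Cinf} for $(\D_j, \B_j)$, $j = 1,2$. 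The mixture $P_X^\lambda := \lambda P_{X,1} + (1-\lambda) P_{X,2}$ satisfies, by linearity,
\begin{equation}
\sum_x P_X^\lambda(x) c(x) \;\leq\; \lambda \D_1 + (1-\lambda)\D_2, \qquad \sum_x P_X^\lambda(x) b(x) \;\leq\; \lambda \B_1 + (1-\lambda)\B_2,
\end{equation}
so $P_X^\lambda \in \Pc_{\lambda\D_1+(1-\lambda)\D_2} \cap \Pc_{\lambda\B_1+(1-\lambda)\B_2}$. Because $I(X;Y|S)=\sum_s P_S(s) I(X;Y|S=s)$ and each summand is concave in $P_X$ for fixed $P_{Y|X,S=s}$, evaluating \eqref{eq:Cinf} at $P_X^\lambda$ yields a value no smaller than $\lambda \C_{\textnormal{inf}}(\D_1,\B_1) + (1-\lambda)\C_{\textnormal{inf}}(\D_2,\B_2)$, which proves joint concavity on the domain $\{(\D,\B) : \D \geq \D_{\min}(\B),\, \B\geq 0\}$.

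For part ii), I would argue by two-sided inclusion. The upper bound $\C_{\textnormal{inf}}(\D,\B) \leq \C_{\textnormal{NoEst}}(\B)$ is immediate since $\Pc_{\D} \cap \Pc_{\B} \subseteq \Pc_{\B}$. For the reverse inequality when $\D \geq \D_{\max}(\B)$, observe that the distribution $P_{X_{\max}}$ achieving $\C_{\textnormal{NoEst}}(\B)$ satisfies $\sum_x P_{X_{\max}}(x) c(x) = \D_{\max}(\B) \leq \D$, so $P_{X_{\max}} \in \Pc_{\D} \cap \Pc_{\B}$ and is therefore feasible in \eqref{eq:Cinf}, giving $\C_{\textnormal{inf}}(\D,\B) \geq I(X;Y|S)\bigr|_{P_{X_{\max}}} = \C_{\textnormal{NoEst}}(\B)$.

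The only mild subtlety, and the place I would be careful, is ensuring nonemptiness of the feasible set throughout: the domain of $\C_{\textnormal{inf}}(\cdot, \B)$ is restricted to $\D \geq \D_{\min}(\B)$ precisely so that $\Pc_{\D}\cap\Pc_{\B} \neq \emptyset$, and it is this choice of domain that guarantees the maximum in \eqref{eq:Cinf} is attained (the feasible set is a closed convex polytope in the simplex of input distributions, and the objective is continuous). Everything else is a routine application of linearity of expectations and concavity of mutual information; I do not anticipate a genuine obstacle.
\end{IEEEproof}
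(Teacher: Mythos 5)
Your proof is correct and follows essentially the same approach as the paper: monotonicity from set inclusion, and concavity from mixing two optimal input distributions and using linearity of the distortion/cost constraints together with concavity of $I(X;Y\mid S)$ in $P_X$. You also spell out part~ii), which the paper treats as an immediate consequence and does not write out; your two-sided inclusion argument (the feasible set is a subset of $\Pc_{\B}$, and $P_{X_{\max}}$ becomes feasible once $\D\geq\D_{\max}(\B)$) is exactly the intended reasoning.
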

	
	\begin{IEEEproof} 
		The proof is a straightforward extension of \cite[Corollary 1]{zhang2011joint} to the case of two cost functions and the state dependent channel. The nondecreasing property follows immediately from the definition in \eqref{eq:Cinf} because
		we have $\Pc_{\D_1}\subseteq \Pc_{\D_2}$ and $\Pc_{\B_1}\subseteq \Pc_{\B_2}$ for any $\D_1\leq \D_2$ and $\B_1\leq \B_2$. 
		
		In order to verify the concavity of $\C_{\textnormal{inf}}(\D,\B)$ with respect to $(\D, \B)$, 
		we consider time-sharing between two input distributions, denoted by $P_X^{(1)}$ and $P_X^{(2)}$, that achieve $\C_{\textnormal{inf}}(\D_1, \B_1)$ and $\C_{\textnormal{inf}}(\D_2, \B_2)$, respectively. 
		To make the dependency of the mutual information with respect to the input distribution {more explicit}, we adapt the 
		following notation: for any pmf $P_X$ over the input alphabet $\Xc$, let {$\Ic(P_X, P_{Y|XS} \mid P_S):= I(X;Y\mid S)$} for  
		$(S,X,Y)\sim P_S P_X P_{Y|XS}$.  

			For any $\theta\in(0,1)$, we have: 
			\begin{IEEEeqnarray}{rCl}\label{eq:concavity}
			&&	\theta \C_{\textnormal{inf}}(\D_1, \B_1) + (1-\theta) \C_{\textnormal{inf}}(\D_2, \B_2)
				 \nonumber \\
			&&	\stackrel{(a)}=  \theta \Ic\left(P_X^{(1)}, P_{Y|XS}  \; \Big| \;  P_S\right) 
			 \nonumber \\
			&&\hspace{3cm}+(1-\theta)  \Ic\left(P_X^{(2)}, P_{Y|XS} \; \Big| \; P_S\right) \nonumber \\
				&& \stackrel{(b)}\leq \Ic\left(\theta P_X^{(1)} + (1-\theta) P_X^{(2)}, P_{Y|XS} \; \Big| \;  P_S\right)\nonumber  \\
					&& \stackrel{(c)}= \C_{\textnormal{inf}}\left(\theta \D_1 + (1-\theta) \D_2,\theta \B_1 + (1-\theta) \B_2 \right).
			\end{IEEEeqnarray}
			where (a) follows by definition, (b) follows from the concavity of the mutual information functional with respect to the input distribution, (c) follows by the linearity of the constraints and because for any $k=1,2$ the pmf  $P_{X}^{(k)}$ has expected cost no larger than $\B_k$  and expected distortion no larger than $\D_k$.  
			This establishes the concavity of $\C_{\textnormal{inf}}(\D, \B)$. 
	\end{IEEEproof}

	We now state the main result of this section.
	\begin{theorem} \label{th:tradeoff}
		The  capacity-distortion-cost tradeoff of a SDMC $P_{YZ|SX}$ with state-distribution $P_S$ is:
		\begin{equation}\label{Th1:user1:Tradeoff}
			\C(\D, \B) = \C_{\textnormal{inf}}(\D, \B),  \quad \D \geq \D_{\min}(\B), \;\; \B \geq 0. 
		\end{equation}
	\end{theorem}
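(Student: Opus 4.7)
The plan is to prove $\C(\D,\B)=\C_{\textnormal{inf}}(\D,\B)$ via a standard achievability/converse pair, exploiting the fact (which is the conceptual heart of the theorem) that feedback \emph{cannot increase} the communication rate here, because the state is i.i.d.\ and known at the receiver, while the feedback is fully exploited for sensing through the symbol-by-symbol estimator of Lemma \ref{1user:lemma:Shat}.

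For achievability, I would fix an arbitrary $P_X \in \Pc_\D\cap\Pc_\B$ and use plain i.i.d.\ random coding that ignores the feedback link for encoding: generate $2^{nR}$ codewords $X^n(w)\sim\prod_{i=1}^n P_X$, transmit $X^n(W)$, and let the receiver, who has $S^n$, perform joint-typicality decoding of $W$ based on $(S^n,Y^n)$. Standard arguments give $P_e^{(n)}\to 0$ provided $R<I(X;Y\mid S)$. At the transmitter, after the block, apply the per-symbol estimator $\hat S_i=\hat s^\ast(X_i,Z_i)$ of Lemma \ref{1user:lemma:Shat}. By the definition of $c(x)$ in \eqref{eq:newcost} and the law of large numbers,
\begin{equation}
\varlimsup_{n\to\infty}\frac{1}{n}\sum_{i=1}^n \E[d(S_i,\hat S_i)] = \E[c(X_1)] \le \D,
\end{equation}
and analogously the cost constraint $\E[b(X_1)]\le\B$ is met by the choice of $P_X$.

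For the converse, I would proceed by Fano plus the usual single-letterization, taking care to handle the dependence of $X_i$ on $(W,Z^{i-1})$. Starting from $nR=H(W)$, Fano gives $H(W\mid S^n,Y^n)\le n\epsilon_n$, and since $W\perp S^n$,
\begin{equation}
nR-n\epsilon_n \le I(W;Y^n\mid S^n)=\sum_{i=1}^n I(W;Y_i\mid S^n,Y^{i-1}).
\end{equation}
For each term, conditioning reduces entropy gives $H(Y_i\mid S^n,Y^{i-1})\le H(Y_i\mid S_i)$, while by memorylessness of the channel (so that $(W,Z^{i-1},S^{i-1},S_{i+1}^n,Y^{i-1})\to(X_i,S_i)\to Y_i$ forms a Markov chain) we have $H(Y_i\mid W,S^n,Y^{i-1})=H(Y_i\mid W,S^n,Y^{i-1},X_i)=H(Y_i\mid X_i,S_i)$. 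Therefore $I(W;Y_i\mid S^n,Y^{i-1})\le I(X_i;Y_i\mid S_i)$, and hence $R-\epsilon_n\le \tfrac{1}{n}\sum_i I(X_i;Y_i\mid S_i)$. Introducing a time-sharing $T\sim\mathrm{Unif}\{1,\dots,n\}$ independent of everything and $X=X_T$, $S=S_T$, $Y=Y_T$, the right side equals $I(X;Y\mid S,T)\le I(X;Y\mid S)$ once we absorb $T$ into $X$; more cleanly, Lemma \ref{lemma:properties}(i) (concavity of $\C_{\textnormal{inf}}$) combined with the fact that the averaged constraints $\tfrac{1}{n}\sum_i\E[b(X_i)]\le\B$ and $\tfrac{1}{n}\sum_i\E[c(X_i)]\le\D$ (the latter using Lemma \ref{1user:lemma:Shat} to lower-bound the achieved distortion by that of the optimal estimator) let us conclude $R-\epsilon_n\le\C_{\textnormal{inf}}(\D,\B)$. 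Letting $n\to\infty$ finishes the converse.

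The only delicate step is the second bound $I(W;Y_i\mid S^n,Y^{i-1})\le I(X_i;Y_i\mid S_i)$ in the presence of strictly causal feedback, since $X_i$ is not a function of $W$ alone but of $(W,Z^{i-1})$. The crucial observation is that the memoryless channel law $P_{YZ\mid XS}$ guarantees that $Y_i$ depends on the past only through $(X_i,S_i)$, which is exactly what makes feedback useless for rate here and justifies the single-letterization. Everything else (concavity, the estimator step, and the cardinality of the input alphabet being finite so no additional auxiliary needs to be bounded) is routine.
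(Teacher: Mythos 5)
Your proposal matches the paper's own approach almost exactly: achievability via i.i.d.\ random coding with a fixed $P_X\in\Pc_\D\cap\Pc_\B$ that ignores feedback for encoding, combined with the symbol-by-symbol optimal estimator of Lemma~\ref{1user:lemma:Shat}; converse via Fano, the Markov chain $(W,Y^{i-1},S^{i-1},S_{i+1}^n)\markov(X_i,S_i)\markov Y_i$, and the concavity of $\C_{\textnormal{inf}}$ from Lemma~\ref{lemma:properties}. One small slip: the first step in your chain $H(Y_i\mid W,S^n,Y^{i-1})=H(Y_i\mid W,S^n,Y^{i-1},X_i)$ is not an equality in general, since $X_i=\phi_i(W,Z^{i-1})$ and $Z^{i-1}$ need not be recoverable from $(W,S^n,Y^{i-1})$ when $Z\neq Y$; it should be ``$\geq$'' (conditioning reduces entropy), which is the direction you need, so the converse still goes through. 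Your distortion accounting in the achievability part, taking the expectation over the random-codebook ensemble directly rather than conditioning on correct decoding plus typicality as the paper does, is a valid and slightly cleaner shortcut, but you should still invoke the standard derandomization step to extract a single good deterministic codebook sequence.
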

	\begin{IEEEproof}
	See Appendix~\ref{app:main_result_P2P}.
	\end{IEEEproof}

	The proof of Theorem~\ref{th:tradeoff} is similar to the proof of the classic capacity-cost function \cite{MUIT_Kramer}, except that one also has to account for the sensing performance. Both in the converse proof and the achievability proof, this  can be accomplished  by evaluating the performance of the optimal (per-symbol) estimator $\hat{s}^*(\cdot,\cdot)$  in Lemma~\ref{1user:lemma:Shat}.  In particular,  a standard random coding argument can be used to prove achievability of Theorem~\ref{th:tradeoff}. 

On a different note,  capacity of a memoryless channel is known to be achieved  with i.i.d. inputs. Also because of the memoryless nature of the optimal estimator $h(\cdot, \cdot)$ in Lemma~\ref{1user:lemma:Shat}, this  observation extends to our  joint sensing and communication setup.

Appendix~\ref{app:BA} presents a Blahut-Arimoto type algorithm that can be used to solve the optimization problem \eqref{eq:Cinf}, which  characterizes the capacity-distortion-cost tradeoff $\C_{\textnormal{inf}}(\D, \B)$. It is used to evaluate the capacity-distortion-cost tradeoff for   the Gaussian example in Subsection~\ref{sec:Gaussian} ahead.

	Combining Lemma~\ref{lemma:properties} and Theorem~\ref{th:tradeoff}, we can conclude that the rate-distortion tradeoff function $\C(\D,\B)$ is non-decreasing and concave in $\D\geq \D_{\min}$ and $\B\geq0$, and {for any $\B\geq 0$  it saturates at the channel capacity $\C_{\textnormal{NoEst}}(\B)$}.  
	{For many channels,  given  $\B\geq 0$, the tradeoff $\C(\D, \B)$ is strictly increasing in $\D$ until it reaches  $\C_{\textnormal{NoEst}}(\B)$. However, for SDMBCs and costs $\B\geq 0$ where the capacity-achieving input distribution $P_{X_{\max}}:= \argmax_{P_X \in \mathcal{P}_{\B}} I(X;Y\mid S)$ also achieves minimum distortion $\D_{\min}(\B)$ in \eqref{eq:Dmin},  the capacity-distortion tradeoff is constant  $\C(\D,\B)=\C_{\textnormal{NoEst}}(\B)$, irrespective of the allowed distortion $\D$. This is in particular the case, when the expected distortion $\E{d(S, \hat{s}^*(X,Z))}$ does not depend on the input distribution $P_X$.  The following corollary identifies a set of SDMCs $P_{YZ|SX}$ and state distributions $P_S$ where this holds for all costs $\B\geq 0$.}
	
	\begin{corollary}\label{cor1:notradeoff}
		Assume that  there exists a function $\psi(\cdot)$ with domain $\Xc\times \Zc$ so that  irrespective of the input distribution $P_X$ the following two 
		conditions hold:
		\begin{IEEEeqnarray}{rCl}
			&(S, \psi (X, Z)) \perp  X, \label{1user:cond1}\\
			&S \markov \psi (X,  Z)\markov (X,  Z),  \label{1user:cond2}
		\end{IEEEeqnarray} 
		for $(S,X,Z)\sim P_{S} P_{X} P_{Z|SX}$.
		In this case,  {for any given $\B$,} the rate-distortion tradeoff function $\C(\D,\B)$ is constant over $\D\geq \D_{\min}$ and equal to the channel capacity of the SDMC:
		\begin{equation}
			\C(\D, \B)= \C_{\textnormal{NoEst}}(\B), \qquad \forall \D\geq \D_{\min}(\B), \;\; \B\geq 0.
		\end{equation}
		\begin{IEEEproof}
			See Appendix~\ref{app:notradeoff}.
		\end{IEEEproof}	
	\end{corollary}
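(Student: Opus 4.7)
The plan is to show that under conditions \eqref{1user:cond1}--\eqref{1user:cond2}, the expected distortion $\E{d(S,\hat{s}^*(X,Z))}$ does not depend on the input pmf $P_X$. Once this is established, the optimization in \eqref{eq:Cinf} for any $\D \geq \D_{\min}(\B)$ reduces to $\max_{P_X \in \Pc_{\B}} I(X;Y|S) = \C_{\textnormal{NoEst}}(\B)$, since the distortion constraint becomes automatically satisfied by every $P_X$.

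The first step is to exploit condition \eqref{1user:cond2} to rewrite the optimal symbol-wise estimator. Since $S \markov \psi(X,Z) \markov (X,Z)$, we have
\begin{equation}
P_{S|XZ}(s|x,z) = P_{S|\psi(X,Z)}(s|\psi(x,z)),
\end{equation}
so plugging this into \eqref{eq:optimal_estimator} shows that the minimizer depends on $(x,z)$ only through $\psi(x,z)$. Hence there exists $\tilde{s}^*\colon \psi(\Xc\times\Zc)\to \hat{\Sc}$ with $\hat{s}^*(x,z) = \tilde{s}^*(\psi(x,z))$.

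The second step is to use condition \eqref{1user:cond1} to argue that the joint pmf of $(S,\psi(X,Z))$ does not depend on $P_X$. Because the conditional pmf
\begin{equation}
P_{S,\psi(X,Z)|X}(s,\psi|x) = \sum_{z\in\Zc} P_S(s) P_{Z|SX}(z|s,x)\,\ind{\psi(x,z)=\psi}
\end{equation}
is a function of the channel law $P_{Z|SX}$ and of $\psi$ alone, the requirement that $(S,\psi(X,Z))\perp X$ under every $P_X$ forces this conditional pmf to be constant in $x$; call this common pmf $Q_{S,\psi}$. Averaging over any $P_X$ then gives $P_{S,\psi(X,Z)} = Q_{S,\psi}$, independent of $P_X$. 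In particular, the conditional pmf $P_{S|\psi(X,Z)}$ entering $\tilde{s}^*$ is also independent of $P_X$, so $\tilde{s}^*$ is a fixed function.

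Combining the two steps, the expected distortion
\begin{equation}
\E{d(S,\hat{s}^*(X,Z))} = \sum_{s,\psi} Q_{S,\psi}(s,\psi)\, d\bigl(s,\tilde{s}^*(\psi)\bigr)
\end{equation}
is a constant that does not depend on $P_X$; call it $\D_0$. Consequently every $P_X$ achieves the same estimation cost $\sum_x P_X(x) c(x) = \D_0$, which implies $\D_{\min}(\B) = \D_{\max}(\B) = \D_0$ for every $\B\geq 0$ and $\Pc_{\D}\cap \Pc_{\B} = \Pc_{\B}$ for every $\D \geq \D_{\min}(\B)$. Invoking Theorem~\ref{th:tradeoff} and Lemma~\ref{lemma:properties}(ii) yields $\C(\D,\B) = \C_{\textnormal{inf}}(\D,\B) = \C_{\textnormal{NoEst}}(\B)$ for all admissible $(\D,\B)$. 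The main obstacle is conceptual rather than computational: one must read condition \eqref{1user:cond1} as a joint requirement over all $P_X$, which is what pins the conditional law $P_{S,\psi(X,Z)|X=x}$ to be independent of $x$; the rest of the argument is a direct consequence of the sufficient-statistic structure imposed by \eqref{1user:cond2}.
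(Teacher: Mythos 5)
Your proof is correct and follows essentially the same route as the paper's Appendix~C: use the Markov chain \eqref{1user:cond2} to reduce the optimal estimator to a function of $T=\psi(X,Z)$ alone, and use the independence \eqref{1user:cond1} to conclude that the resulting expected distortion $\sum_t P_T(t)\min_{s'}\sum_s P_{S|T}(s|t)d(s,s')$ is invariant to $P_X$, so the distortion constraint is inactive in \eqref{eq:Cinf}. Your second step, which explicitly observes that $P_{S,\psi(X,Z)|X=x}$ is channel-determined and hence pinned to a single law $Q_{S,\psi}$ by the independence requirement, is a slightly more careful articulation of a point the paper states only implicitly, but it is the same argument.
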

	
	The following state-dependent erasure channel satisfies the conditions in above corollary. Let $S$ be Bernoulli-$p$ and $Y$ equal to the erasure symbol ``?" when $S=1$ and $Y=X$ when $S=0$. Moreover, assume perfect output feedback, i.e.,  $Y=Z$. For the choice  $\psi(X,Z)=\mathbbm{1}\{ Z= ``?"\}=S$ both Markov chains in Corollary~\ref{cor1:notradeoff} are trivially satisfied because $S$ and $X$ are independent.

		\textcolor{black}{	\begin{remark}\label{rem:CSI}
	Theorem~\ref{th:tradeoff} is easily adapted to the more general case of imperfect channel state information (CSI), i.e., to a scenario where the receiver does not observe the state-sequence $S^n$  but a related sequence $S_R^n$, where $(S^n, S_R^n)$ are i.i.d. according to an arbitrary distribution $P_{SS_R}$. In this case, Theorem~\ref{th:tradeoff} remains valid if in Definition \eqref{eq:Cinf} the state $S$  is replaced by  $S_R$, i.e., 
			\begin{IEEEeqnarray}{rCl}\label{Th1:user1:Tradeoff_imperf}
				\C^{\textnormal{imp}}(\D, \B)& =& \max_{P_X \in \Pc_{\D} \cap \Pc_{\B} } I(X;Y\mid S_R) , 
				\\
				\nonumber  && \hspace{2cm}
				\qquad \D \geq \D_{\min}(\B), \;\; \B \geq 0,
			\end{IEEEeqnarray}
			where $(X,S_R,Y,Z)\sim P_XP_{SS_R}P_{YZ\mid SS_RX}$ and	the definitions of the sets 
			$\Pc_{\B}$ and 	$\Pc_{\D}$  are kept as in \eqref{eq:SetB} and \eqref{eq:SetD}, same as the definition of the function $c(x)$ in \eqref{eq:newcost}.	
			\\
			Notice that the symbolwise estimator in \eqref{eq:symbolwise} remains optimal also in this related setup.			
		\end{remark}
	\begin{proof}
	See Appendix~\ref{app:imperfectCSIR}.
\end{proof}	}

	\newcommand{\capa}{\C_{\textnormal{NoEst}}}
	\subsection{Examples}
	Before presenting our examples, we present two baseline schemes.
	\subsubsection{Baseline Schemes} 
	
	We consider two baseline schemes that time share (TS) between two operating modes. 
	{The first baseline scheme, termed \emph{Basic TS scheme},  is unable to simultaneously perform  the sensing and communication tasks and  splits its  resources (time or bandwidth)
		between the following two modes: 
		\begin{itemize}
			\item \emph{\underline{Sensing mode without communication} {(achieves rate-distortion pair $(0, \D_{\min}(\B))$)}}\\[0.15em]
			The input pmf $P_X$ is chosen to minimize the  distortion:
			\begin{equation}\label{eq:P_Xmin}
				P_{X_{\min}}:= \argmin_{P_X\in \Pc_{\Bc} } \sum_x P_X(x) c(x),
			\end{equation}
			and thus the minimum distortion $\D_{\min}(\B)$ defined in \eqref{eq:Dmin} is achieved.
			Due to the lack of communication capability, the communication rate is zero. \\[0.01em]
			\item \emph{\underline{Communication mode without sensing} (achieves $(\capa(\B),  \D_{\textnormal{trivial}}(\B))$)}\\[0.15em]
			The input pmf $P_X$ is chosen to maximize  the  rate:   
			\begin{equation} \label{eq:Pmax}
				P_{X_{\max}}=\argmax_{P_X\in \Pc_{\Bc} } I(X;Y \mid S), 
			\end{equation}
			and this mode thus communicates at a rate equal to the channel capacity $\capa(\B)$. 
			Due to the lack of proper sensing capabilities, the estimator is set to  a constant value regardless of the feedback and the input signals. The mode thus
			achieves  distortion 
			\begin{equation}\label{eq:ConstEstimator}
				\D_{\textnormal{trivial}}(\B) :=\min_{s'\in \hat{\Sc}} \sum_{s\in \Sc} P_{S}(s) d(s,  s'). 
			\end{equation}
		\end{itemize}
		The second  baseline scheme is called \emph{Improved TS scheme} and  can simultaneously perform the communication and sensing tasks. This scheme time-shares between the following modes.    
		\begin{itemize}
			\item \emph{\underline{Sensing mode with communication} (achieves $(\R_{\min}(\B), \D_{\min}(\B))$)}\\
			The input pmf $P_X$ is choosen according to \eqref{eq:P_Xmin} to achieve the minimum distortion. The 
			chosen pmf $P_{X_{\min}}$ can achieve the following communication rate: 
			\begin{equation}
				\R_{\min}:=  I(X_{\min};Y\mid S), \qquad \textnormal{for } X_{\min}\sim P_{X_{\min}}.
			\end{equation}
			\item \emph{\underline{Communication mode with sensing} (achieves $(\capa (\B),\D_{\max}(\B))$)}\\
			The input pmf $P_{X_{\max}}$ is chosen as in \eqref{eq:Pmax} to maximize the communication rate. The mode thus communicates at the capacity $\capa(\B)$ of the channel. 
			Sensing is performed by means of the optimal estimator in  \eqref{eq:optimal_estimator}. The mode thus achieves distortion 
			\begin{equation}
				\D_{\max} := \sum_{x\in \Xc} P_{X_{\max}} (x) c(x), \quad \textnormal{for } X_{\max} \sim P_{X_{\max}}.
			\end{equation}
		\end{itemize}
	}
	{It is worth noticing that for any cost $\B\geq 0$, the two operating points of the two modes in the Improved TS scheme, $(\R_{\min}(\B), \D_{\min}(\B))$ and $(\capa (\B),\D_{\max}(\B))$,  also lie  on the capacity-distortion-cost tradeoff curve $\C(\D,\B)$ presented in Theorem~\ref{th:tradeoff}. These two  points are thus also operating points of any optimal co-design scheme. As we will see at hand of the following examples, all other operating points of the Improved TS scheme  are  typically suboptimal compared to an optimal co-design scheme.}

	\subsubsection{Example 1: Binary Channel with Multiplicative Bernoulli State}
	Consider a channel $ Y = S X$ with binary alphabets $\Xc=\Sc=\Yc=\{0,1\}$ and 
	where the state $S$ is Bernoulli-$q$, for $q\in (0,1)$. 
	We assume  perfect output feedback to the transmitter $Y=Z$, and consider the Hamming distortion measure $d(s,  \hat{s}) = s \oplus \hat{s}$.  {No cost constraint is imposed.}

	The following corollary specializes  Theorem~\ref{th:tradeoff} to this example.  \begin{corollary}\label{cor:ex2}
		The capacity-distortion tradeoff of a binary channel with  multiplicative Bernoulli state is given by
		\begin{equation}
			\C(\D) = q  H_{\textnormal{b}}\left(\frac{\D}{\min\{q, 1-q\}}\right),
		\end{equation}
		
		where $H_{\textnormal{b}}(p)$ denotes the binary entropy function.  In other words, the curve $\C(\D)$ is parameterized as 
		\begin{equation}
			\{( \C=q H_{\textnormal{b}}(p), \; \D= p \min\{q, 1-q\}) \colon p\in[0,1/2]\}.
		\end{equation}
	\end{corollary}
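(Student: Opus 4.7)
The plan is to specialize Theorem~\ref{th:tradeoff} directly: compute the conditional mutual information $I(X;Y\mid S)$, derive the per-letter sensing cost $c(x)$ from the optimal symbol-by-symbol estimator of Lemma~\ref{1user:lemma:Shat}, and solve the resulting scalar concave program. No cost constraint is imposed, so only the distortion constraint is active.

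First, since $S$ is independent of $X$ and $Y=SX$, one has $H(Y\mid X,S)=0$, $H(Y\mid S=0)=0$, and $H(Y\mid S=1)=H(X)$. Therefore
\begin{equation*}
I(X;Y\mid S) \;=\; q\,H(X) \;=\; q\,H_{\textnormal{b}}(P_X(1)).
\end{equation*}
Second, since $Z=Y=SX$, when $X=1$ the feedback equals $S$, so the estimator $\hat{s}^*(1,z)=z$ of \eqref{eq:optimal_estimator} is exact and $c(1)=0$. When $X=0$, $Z=0$ almost surely and, because $S\perp X$, $P_{S\mid XZ}(\cdot\mid 0,0)=P_S$; the Hamming-optimal estimate is the more likely state, yielding $c(0)=\min\{q,1-q\}$.

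Third, writing $\alpha:=P_X(1)$, the distortion constraint in \eqref{eq:SetD} becomes
\begin{equation*}
(1-\alpha)\min\{q,1-q\}\;\leq\;\D
\quad\Longleftrightarrow\quad
\alpha\;\geq\;1-\tfrac{\D}{\min\{q,1-q\}}.
\end{equation*}
Maximizing the concave symmetric objective $qH_{\textnormal{b}}(\alpha)$ over $\alpha\in[0,1]$ subject to this lower bound gives the optimum $\alpha^\star=\max\bigl\{\tfrac{1}{2},\,1-\D/\min\{q,1-q\}\bigr\}$. For $\D\in\bigl[0,\tfrac{1}{2}\min\{q,1-q\}\bigr]$, setting $p:=1-\alpha^\star=\D/\min\{q,1-q\}\in[0,1/2]$ and using $H_{\textnormal{b}}(\alpha^\star)=H_{\textnormal{b}}(p)$ produces the stated parametric curve $(\C,\D)=\bigl(qH_{\textnormal{b}}(p),\,p\min\{q,1-q\}\bigr)$; for larger $\D$ the rate saturates at $q$, in agreement with Lemma~\ref{lemma:properties}(ii).

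No real obstacle is expected: once $c(\cdot)$ is identified the problem reduces to a one-dimensional concave maximization. The only subtle point is justifying $P_{S\mid X=0,Z=0}=P_S$, which follows from $S\perp X$ and the fact that $Z$ is deterministic (hence uninformative) when $X=0$.
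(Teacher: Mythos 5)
Your proof is correct and takes essentially the same route as the paper: evaluate $I(X;Y\mid S)=qH(X)$, determine the per-symbol sensing costs $c(1)=0$ and $c(0)=\min\{q,1-q\}$ from Lemma~\ref{1user:lemma:Shat}, and read off the tradeoff. The only difference is cosmetic: you parametrize by $\alpha=P_X(1)$ and explicitly solve the one-dimensional concave maximization over the constraint set, whereas the paper parametrizes by $p=P_X(0)$ and leaves the (immediate) final optimization implicit; your explicit verification that $\alpha^\star=\max\{1/2,\,1-\D/\min\{q,1-q\}\}$ and the saturation remark for large $\D$ are sound and slightly more rigorous, but do not represent a different proof strategy.
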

	
	\begin{IEEEproof}
		Since $Y$ is deterministic given $(S,  X)$, and  it equals $0$ whenever $S=0$, we have:
		\begin{IEEEeqnarray}{rCl}
			I(X;Y \mid S) 
			&=&P_S(0)  H(Y \mid S=0) 
			\nonumber\\&&\qquad+ P_S(1)  H(Y \mid S=1) \IEEEeqnarraynumspace \nonumber\\
			&=&P_S(1)  H(X).
		\end{IEEEeqnarray}
		Setting $p:=P_{X}(0)$, 
		we obtain 
		\begin{equation} \label{eq:Ip}
			I(X;Y \mid S)= q H_{\textnormal{b}}(p).
		\end{equation} 
		
		To calculate the distortion,  we notice that the optimal estimator $\hat{s}^* (\cdot, \cdot)$ in Lemma \ref{1user:lemma:Shat} sets 
		\begin{equation}
			\hat{s}^*(x,z)= \begin{cases} z, & \textnormal{if } x=1 \\
				\argmax_{s\in\{0,1\}}  P_{S}(s), &\textnormal{if }  x=0.
			\end{cases}
		\end{equation}
		In fact, whenever $x=1$ the transmitter acquires full state knowledge because $z=y=s$. In this case $c(x=1)=0$. For $x=0$, the transmitter does not receive any useful information about the state and hence uses the best constant estimator, irrespective of the feedback $z$. In this case, 
		\begin{IEEEeqnarray}{rCl}\label{eq:SingleEstCost}
			c(x=0)&=& \E{ d\Big(S,  \argmax_{s\in\{0,1\}}  P_{S}(s)\Big)\Big|X=0} \nonumber
			\\&=& \min_{s\in\{0,1\}} P_S(s) = \min\{q,1-q\}, 
		\end{IEEEeqnarray}
		where we used the independence of $S$ and $X$. 
		The expected distortion of the optimal estimator thus evaluates to:
		\begin{align}\label{eq:SingleD}
			\D  & = \sum_{x} P_X(x) c(x) = P_X(0) c(0) = p \min\{q, 1-q\}. 
		\end{align}
	\end{IEEEproof}

	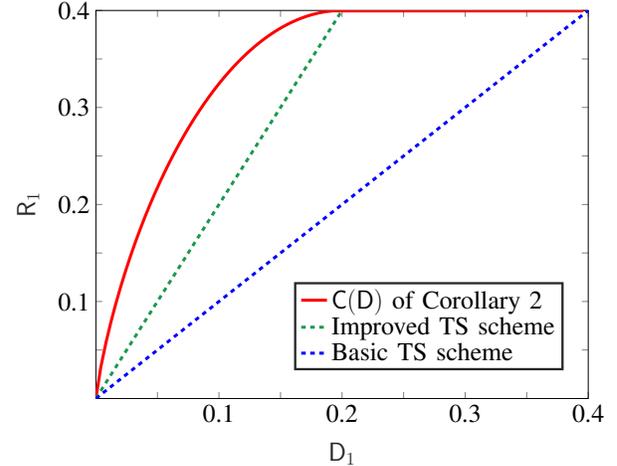
\begin{figure}[!h]
		\centering
		\hspace{-1cm}

		\begin{tikzpicture}[scale=0.57]

			\begin{axis}[%
				width=4.521in,
				height=3.566in,
				at={(0.18in,0.481in)},
				scale only axis,
				xmin=0,
				xmax=0.4,
				xlabel style={at={(0.5,-0.04)},font=\color{white!15!black}},
				xlabel={\LARGE$\D_1$},
				ymin=0,
				ymax=0.4,
				ylabel style={at={(-0.03,0.5)},font=\color{white!15!black}},
				ylabel={\LARGE$\R_1$},
				legend style={legend cell align=left,  at={(0.95,0.3)}, align=left, draw=white!15!black, line width=1.5pt},
				%
				%
				x tick label style={font=\LARGE},
				y tick label style={font=\LARGE},
				xticklabels={ , ,  , 0.1, ,0.2, ,0.3, ,0.4,,},
				yticklabels={ , ,  , 0.1, ,0.2, ,0.3, ,0.4,,},
				]
				
				\addplot [color=red, line width=2pt]
				table[row sep=crcr]{%
					0	0\\
					0.004	0.0323172543583645\\
					0.00800000000000001	0.0565762170167283\\
					0.012	0.0777567431326305\\
					0.016	0.096916875632966\\
					0.02	0.114558782846383\\
					0.024	0.130977967661791\\
					0.028	0.146369460360089\\
					0.032	0.160871676080909\\
					0.036	0.174587926825641\\
					0.04	0.187598237435712\\
					0.044	0.199966383265811\\
					0.048	0.211744346114946\\
					0.052	0.222975274011196\\
					0.056	0.233695524657142\\
					0.06	0.24393612188656\\
					0.064	0.253723821856226\\
					0.068	0.263081911497688\\
					0.072	0.272030818291312\\
					0.076	0.280588583953559\\
					0.08	0.288771237954945\\
					0.084	0.296593095972509\\
					0.088	0.304067001184786\\
					0.092	0.311204521418615\\
					0.096	0.318016111753809\\
					0.1	0.324511249783653\\
					0.104	0.330698548997047\\
					0.108	0.33658585448327\\
					0.112	0.342180324224052\\
					0.116	0.347488498535762\\
					0.12	0.352516359692277\\
					0.124	0.357269383351143\\
					0.128	0.361752583089798\\
					0.132	0.365970549111891\\
					0.136	0.369927481989212\\
					0.14	0.373627222150197\\
					0.144	0.377073275702197\\
					0.148	0.380268837074826\\
					0.152	0.38321680889052\\
					0.156	0.385919819402035\\
					0.16	0.388380237781867\\
					0.164	0.39060018750313\\
					0.168	0.392581558013461\\
					0.172	0.394326014871568\\
					0.176	0.395835008488822\\
					0.18	0.397109781595123\\
					0.184	0.39815137552809\\
					0.188	0.398960635427096\\
					0.192	0.399538214398081\\
					0.196	0.399538214398081\\
					0.2	0.399538214398081\\
					0.204	0.399538214398081\\
					0.208	0.399538214398081\\
					0.212	0.399538214398081\\
					0.216	0.399538214398081\\
					0.22	0.399538214398081\\
					0.224	0.399538214398081\\
					0.228	0.399538214398081\\
					0.232	0.399538214398081\\
					0.236	0.399538214398081\\
					0.24	0.399538214398081\\
					0.244	0.399538214398081\\
					0.248	0.399538214398081\\
					0.252	0.399538214398081\\
					0.256	0.399538214398081\\
					0.26	0.399538214398081\\
					0.264	0.399538214398081\\
					0.268	0.399538214398081\\
					0.272	0.399538214398081\\
					0.276	0.399538214398081\\
					0.28	0.399538214398081\\
					0.284	0.399538214398081\\
					0.288	0.399538214398081\\
					0.292	0.399538214398081\\
					0.296	0.399538214398081\\
					0.3	0.399538214398081\\
					0.304	0.399538214398081\\
					0.308	0.399538214398081\\
					0.312	0.399538214398081\\
					0.316	0.399538214398081\\
					0.32	0.399538214398081\\
					0.324	0.399538214398081\\
					0.328	0.399538214398081\\
					0.332	0.399538214398081\\
					0.336	0.399538214398081\\
					0.34	0.399538214398081\\
					0.344	0.399538214398081\\
					0.348	0.399538214398081\\
					0.352	0.399538214398081\\
					0.356	0.399538214398081\\
					0.36	0.399538214398081\\
					0.364	0.399538214398081\\
					0.368	0.399538214398081\\
					0.372	0.399538214398081\\
					0.376	0.399538214398081\\
					0.38	0.399538214398081\\
					0.384	0.399538214398081\\
					0.388	0.399538214398081\\
					0.392	0.399538214398081\\
					0.396	0.399538214398081\\
				};
				\addlegendentry{\LARGE  $\C(\D)$ of Corollary~\ref{cor:ex2}}
				
				\addplot [color=mycolor1, dashed,   line width=2.0pt]
				table[row sep=crcr]{%
					0	0\\
					0.2	0.4\\
				};
				\addlegendentry{\LARGE Improved TS scheme}

				\addplot [color=blue, dashed,  line width=2.0pt]
				table[row sep=crcr]{%
					0	0\\
					0.4	0.4\\
				};
				\addlegendentry{\LARGE Basic TS scheme}

			\end{axis}

			\begin{axis}[%
				width=5.833in,
				height=4.375in,
				at={(0in,0in)},
				scale only axis,
				xmin=0,
				xmax=1,
				ymin=0,
				ymax=1,
				axis line style={draw=none},
				ticks=none,
				axis x line*=bottom,
				axis y line*=left
				]
			\end{axis}
		\end{tikzpicture}%
		\vspace{-0.2cm}
		\caption{Capacity-distortion tradeoff of  the binary channel with multiplicative Bernoulli state of parameter
			$q=0.4$.}
		\label{fig:1user:BinaryExample}
	\end{figure}
	 
	{
		The  capacity-distortion tradeoff of Corollary~\ref{cor:ex2} is  illustrated in Fig.~\ref{fig:1user:BinaryExample} for state parameter $q=0.4$. The figure also compares the performances of the two baseline TS schemes. We observe a significant gain of an optimal co-design  scheme over the two TS baseline schemes. We conclude this example with a derivation of the parameters of the TS schemes.
		
		The capacity-achieving input distribution is easily found as $P_{X_{\max}}(0)=P_{X_{\max}}(1)=1/2$, and by \eqref{eq:Ip} and \eqref{eq:SingleD} we find  $\capa=q$ and $\D_{\max}=\min\{q,1-q\}/2$.
		Minimum distortion $\D_{\min}=0$ is achieved by always sending $X=1$, i.e., $P_{X_{\min}}(1)=1$ and $P_{X_{\min}}(0)=0$,  in which case $D_{\min}=0$ and  $R_{\min}=0$, see also \eqref{eq:Ip} and \eqref{eq:SingleD}. The Improved TS scheme thus   achieves all pairs on the line connecting the two points $(0,0)$ with $(q,\min\{q, 1-q\}/2 )$. To determine the performance of the basic TS scheme, we recall that the 
		best constant estimator (that does not consider the feedback) is $\hat{s}_{\rm const} = \argmax_{s\in\{0,1\}}  P_{S}(s)$ , which allows to conclude that  $\D_{\textnormal{trivial}}=\min\{q, 1-q\}$. The basic TS scheme thus achieves all rate-distortion pairs on the line connecting the points $(0,0)$ and $(q, \min\{q, 1-q\})$.}

	\subsubsection{Example 2: Real Gaussian Channel with Rayleigh Fading}\label{sec:Gaussian}
	\begin{figure}[h]
		\begin{center}	
			%
			%
			\definecolor{mycolor1}{rgb}{1.00000,0.00000,1.00000}%
			\begin{tikzpicture}[scale=0.48]
				
				\begin{axis}[%
					width=6.028in,
					height=4.754in,
					at={(1.011in,0.642in)},
					scale only axis,
					xmin=0.12,
					xmax=1,
					xlabel style={at={(0.5,-0.04)},font=\color{white!15!black}},
					xlabel={\huge Distortion},
					ymin=0,
					ymax=1.27,
					ylabel style={at={(-0.04,0.475)},font=\color{white!15!black}},
					ylabel={\huge{Capacity}},
					axis background/.style={fill=white},
					legend style={at={(0.634,0.229)}, anchor=south west, legend cell align=left, align=left, draw=white!15!black},
					x tick label style={font=\LARGE},
					y tick label style={font=\LARGE},
					xticklabels={, ,0.15, ,0.25, ,0.35, ,0.45, ,0.55, ,0.65, ,0.75, ,0.85, ,0.95},
						yticklabels={, , ,0.2, ,0.4, ,0.6, ,0.8, ,1, ,1.2, ,},
					]
					\addplot [color=blue, dashed, line width=3.0pt]
					table[row sep=crcr]{%
						0.1594	0\\
						1	1.213\\
					};
					\addlegendentry{\huge Basic TS}
					
					
					\addplot [color=red, line width=3.0pt, forget plot]
					table[row sep=crcr]{%
						0.367	1.213\\
						1	1.213\\
					};
					
					\addplot [color=green!60!black, dashed, line width=3.0pt]
					table[row sep=crcr]{%
						0.1594	0.7325\\
						0.367	1.213\\
					};
					\addlegendentry{\huge Improved TS}
					
					\addplot [color=red, line width=3.0pt, only marks, mark=asterisk, mark options={solid, red}, forget plot]
					table[row sep=crcr]{%
						0.1594	0.7325\\
						0.1599	0.7326\\
						0.1604	0.7347\\
						0.1613	0.7418\\
						0.1625	0.7616\\
						0.1632	0.7627\\
						0.1633	0.7696\\
						0.1646	0.7756\\
						0.1656	0.777\\
						0.1661	0.7793\\
						0.1664	0.7874\\
						0.1676	0.79\\
						0.1713	0.794\\
						%
						0.172	0.824\\
						%
						%
						%
						%
						0.18	0.8626\\
						0.1872	0.9141\\
						0.2007	0.9805\\
						0.2237	1.0556\\
						0.2581	1.133\\
						0.2912	1.1759\\
						0.3017	1.1862\\
						0.3071	1.191\\
						0.3128	1.1948\\
						0.3246	1.2019\\
						0.3368	1.2073\\
						0.3494	1.211\\
						0.3624	1.2128\\
						0.3624	1.2128\\
						0.367	1.213\\
					};
					
					\addplot [color=red, line width=3.0pt]
					table[row sep=crcr]{%
						0.367	1.213\\
						0.3624	1.2128\\
						0.3494	1.211\\
						0.3368	1.2073\\
						0.3246	1.2019\\
						0.3128	1.1948\\
						0.3017	1.1862\\
						0.2912	1.1759\\
						0.2815	1.1639\\
						0.2718	1.1515\\
						0.2581	1.133\\
						0.2237	1.0556\\
						0.2007	0.9805\\
						0.1872	0.9141\\
						0.1633	0.7696\\
						0.1594	0.7325\\
					};
					\addlegendentry{\huge Co-design}
					
				\end{axis}
				
				\begin{axis}[%
					width=7.778in,
					height=5.833in,
					at={(0in,0in)},
					scale only axis,
					xmin=0,
					xmax=1,
					ymin=0,
					ymax=1,
					axis line style={draw=none},
					ticks=none,
					axis x line*=bottom,
					axis y line*=left,
					x tick label style={font=\bfseries\boldmath},
					y tick label style={font=\bfseries\boldmath},
					z tick label style={font=\bfseries\boldmath},
					]
				\end{axis}
			\end{tikzpicture}%
			\caption{Capacity-distortion tradeoff of fading AWGN channel $\B=10$ dB and $\sigma^2_{\textnormal{fb}}=1$.}
			\label{fig:Gaussian}
		\end{center}
	\end{figure}
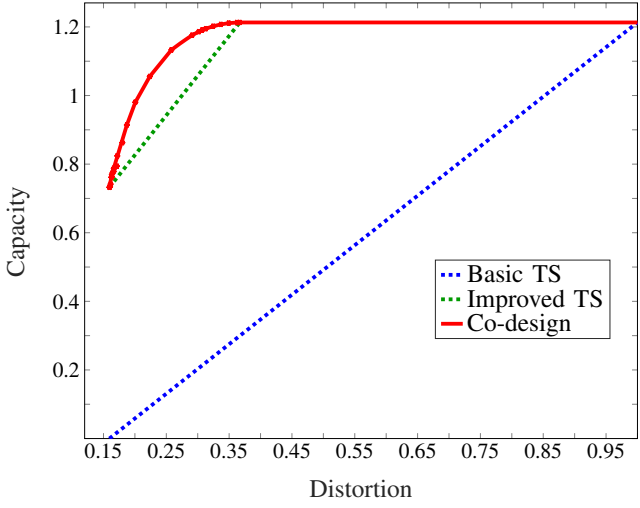

	\newcommand{\Vfb}{\sigma_{\textnormal{fb}}^2}
	We consider the real Gaussian channel with Rayleigh fading:
	\begin{equation}\label{ex:Gaussian1user}
		Y_i = S_i X_i + N_i ,
	\end{equation}
	where $X_i$ is the channel input satisfying {$\varlimsup_{n \to \infty} \frac{1}{n}\sum_i\E{|X_i|^2} \leq \B=10$dB},  and both sequences $\{N_i\}$ and $\{S_i\}$ are independent of each other and  i.i.d. Gaussian  with zero mean and unit variance. 
	The transmitter observes the noisy feedback
	\begin{equation}
		Z_i = Y_i + N_{\textnormal{fb},i},
	\end{equation}
	where $\{N_{\textnormal{fb},i}\}$ are i.i.d. zero-mean Gaussian of variance $\Vfb \geq 0$. {We consider the quadratic distortion measure $d(s,\hat{s})=(s-\hat{s})^2$.}

	First, we characterize the two operating points achieved by the Improved TS baseline scheme. 
	The capacity of this channel is achieved with a Gaussian input $X_{\max}\sim \mathcal{N}(0,\B)$, and thus the communication mode with sensing achieves the rate-distortion pair 
	\begin{IEEEeqnarray}{rCl}
		\capa(\B)&=&\frac{1}{2} \E{\log (1+|S|^2 \B)}=1.213, \quad
		\\ \D_{\max}(\B)&=&\E{\frac{(1+\Vfb)}{1+|X_{\max}|^2+\Vfb}}=0.367,
	\end{IEEEeqnarray}
	where  we have set $\sigma^2_{\textnormal{fb}}=1$ and $P=10$dB to obtain the numerical values.
	Minimum distortion $\D_{\min}$ is achieved by $2$-ary pulse amplitude modulation (PAM),  and thus  the sensing mode with communication  achieves rate-distortion pair
	\begin{equation}
		\R_{\min}(\B)=0.733, \quad \D_{\min}(\B)=\frac{1+\sigma_{\textnormal{fb}}^2}{1+P+\Vfb}={0.166},
	\end{equation}
	where the numerical value again corresponds to $\sigma_{\textnormal{fb}}=1$ and $B=10$dB.
	{Next, we characterize the performance of the basic TS baseline scheme. The best constant estimator for this channel is $\hat{s}=0$, and the communication mode without sensing  achieves rate-distortion pair $(\capa(\B),\D_{\textnormal{trivial}}(\B)=1)$. 
		The sensing mode without communication achieves rate-distortion pair $(0, \D_{\min}(\B))$.    
		
		In Fig.~\ref{fig:Gaussian}, we compare the rate-distortion tradeoff achieved by these two TS baseline schemes with a numerical approximation of the capacity-distortion-cost tradeoff $\C(\D,\B)$ of this channel. 
		As previously explained,  $\C(\D,\B)$ also passes through the two end points $(\R_{\min}(\B), \D_{\min}(\B))$ and $(\capa(\B),\D_{\max}(\B))$  of the  Improved TS scheme. 
		We use the  Blahut-Arimoto type Algorithm~1   to obtain a numerical approximation of the points on  $\C(\D,\B)$ in between these two operating points. Specifically, 
		the input alphabet is quantized to a $M=16$-ary PAM constellation 
		\begin{equation} \Xc_{\qt}:=\{(2m-1-M)\kappa, m=1,\cdots ,M\},
		\end{equation} where $\kappa:= \sqrt{{3P}/{(M^2-1)}}$. The Gaussian noise $N$ is quantized with a centered equally-spaced $50$-points alphabet, and the state $S$ is quantized by applying an equally-spaced 8000-points quantizer on the Chi-square distributed random variable $S^2$.  
		Denoting the quantized input, noise, and state by $X_{\qt}$, $N_{\qt}$, and $S_{\qt}$, we keep our multiplicative-state, additive-noise channel model to generate the channel outputs used to run Algorithm~1 to obtain the numerical approximations:
		\begin{equation}
			Y_{\qt}=S_{\qt} X_{\qt} +N_{\qt}.
	\end{equation}}

	\section{Multiple Receivers}\label{section:physBC}
	
	In this section, we consider  joint sensing and communication over two-receiver broadcast channels. 
	
	\subsection{System Model}\label{sect:Model}
	%
	\begin{figure}[h]
		\centering
		\vspace{-7mm}
		\includegraphics[scale=0.7]{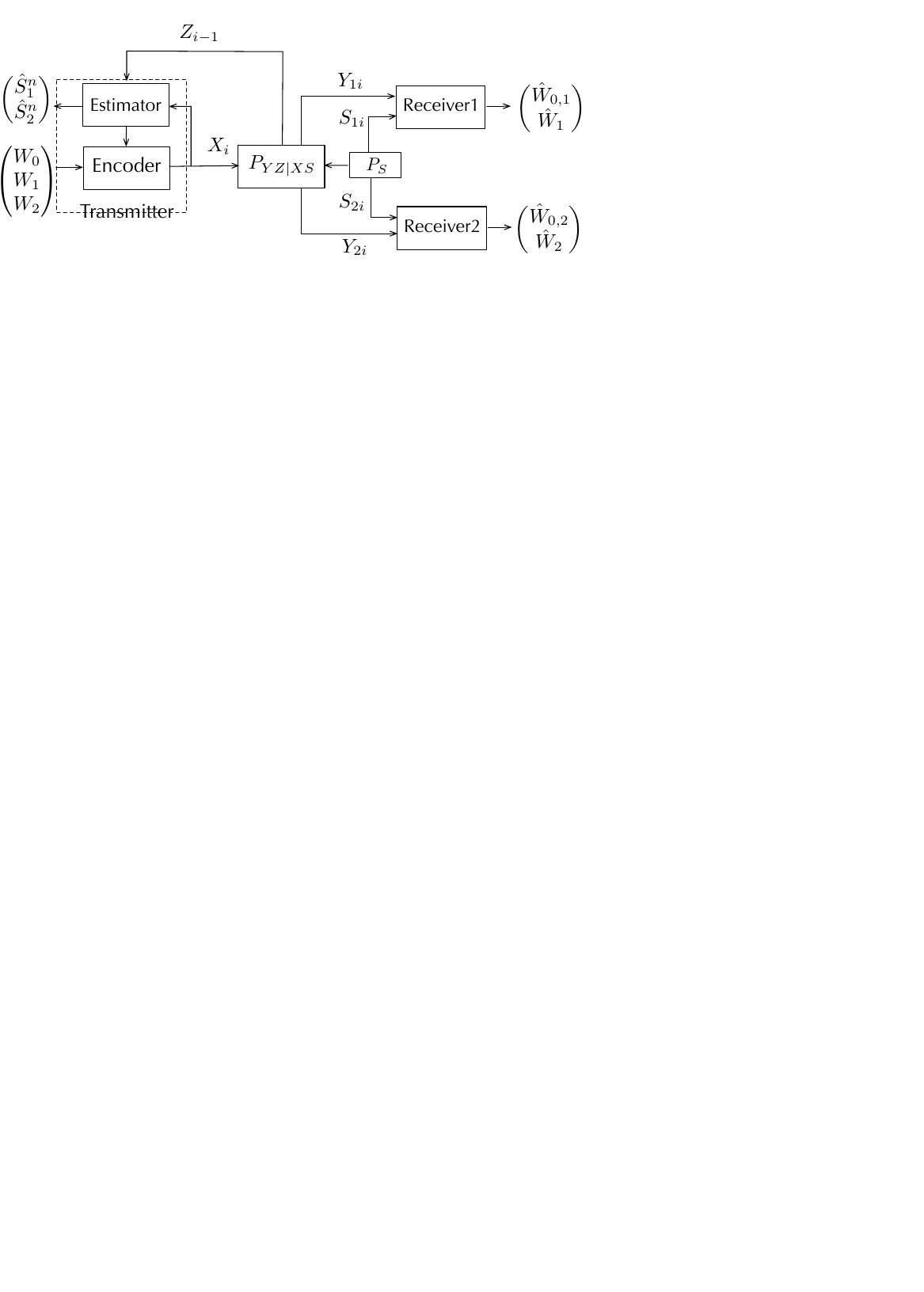}
			\vspace{-16cm}
		\caption{State-dependent  broadcast channel with generalized feedback and state-estimator at the transmitter.}
		\label{fig:ModelBC}
	
	\end{figure}
	Consider the two-receiver broadcast channel scenario depicted in Fig.~\ref{fig:ModelBC}.  The model comprises a two-dimensional memoryless state sequence $\{(S_{1, i},  S_{2, i})\}_{i\geq 1}$ whose samples at any given time $i$ are distributed according to a given joint law $P_{S_1S_2}$ over the state alphabets $\Sc_1\times \Sc_2$. Receiver~1 observes state sequence $\{S_{1,i}\}$ and Receiver~2 observes state sequence $\{S_{2,i}\}$. The transmitter communicates with both receivers over a state-dependent memoryless broadcast channel (SDMBC), where given time-$i$ input $X_i=x$ and state realizations $S_{1,i}=s_{1}$ and $S_{2,i}=s_{2}$,  the time-$i$ outputs $Y_{1,i}$ and $Y_{2,i}$ observed at the receivers and the transmitter's  feedback signal $Z_i$ are distributed according to the stationary channel transition law $ P_{Y_1Y_2Z|S_1S_2X}(\cdot,\cdot,\cdot|s_1,s_2,x)$. 
	We again assume that all 
	alphabets $\Xc,  \Yc_1,  \Yc_2,  \Zc, \Sc_1, \Sc_2$ are finite. 
	
	The goal of the transmitter is to convey a common message $W_0$ to both receivers and individual messages $W_1$ and $W_2$ to Receivers 1 and 2, respectively, while estimating the states sequences $\{S_{1,i}\}$ and $\{S_{2,i}\}$ within some target distortions. For simplicity, the input cost constraint is omitted. 

	A $(2^{n\R_0},2^{n\R_1}, 2^{n\R_2},  n)$ code for an SDMBC  thus consists of
	\begin{enumerate}
		\item three message sets $\Wc_0= [1:2^{n\R_0}]$, $\Wc_1= [1:2^{n\R_1}]$, and $\Wc_2= [1:2^{n\R_2}]$;
		\item a sequence of encoding functions $\phi_i\colon \Wc_0\times\Wc_1\times \Wc_2 \times \Zc^{i-1} \to \Xc$,  for $i=1, 2, \ldots, n$; 
		\item for each $k=1, 2$ a decoding function  $g_k \colon \Sc_k^n \times \Yc_k^n \to\Wc_0\times \Wc_k$; 
		\item for each $k=1, 2$  a state estimator  $h_k \colon \Xc^n \times \Zc^n \to \hat{\Sc}_k^n$,  where  $\hat{\Sc}_1$ and $\hat{\Sc}_2$ are given  reconstruction alphabets.
	\end{enumerate}
	
	For a given code,  we let  the random messages $W_0$, $W_1$, and $W_2$ be uniform over the message sets $\Wc_0$, $\Wc_1$, and $\Wc_2$ and the inputs $X_i=\phi_i(W_0, W_1, W_2,  Z^{i-1})$,  for $i=1, \ldots,  n$. The corresponding outputs $Y_{1, i} Y_{2, i},  Z_i$ at time $i$ are obtained from the states $S_{1, i}$ and $S_{2, i}$  and the input $X_i$ according to the  SDMBC transition law $P_{Y_1Y_2Z|S_1S_2X}$. Further,  for $k=1, 2$ let $\hat{S}_k^n:= (\hat{S}_{k, 1}, \cdots, \hat{S}_{k, n} )=h_k(X^n,  Z^n)$ be the transmitter's estimates for state $S_k^n$ and  $(\hat{W}_{0,k},\hat{W}_k)=g_k(S_k^n, Y_k^n)$  the  messages decoded by Receiver~$k$.
	The quality of the state estimates $\hat{S}_k^n$  is again measured by  bounded per-symbol distortion functions $d_k\colon \Sc_k\times \hat{\Sc}_k \mapsto [0, \infty)$, i.e., we assume 
	\begin{equation}
		\max_{s_k \in \Sc_k, \hat{s}_k \in \hat{\Sc}_k} d_k(s_k,\hat{s}_k) < \infty, \quad k=1,2.
	\end{equation}
	
	Our  interest is in the two  \emph{expected average per-block distortions}
	\begin{equation}
		\Delta_k^{(n)}:= \frac{1}{n} \sum_{i=1}^n \mathbb{E}[d_k(S_{k, i},  \hat{S}_{k, i})],  \quad k=1, 2,
	\end{equation}
	and the joint probability of error
	\begin{IEEEeqnarray}{rCl}
		P^{(n)}_e& := &\textnormal{Pr}\Big((\hat{W}_{0,k}, \hat{W}_1)\neq (W_0,W_1) 
	\nonumber\\
	&&	\quad \textnormal{or} \quad (\hat{W}_{0,k}, \hat{W}_2)\neq (W_0,W_2) \Big).
	\end{IEEEeqnarray}
	\begin{definition} 
		A rate-distortion tuple $(\R_0,\R_1,  \R_2,  \D_1,  \D_2)$ is  achievable if there exists  a sequence (in $n$) of  $(2^{n\R_0},2^{n\R_1}, 2^{n\R_2},  n)$ codes that simultaneously satisfy
		\begin{subequations}\label{eq:asymptotics}
			\begin{IEEEeqnarray}{rCl}
				\lim_{n\to \infty}	P^{(n)}_e 
				&=&0 \\
				\varlimsup_{n\to \infty}	\Delta_k^{(n)}& \leq& \D_k,  \quad \textnormal{for } k=1, 2.\label{eq:asymptotics_dis}
			\end{IEEEeqnarray}
		\end{subequations}
	\end{definition}
	\begin{definition}
		The capacity-distortion region $\CDc$ is given by the closure of the union of all achievable rate-distortion tuples $(\R_0,\R_1,  \R_2, \D_1, \D_2)$.
	\end{definition}
	In the remainder of the section, we present bounds on the capacity-distortion region $\CDc$. As in the single-receiver case, one can easily determine the optimal estimator functions $h_1$ and $h_2$, which are independent of the encoding and decoding functions and operate on a symbol-by-symbol basis.

	\begin{lemma}\label{BC:lemma:Shat} 
		For  each $k=1, 2$, define the function
		\begin{IEEEeqnarray}{rCl}\label{eq:BCestimator}
			\hat{s}_{k}^*(x, z) &:= & {\rm arg}\min_{s'\in \hat{\Sc}_k} \sum_{s_k\in \Sc_k} P_{S_{k}|XZ}(s_k|x, z)  d(s_k,  s'), \IEEEeqnarraynumspace
		\end{IEEEeqnarray}
		where   ties can be broken arbitrarily.
		
		Irrespective of the choice of encoding and decoding functions, distortions $\Delta_1^{(n)}$ and $\Delta_2^{(n)}$ are minimized by the estimators for $k=1,2$
		\begin{IEEEeqnarray}{rCl}
			&&\hspace{-1cm}h_k^*(x^n,z^n) 
			\nonumber\\
			&=& (\hat{s}^*_k(x_1,z_1), \hat{s}^*_k(x_2,z_2),\ldots, \hat{s}^*_k(x_n,z_n)) .\IEEEeqnarraynumspace
		\end{IEEEeqnarray}
	\end{lemma}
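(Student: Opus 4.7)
The plan is to reduce the broadcast-channel case to two independent applications of Lemma~\ref{1user:lemma:Shat}, exploiting the fact that the two distortion criteria $\Delta_1^{(n)}$ and $\Delta_2^{(n)}$ depend on disjoint estimator functions $h_1$ and $h_2$. Consequently, the minimization decouples into two separate problems indexed by $k\in\{1,2\}$, and it suffices to treat each $k$ in isolation.

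Fix $k\in\{1,2\}$ and any code. For an arbitrary estimator $h_k=(h_{k,1},\ldots,h_{k,n})$, I would write the per-symbol distortion as
\begin{equation}
\mathbb{E}[d_k(S_{k,i},h_{k,i}(X^n,Z^n))] = \mathbb{E}\!\left[\mathbb{E}[d_k(S_{k,i},h_{k,i}(X^n,Z^n))\mid X^n,Z^n]\right],
\end{equation}
so that the inner conditional expectation can be minimized pointwise in $(x^n,z^n)$ by choosing the Bayes estimate of $S_{k,i}$ under the posterior $P_{S_{k,i}\mid X^nZ^n}(\cdot\mid x^n,z^n)$.

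The crux of the argument is then to establish the single-letter posterior reduction
\begin{equation}\label{eq:bc-post-red}
P_{S_{k,i}\mid X^nZ^n}(s\mid x^n,z^n) \;=\; P_{S_k\mid XZ}(s\mid x_i,z_i),
\end{equation}
which would make the optimal symbol-by-symbol choice coincide with $\hat{s}_k^*(x_i,z_i)$ as defined in \eqref{eq:BCestimator}. To prove \eqref{eq:bc-post-red}, I plan to expand $\Pr(S_{k,i}=s, X^n=x^n, Z^n=z^n)$ by conditioning on the messages $(W_0,W_1,W_2)$ and summing out the partner state $S_{3-k,i}$ together with all remaining state pairs $\{(S_{1,j},S_{2,j})\}_{j\neq i}$ and all channel outputs $Y_1^n,Y_2^n$. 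Each encoding constraint $x_j=\phi_j(w_0,w_1,w_2,z^{j-1})$ appears as an indicator factor that is identical in the numerator and in the normalizing denominator; the memoryless product structure of the state law $P_{S_1S_2}$ and the channel $P_{Y_1Y_2Z\mid S_1S_2X}$ then makes every factor with index $j\neq i$ collapse to $P_{Z\mid X}(z_j\mid x_j)$, and these too cancel in the ratio. What remains is the time-$i$ ratio $P_{S_kZ\mid X}(s,z_i\mid x_i)/P_{Z\mid X}(z_i\mid x_i)$, which equals $P_{S_k\mid XZ}(s\mid x_i,z_i)$.

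Once \eqref{eq:bc-post-red} is in place, summing over $i=1,\ldots,n$ shows that $h_k^*$ minimizes $\Delta_k^{(n)}$ code by code, and running the same argument independently for $k=1$ and $k=2$ establishes the lemma. The main obstacle is the correlation between $X_i$ and the past state pairs $(S_{1,j},S_{2,j})$ with $j<i$ that is induced by the feedback-dependent encoder; this is exactly what makes \eqref{eq:bc-post-red} non-obvious. However, once one conditions on the entire realization $X^n=x^n$, the encoding map collapses into deterministic indicators and the memoryless channel structure neutralizes these correlations, so the argument proceeds exactly as in the proof of Lemma~\ref{1user:lemma:Shat}.
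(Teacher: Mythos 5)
Your proposal is correct and follows essentially the same route as the paper's Appendix~\ref{app:lemma:Shat}: iterate expectations over $(X^n,Z^n)$, minimize pointwise, and reduce the posterior $P_{S_{k,i}\mid X^nZ^n}$ to the single-letter $P_{S_k\mid X_iZ_i}$. The paper simply asserts the Markov chain $\big(X^{i-1}, X_{i+1}^{n}, Z^{i-1}, Z_{i+1}^n, \hat{S}_{k,i}\big) \markov (X_i, Z_i)\markov S_{k,i}$, whereas you supply the underlying factorization argument (marginalizing over messages, the partner state $S_{3-k,i}$, and $\{(S_{1,j},S_{2,j})\}_{j\neq i}$, $Y_1^n, Y_2^n$, with the indicator and off-index factors cancelling in the ratio); this is the justification of the asserted Markov chain rather than a different method.
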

	
	\begin{IEEEproof} 
		See	Appendix~\ref{app:lemma:Shat}.
	\end{IEEEproof}
	{Analogously to {the definition in Equation}~\eqref{eq:newcost} we can then define the optimal estimation cost for each input symbol $x\in\mathcal{X}$:	
		\begin{equation} 
			c_k(x) : = \E{d_k(S_k,\hat{s}_k^*(X,Z))|X=x}, \qquad k=1,2. 
	\end{equation}}
	 
	
	Characterizing the capacity-distortion  region is very challenging in general, because even the capacity regions of the SDMBC with and without feedback are unknown to date. We first present the exact capacity-distortion  region for the class of physically degraded SDMBCs and then provide bounds for general SDMBCs. 
	We shall also compare our results on the capacity-distortion  regions  to the performances achieved by simple  TS baseline schemes, in analogy to the single-receiver setup.

	Specifically, we again have a \emph{basic TS baseline scheme} that performs either sensing or communication at a time,  
	and an \emph{improved TS baseline scheme} that is able to perform both functions simultaneously via a common waveform by prioritizing either sensing or communication. {Analogously to the single-receiver setup, each of the two baseline schemes time-shares between a sensing mode and a communication mode. 
		However, since we now have two distortions and three rates, the choice of the ``optimal" pmf $P_X$ for each mode is not necessarily unique, but rather a continuum, depending on which function of the two distortions or the three rates one wishes to optimize. For fixed input pmf, 
		the difference between the communication mode \emph{without sensing} (employed by the basic TS scheme) and the communication mode  \emph{with sensing} (employed by the improved TS scheme) lies  in  the choice of the estimators. In the former mode, the transmitter applies the best \emph{constant estimators} for the two state-sequences, irrespective of  its inputs and feedback outputs.  In the latter mode, it applies the optimal estimators in Lemma~\ref{BC:lemma:Shat}, which depend on the input and the feedback output. Similarly, the difference between the communication modes \emph{without and with sensing} is that in the former all rates are zero and in the latter the chosen input pmf $P_X$ can be used for communication at positive rates.}

	\subsection{Capacity-Distortion Region for Physically Degraded SDMBCs}\label{sect:DegradedBC}
	This section  characterizes  the capacity-distortion  region  for \emph{physically degraded SDMBCs} and evaluates it for two  binary examples.

	\begin{definition}
		An SDMBC $P_{Y_1Y_2Z|S_1S_2X}$  with state pmf $P_{S_1S_2}$ is called \emph{physically degraded} if  there are conditional laws $P_{Y_	1|XS_1}$ and $P_{S_2Y_2|  S_1 Y_1}$ such that 
		\begin{equation}
			P_{Y_1Y_2|S_1S_2X} P_{S_1S_2} = P_{S_1}P_{Y_1|S_1X} P_{S_2Y_2|  S_1 Y_1}.
		\end{equation}
		That means for any arbitrary input $P_X$,  the tuple $(X, S_1, S_2, Y_1, Y_2)\sim P_{X} P_{S_1S_2} P_{Y_1Y_2|S_1S_2X}$ satisfies the Markov chain 
		\begin{equation}\label{eq:Mc}
			X \markov (S_1,  Y_1) \markov (S_2,  Y_2). 
		\end{equation}
	\end{definition}
	\begin{theorem}\label{Th:physically_BC}
		The  capacity-distortion region $\Cc\Dc$ of a physically degraded SDMBC is given by the closure of the set of all tuples 
		$(\R_0,\R_1, \R_2, \D_1, \D_2)$ for which there exists a joint law $P_{UX}$ so that the tuple $(U,  X,  S_1,  S_2,  Y_1,  Y_2,  Z)\sim P_{UX} P_{S_1S_2} P_{Y_1Y_2Z|S_1S_2X}$ satisfies  the two rate constraints 
		\begin{IEEEeqnarray}{rCl}
			\R_1&\leq& I(X;Y_1 \mid U, S_1)\label{R11}\\
			\R_0+\R_2&\leq& I(U;Y_2\mid S_2), \label{R22}
		\end{IEEEeqnarray}
		and the distortion constraints
		\begin{equation}\label{eq:dis:physicallyBC}
			\mathbb{E}[d_k(S_k,  \hat{s}_{k}^*(X, Z))]\leq \D_k,  \quad k=1, 2.
		\end{equation}	
	\end{theorem}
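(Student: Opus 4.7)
\begin{IEEEproof}[Proof Sketch]
The plan is to establish achievability and converse separately. Since Theorem~\ref{Th:physically_BC} is the natural broadcast analogue of Theorem~\ref{th:tradeoff}, the sensing part of the argument mirrors the single-receiver case and the main work lies in handling the two communication rate constraints while keeping the generalized feedback under control.

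\textbf{Achievability.} I would use standard superposition coding for a physically degraded BC, completely ignoring the feedback for encoding (so $\phi_i$ depends only on the messages). Fix $P_{UX}$ satisfying the distortion constraints. Generate i.i.d.\ cloud centers $\{U^n(w_0,w_2)\}$ according to $P_U^{\otimes n}$ and, conditioned on each $U^n(w_0,w_2)$, generate satellite codewords $\{X^n(w_0,w_1,w_2)\}$ according to $\prod_i P_{X|U}(\cdot|U_i(w_0,w_2))$. Receiver~2 jointly decodes $(W_0,W_2)$ from $(S_2^n,Y_2^n)$, which succeeds if $R_0+R_2\leq I(U;Y_2\mid S_2)$; Receiver~1 (having the stronger observation by physical degradation) first decodes $(W_0,W_2)$ under the same condition and then decodes $W_1$ via $(S_1^n,Y_1^n)$, which succeeds if $R_1\leq I(X;Y_1\mid U,S_1)$. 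For sensing, the transmitter applies the symbol-by-symbol estimators $\hat{s}_k^*$ of Lemma~\ref{BC:lemma:Shat} to $(X^n,Z^n)$. A weak law of large numbers applied to the per-letter distortion $d_k(S_{k,i},\hat{s}_k^*(X_i,Z_i))$ together with the boundedness assumption delivers $\varlimsup_n \Delta_k^{(n)}\leq \E{d_k(S_k,\hat{s}_k^*(X,Z))}\leq D_k$.

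\textbf{Converse.} Given any achievable tuple, I would apply Fano's inequality to obtain
\begin{IEEEeqnarray*}{rCl}
n(R_0+R_2) &\leq& I(W_0,W_2;S_2^n,Y_2^n)+n\epsilon_n,\\
nR_1 &\leq& I(W_1;S_1^n,Y_1^n\mid W_0,W_2)+n\epsilon_n.
\end{IEEEeqnarray*}
For the first bound, I chain-expand over $i$, use that $S_{2,i}$ is independent of all past variables (messages, past states and past outputs, hence also past feedback), and identify the auxiliary
\[
U_i \defeq (W_0,W_2,S_2^{i-1},Y_2^{i-1}),
\]
yielding $\sum_i I(U_i;Y_{2,i}\mid S_{2,i})$. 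For the second bound, the key step is to insert $(S_2^{i-1},Y_2^{i-1})$ into the conditioning for free: by physical degradation, $(S_2^{i-1},Y_2^{i-1})$ is a stochastic function of $(S_1^{i-1},Y_1^{i-1})$ alone, hence conditionally independent of $(W_1,S_{1,i},Y_{1,i})$ given $(S_1^{i-1},Y_1^{i-1})$. Then I introduce $X_i$ on the information side and use the memorylessness of the channel together with the independence of $S_{1,i}$ from all past variables to eliminate residual message dependence, obtaining
\[
I(W_1;S_{1,i},Y_{1,i}\mid W_0,W_2,S_1^{i-1},Y_1^{i-1})\leq I(X_i;Y_{1,i}\mid U_i,S_{1,i}),
\]
where dropping $(S_1^{i-1},Y_1^{i-1})$ from the conditioning is justified because it can only decrease the conditional entropy of $Y_{1,i}$ given $S_{1,i},U_i$ while leaving $H(Y_{1,i}\mid X_i,S_{1,i})$ unchanged.

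\textbf{Single-letterization and distortions.} With a time-sharing variable $Q\sim\mathrm{Unif}[1{:}n]$, set $U\defeq(U_Q,Q)$, $X\defeq X_Q$, $S_k\defeq S_{k,Q}$, $Y_k\defeq Y_{k,Q}$, $Z\defeq Z_Q$. The resulting joint law factors as $P_{UX}P_{S_1S_2}P_{Y_1Y_2Z\mid S_1S_2X}$ because $S_{k,Q}$ is i.i.d.\ and independent of $(U,X)$, and the channel is memoryless. By Lemma~\ref{BC:lemma:Shat} the distortion $\Delta_k^{(n)}$ is lower bounded by $\E{d_k(S_k,\hat{s}_k^*(X,Z))}$ under this induced law, yielding \eqref{eq:dis:physicallyBC}.

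\textbf{Anticipated difficulty.} The main obstacle is showing that generalized feedback does not enlarge the region in the converse: because $X_i$ depends on $Z^{i-1}$, standard memoryless arguments do not apply out of the box. The resolution is the degradation-based identity $(S_2^{i-1},Y_2^{i-1})\markov(S_1^{i-1},Y_1^{i-1})\markov(\text{everything else})$, which allows freely inserting the auxiliary variables $(S_2^{i-1},Y_2^{i-1})$ into the conditioning. Combined with the fact that $S_{k,i}$ is a fresh i.i.d.\ sample independent of $X_i$ (a function of messages and past feedback), this neutralizes the feedback and recovers the classical bounds.
\end{IEEEproof}
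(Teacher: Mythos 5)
Your achievability sketch and your bound on $\R_0+\R_2$ match the paper's argument. The gap is in the converse for $\R_1$, specifically the claim that ``by physical degradation, $(S_2^{i-1},Y_2^{i-1})$ is a stochastic function of $(S_1^{i-1},Y_1^{i-1})$ alone, hence conditionally independent of $(W_1,S_{1,i},Y_{1,i})$ given $(S_1^{i-1},Y_1^{i-1})$.'' This is false in the presence of generalized feedback. Physical degradedness constrains only the $(Y_1,Y_2)$-marginal of $P_{Y_1Y_2Z\mid S_1S_2X}$; it places no constraint on $Z$, and $Z$ is what feeds back. Take $Z=(Y_1,Y_2)$ as in Example 4 of the paper: then $X_i=\phi_i(W_0,W_1,W_2,Z^{i-1})$ depends directly on $Y_2^{i-1}$, so $Y_{1,i}$ depends on $Y_2^{i-1}$ even after conditioning on $(W_0,W_1,W_2,S_1^{i-1},Y_1^{i-1},S_{1,i})$. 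The Markov chain you invoke fails, so inserting $(S_2^{i-1},Y_2^{i-1})$ into the conditioning is not ``for free''; worse, since extra conditioning can either increase or decrease mutual information, this step does not even produce a valid upper bound. The ``anticipated difficulty'' paragraph identifies the right danger (feedback) but then proposes exactly the identity that the feedback breaks.

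The paper's converse sidesteps the feedback loop by enhancing the \emph{observation} rather than the \emph{conditioning}: it upper-bounds $I(W_1;Y_1^n,S_1^n\mid W_0,W_2)\leq I(W_1;Y_1^n,S_1^n,Y_2^n,S_2^n\mid W_0,W_2)$ (a genie gives Receiver~1 the second receiver's states and outputs, so this is an honest inequality), then single-letterizes to $\sum_i I(X_i,W_1,Y_1^{i-1},S_1^{i-1};Y_{1,i},Y_{2,i},S_{1,i},S_{2,i}\mid Y_2^{i-1},S_2^{i-1},W_0,W_2)$, where $(Y_1^{i-1},S_1^{i-1})$ has been moved to the information side, not kept in the conditioning. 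At that point two Markov chains do the work: memorylessness gives $(\textnormal{all past and messages}) \markov X_i\markov (S_{1,i},Y_{1,i},S_{2,i},Y_{2,i})$ --- valid because conditioning on $X_i$ rather than on $Z^{i-1}$ cuts the feedback loop --- and physical degradedness gives $X_i\markov(S_{1,i},Y_{1,i})\markov(S_{2,i},Y_{2,i})$. Together these collapse the $i$-th term to $I(X_i;Y_{1,i},S_{1,i}\mid Y_2^{i-1},S_2^{i-1},W_0,W_2)$, which becomes $I(X;Y_1\mid S_1,U)$ after introducing the time-sharing variable and using the independence of $S_1$ from $(U,X)$. You should replace your conditioning trick with this genie-aided enhancement.
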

	\begin{IEEEproof} The achievability can be proved by standard superposition coding and using the optimal estimators in Lemma~\ref{BC:lemma:Shat}. The converse also follows from standard steps and the details are provided in Appendix \ref{app:converse_proof}.
	\end{IEEEproof}
	As mentioned in the proof, data communication is performed by simple superposition coding that ignores the feedback. Thus, also for physically degraded BCs feedback only  facilitates state sensing but is useless for  communications.

	\begin{remark}
	Similarly to the single-receiver case, an input cost-constraint as in \eqref{eq:cost_constraint} can be added to our model. Theorem~\ref{Th:physically_BC} remains valid in this case, if the choice of  the input distribution $P_X$ is limited  to satisfy the cost constraint
	\begin{equation}
	\sum_{x\in\mathcal{X}} P_X(x) b(x) \leq B.
	\end{equation} 
				The analogous remark also applies to the non-physically degraded BC ahead and the presented inner and outer bounds. 
	\end{remark}

		\begin{remark}
		Similarly to what we described in Remark~\ref{rem:CSI}, the result in Theorem~\ref{Th:physically_BC} can be extended to the case with imperfect receiver state-informations $S_{R,1}^n$ and $S_{R,2}^{n}$. For $(S^n, S_{R,1}^n, S_{R,2}^n)$ i.i.d. $\sim P_{SS_{R,1}, S_{R,2}}$ it suffices to replace in the rate-constraints \eqref{R11} and \eqref{R22} of Theorem~\ref{Th:physically_BC}  the state $S_1$ by $S_{R,1}$ and the state $S_2$ by $S_{R,2}$. The analogous remark also applies to the non-physically degraded BC ahead and the presented inner and outer bounds. 
%
		\end{remark}

	In what follows, we evaluate above Theorem~\ref{Th:physically_BC} for two examples.

	\subsubsection{Example 3: Binary BC with Multiplicative Bernoulli States}\label{ex:Binary1}
	Consider the physically degraded SDMBC with binary input and output alphabets $\Xc=\Yc_1=\Yc_2=\{0, 1\}$ and binary
	state alphabets $\Sc_1=\Sc_2=\{0, 1\}$. The channel input-output relation is described by 
	\begin{IEEEeqnarray}{rCl}\label{ex2}
		Y_k&=& S_k X,  \qquad k=1, 2,  
	\end{IEEEeqnarray}
	with 
	the joint state pmf 
	\begin{align}\label{ew:p_ss}
		P_{S_1 S_2} (s_1, s_2) &= \begin{cases}
			1-q,  & \text{if $(s_1,  s_2)= (0, 0)$} \\
			0,  & \text{if $(s_1,  s_2)= (0, 1)$}\\
			q \gamma,  & \text{if $(s_1,  s_2)= (1,  1)$} \\
			{q(1-\gamma)}& \text{if $(s_1,  s_2)= (1,  0)$}, 
		\end{cases}
	\end{align}
	for 
	$\gamma,  q\in[0, 1]$.
	Notice that $S_2$ is a degraded version of $S_1$, which together with the transition law \eqref{ex2} ensures the  Markov chain $X \markov (S_1,Y_1) \markov (S_2,Y_2)$ and the physically degradedness of the SDMBC. We consider output feedback 
	\begin{equation}
		\label{ex2:fb}
		Z=(Y_1,  Y_2),
	\end{equation} 
	and  set the common rate $\R_0=0$ for simplicity. 
	
	In this SDMBC, zero distortions $\D_1=\D_2=0$ can be achieved by deterministically choosing $X=1$ exactly as for the single-receiver case. This choice however cannot achieve any positive communication rates, i.e.,  $\R_1=\R_2=0$. In the sensing mode with and without communication, we thus have:
	\begin{equation}\label{eq:t1}
		(\R_1,  \R_2,  \D_1,  \D_2)= (0,  0,  0, 0).
	\end{equation}
	
	The optimal input distribution for communication is $X_{\max}\sim\mathcal{B}(1/2)$, in which case all rate-pairs $(\R_1,\R_2)$ satisfying 
	\begin{equation}
		\R_k \leq P_{S_k}(1) , \qquad k=1,2,
	\end{equation}
	are achievable. The input $X_{\max} \sim  \mathcal{B}(1/2)$ simultaneously maximizes both communication rates $\R_1,\R_2$. 
	
	In the communication mode \emph{without} sensing, the transmitter applies the optimal constant estimator for each state, namely 
	\begin{equation}
		\hat{s}_{{\rm const}, k}:=\argmax_{\hat{s}\in\{0,1\}}  P_{S_k}(\hat{s}), \qquad k=1,2,
	\end{equation}
	and thus achieves all tuples 
	\begin{equation}\label{eq:t2}
		(\R_1,  \R_2,  \D_1, \D_2)=(q r,  \gamma q(1-r),  \D_{1,  \max},  \D_{2,  \max})
	\end{equation} where  $\D_{1,  \max}:=\min\{q,  1-q\}$ and $\D_{2,  \max}:= \min\{\gamma q,  1-\gamma q\}$,  and $r\in [0, 1]$ denotes the time-sharing parameter between the two communication rates. 
	
	In the communication mode \emph{with} sensing, the same input  $X_{\max}$ is used. The transmitter however applies the optimal estimator for $k=1,2$:
	\begin{equation}\label{eq:estimator1one}
		\hat{s}_k^*(x,y_1,y_2)= \begin{cases} y_k, & \textnormal{if } x=1 \\
			\hat{s}_{{\rm const}, k}, &\textnormal{if }  x=0,
		\end{cases}
	\end{equation}
	and achieves the tuple
	\begin{equation}\label{eq:t3}
		(\R_1,  \R_2,  \D_1, \D_2)=\left(q r,  \gamma q (1-r),  \frac{\D_{1,  \max}}{2},  \frac{\D_{2,  \max}}{2}\right),
	\end{equation}
	where $r$ again denotes the time-sharing parameter between the two communication rates. 
	
	The basic and improved TS baseline schemes achieve the time-sharing lines between  points  \eqref{eq:t1} and \eqref{eq:t2} and  points \eqref{eq:t1} and \eqref{eq:t3}, respectively. The following corollary evaluates  Theorem~\ref{Th:physically_BC} to obtain the performance of the optimal co-design scheme. 
	
	%

	\begin{corollary}\label{corollary:binaryBC1}				
		The capacity-distortions region $\CDc$ of the binary physically degraded SDMBC in
		\eqref{ex2}--\eqref{ex2:fb} is  the set of all tuples $(\R_0,\R_1, \R_2, \D_1, \D_2)$ satisfying 
		\begin{subequations}\label{eq:con}
			\begin{IEEEeqnarray}{rCl}
				\R_0+\R_1&\leq&
				q H_\textnormal{b}(p) r, \label{R1f}\\
				\R_0+	\R_2&\leq &
				\gamma  q H_\textnormal{b}(p) (1-r), \label{R2f}\\
				\D_1 &\geq & p \min\{q,  1-q\}, \label{eq:conD1}\\
				\D_2 &\geq &p \min\{\gamma q,  1-\gamma  q \},  \label{eq:conD2}
			\end{IEEEeqnarray}
		\end{subequations}
		for some choice of the parameters $r,  p\in[0, 1]$. 
	\end{corollary}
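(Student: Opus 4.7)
The plan is to evaluate Theorem~\ref{Th:physically_BC} directly for this channel, exploiting the multiplicative form $Y_k = S_k X$ together with the output feedback $Z = (Y_1, Y_2)$.

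First, I would parameterize any admissible input by $p := P_X(0)$, so that $H(X) = H_{\textnormal{b}}(p)$. Because $Y_k$ is a deterministic function of $(S_k, X)$ and the pair $(S_1, S_2)$ is independent of $(U, X)$ at each symbol time, the two mutual informations in Theorem~\ref{Th:physically_BC} simplify cleanly. Conditioning on the binary $S_1$,
\begin{equation*}
I(X; Y_1 \mid U, S_1) = P_{S_1}(0)\cdot 0 + P_{S_1}(1) \cdot H(X \mid U) = q\, H(X \mid U),
\end{equation*}
since $Y_1 = 0$ whenever $S_1 = 0$ and $Y_1 = X$ whenever $S_1 = 1$. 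An analogous calculation, using $H(Y_2 \mid S_2 = 1) = H_{\textnormal{b}}(p)$ and $H(Y_2 \mid U, S_2 = 1) = H(X \mid U)$, gives $I(U; Y_2 \mid S_2) = \gamma q\, I(U; X)$.

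Next, I would set $r := H(X \mid U)/H_{\textnormal{b}}(p) \in [0, 1]$ (with the edge case $p \in \{0, 1\}$ handled by convention, in which $H_{\textnormal{b}}(p) = 0$ forces both rates to zero and the value of $r$ is immaterial). Using $I(U;X) = H_{\textnormal{b}}(p) - H(X \mid U) = (1 - r) H_{\textnormal{b}}(p)$, the rate bounds of Theorem~\ref{Th:physically_BC} become exactly \eqref{R1f} and \eqref{R2f}. For the distortions, I would invoke Lemma~\ref{BC:lemma:Shat}: because $X = 1$ forces $Z = (S_1, S_2)$ (perfect observation of both states), while $X = 0$ yields $Z = (0,0)$ (no information about either state), the symbol-wise optimal estimator is $\hat{s}_k^*(1, z) = z_k$ and $\hat{s}_k^*(0, \cdot) = \arg\max_{s} P_{S_k}(s)$. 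Direct computation then gives $\mathbb{E}[d_k(S_k, \hat{s}_k^*(X, Z))] = p \min\{P_{S_k}(0), P_{S_k}(1)\}$, which with $P_{S_1}(1) = q$ and $P_{S_2}(1) = \gamma q$ yields \eqref{eq:conD1}--\eqref{eq:conD2}.

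The converse direction is immediate from the above, since every feasible $(U,X)$ in Theorem~\ref{Th:physically_BC} induces some $(p, r) \in [0,1]^2$ for which the stated inequalities hold. For achievability, I would exhibit an explicit $P_{UX}$ realizing every such pair: take $U \in \{0, 1, 2\}$ with $P_U(0) = r$ and $X \mid U = 0$ distributed as Bernoulli with parameter $1 - p$, and $P_U(1) = (1-r) p$, $P_U(2) = (1-r)(1-p)$ with $X = U - 1$ deterministically on these last two masses. This preserves $P_X(0) = p$ while producing $H(X \mid U) = r H_{\textnormal{b}}(p)$, closing the parameterization. No substantive obstacle arises beyond Theorem~\ref{Th:physically_BC} and Lemma~\ref{BC:lemma:Shat} themselves; the only bookkeeping is the degenerate cases $p \in \{0, 1\}$.
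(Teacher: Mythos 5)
Your proposal is correct and follows essentially the same route as the paper: evaluate Theorem~\ref{Th:physically_BC} by computing $I(X;Y_1\mid U,S_1)=qH(X\mid U)$ and $I(U;Y_2\mid S_2)=\gamma q\,I(U;X)$, set $p=P_X(0)$ and $r=H(X\mid U)/H(X)$, and obtain the distortions from the optimal symbol-wise estimators of Lemma~\ref{BC:lemma:Shat} exactly as in the single-receiver case. The only addition is your explicit $P_{UX}$ construction realizing every $(p,r)\in[0,1]^2$, which the paper leaves implicit; this is a harmless and indeed welcome piece of extra rigor.
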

	
	\begin{IEEEproof}
		We start by noticing that for this example $I(X;Y_1\mid U,S_1)= q H(X|U)$ and $I(U;Y_2 \mid S_2)=q \gamma (H(X)-H(X\mid U))$. Setting 
		$p:=P_X(0)$ 
		and $r:=\frac{H(X\mid U)}{H(X)}$, directly leads to the desired rate constraints. The distortion constraints are obtained from the optimal estimators in \eqref{eq:estimator1one}. Following the same steps as in the single-receiver case, i.e. \eqref{eq:SingleEstCost} and \eqref{eq:SingleD}, we obtain 
		\begin{align}
			\D_k \geq p \min\{P_{S_k}(0), P_{S_k}(1)\},
		\end{align}
		which concludes the proof.
	\end{IEEEproof}

	{Notice that   above Corollary~\ref{corollary:binaryBC1} reduces to Corollary \ref{cor:ex2} in the special case of $\R_0=\R_2=0$ and $\D_2=\infty$, i.e., when we ignore  Receiver 2.}
	
	\begin{figure}[!t]
		\vspace{-0.3cm}
		\centering
	%
	%
	\begin{tikzpicture}[scale=0.48]
		
	%
	%
%
		
		\begin{axis}[%
			width=6.028in,
			height=4.754in,
			at={(1.011in,0.642in)},
			scale only axis,
			unbounded coords=jump,
			xmin=0,
			xmax=0.6,
			tick align=outside,
			xlabel style={font=\color{white!15!black}},
			xlabel={\huge$\R_1$},
			ymin=0,
			ymax=0.35,
			ylabel style={font=\color{white!15!black}},
			ylabel={\huge$\D_1$},
			zmin=-0,
			zmax=0.35,
			zlabel style={font=\large},
			zlabel={\huge$\R_2$},
			view={43}{14},
			axis background/.style={fill=white},
			axis x line*=bottom,
			axis y line*=left,
			axis z line*=left,
			xmajorgrids,
			ymajorgrids,
			zmajorgrids,
			%
		%
		xticklabels={, ,, 0.2, , 0.4,  ,0.6},
		yticklabels={,0,  , , , 0.2, , ,0.35},
			zticklabels={,0, ,0.1, , 0.2, ,0.3, },
			x tick label style={font=\LARGE},
			y tick label style={font=\LARGE},
			z tick label style={font=\LARGE},
				%
			legend style={at={(0.7,0.8)}, anchor=north west, legend cell align=left, align=left, draw=white!15!black}
			]
			\addplot3 [color=green, dashed, line width=3.0pt]
			table[row sep=crcr] {%
				0	0	0\\
				0.6	0.2	0\\
			};\addlegendentry{\LARGE Improved TS}
				\addplot3 [color=blue, dashed, line width=2.0pt]
			table[row sep=crcr] {%
				0.6	0.2	0\\
				0	0.35	0.3\\
			};
		\addlegendentry{\LARGE Basic TS}
			\addplot3 [color=red, line width=2.0pt]
			table[row sep=crcr] {%
				0	0	0\\
				0	0.004	0.0242379407687734\\
				0	0.008	0.0424321627625462\\
				0	0.012	0.0583175573494729\\
				0	0.016	0.0726876567247244\\
				0	0.02	0.0859190871347869\\
				0	0.024	0.0982334757463429\\
				0	0.028	0.109777095270067\\
				0	0.032	0.120653757060682\\
				0	0.036	0.130940945119231\\
				0	0.04	0.140698678076784\\
				0	0.044	0.149974787449358\\
				0	0.048	0.158808259586209\\
				0	0.052	0.167231455508397\\
				0	0.056	0.175271643492857\\
				0	0.06	0.18295209141492\\
				0	0.064	0.19029286639217\\
				0	0.068	0.197311433623266\\
				0	0.072	0.204023113718484\\
				0	0.076	0.210441437965169\\
				0	0.08	0.216578428466209\\
				0	0.084	0.222444821979382\\
				0	0.088	0.22805025088859\\
				0	0.092	0.233403391063961\\
				0	0.096	0.238512083815357\\
				0	0.1	0.24338343733774\\
				0	0.104	0.248023911747785\\
				0	0.108	0.252439390862453\\
				0	0.112	0.256635243168039\\
				0	0.116	0.260616373901821\\
				0	0.12	0.264387269769208\\
				0	0.124	0.267952037513357\\
				0	0.128	0.271314437317348\\
				0	0.132	0.274477911833918\\
				0	0.136	0.277445611491909\\
				0	0.14	0.280220416612647\\
				0	0.144	0.282804956776648\\
				0	0.148	0.28520162780612\\
				0	0.152	0.28741260666789\\
				0	0.156	0.289439864551526\\
				0	0.16	0.291285178336401\\
				0	0.164	0.292950140627347\\
				0	0.168	0.294436168510096\\
				0	0.172	0.295744511153676\\
				0	0.176	0.296876256366617\\
				0	0.18	0.297832336196343\\
				0	0.184	0.298613531646068\\
				0	0.188	0.299220476570322\\
				0	0.192	0.299653660798561\\
				0	0.196	0.299913432525843\\
				0	0.2	0.3\\
				0	0.352	0.3\\
				0	0.356	0.3\\
			};
		\addlegendentry{\LARGE Co-design}
			\addplot3 [color=green, dashed, line width=3.0pt]
		table[row sep=crcr] {%
			0.6	0.2	0\\
			0	0.2	0.3\\
		};
		
		\addplot3 [color=green, dashed, line width=3.0pt]
		table[row sep=crcr] {%
			0	0.2	0.3\\
			0	0	0\\
		};

		\addplot3 [color=blue, dashed, line width=2.0pt]
		table[row sep=crcr] {%
			0	0	0\\
			0	0.35	0.3\\
		};
		
		\addplot3 [color=blue, dashed, line width=2.0pt]
		table[row sep=crcr] {%
			0.6	0.2	0\\
			0	0	0\\
		};
			\addplot3 [color=red, line width=2.0pt]
			table[row sep=crcr] {%
				0.03	0.002	0\\
				0.03	0.004	0.00923794076877335\\
				0.03	0.008	0.0274321627625462\\
				0.03	0.012	0.0433175573494729\\
				0.03	0.016	0.0576876567247244\\
				0.03	0.02	0.0709190871347869\\
				0.03	0.024	0.0832334757463429\\
				0.03	0.028	0.094777095270067\\
				0.03	0.032	0.105653757060682\\
				0.03	0.036	0.115940945119231\\
				0.03	0.04	0.125698678076784\\
				0.03	0.044	0.134974787449358\\
				0.03	0.048	0.143808259586209\\
				0.03	0.052	0.152231455508397\\
				0.03	0.056	0.160271643492857\\
				0.03	0.06	0.16795209141492\\
				0.03	0.064	0.17529286639217\\
				0.03	0.068	0.182311433623266\\
				0.03	0.072	0.189023113718484\\
				0.03	0.076	0.195441437965169\\
				0.03	0.08	0.201578428466209\\
				0.03	0.084	0.207444821979382\\
				0.03	0.088	0.21305025088859\\
				0.03	0.092	0.218403391063961\\
				0.03	0.096	0.223512083815357\\
				0.03	0.1	0.22838343733774\\
				0.03	0.104	0.233023911747785\\
				0.03	0.108	0.237439390862453\\
				0.03	0.112	0.241635243168039\\
				0.03	0.116	0.245616373901821\\
				0.03	0.12	0.249387269769208\\
				0.03	0.124	0.252952037513357\\
				0.03	0.128	0.256314437317348\\
				0.03	0.132	0.259477911833918\\
				0.03	0.136	0.262445611491909\\
				0.03	0.14	0.265220416612647\\
				0.03	0.144	0.267804956776648\\
				0.03	0.148	0.27020162780612\\
				0.03	0.152	0.27241260666789\\
				0.03	0.156	0.274439864551526\\
				0.03	0.16	0.276285178336401\\
				0.03	0.164	0.277950140627347\\
				0.03	0.168	0.279436168510096\\
				0.03	0.172	0.280744511153676\\
				0.03	0.176	0.281876256366617\\
				0.03	0.18	0.282832336196343\\
				0.03	0.184	0.283613531646068\\
				0.03	0.188	0.284220476570322\\
				0.03	0.192	0.284653660798561\\
				0.03	0.196	0.284913432525843\\
				0.03	0.2	0.285\\
				0.03	0.352	0.285\\
				0.03	0.356	0.285\\
			};
			\addplot3 [color=red, line width=2.0pt]
			table[row sep=crcr] {%
				0.06	0	0\\
				0.06	0.004	-0.0055\\
				0.06	0.008	0.0124321627625462\\
				0.06	0.012	0.0283175573494729\\
				0.06	0.016	0.0426876567247244\\
				0.06	0.02	0.0559190871347869\\
				0.06	0.024	0.0682334757463429\\
				0.06	0.028	0.079777095270067\\
				0.06	0.032	0.0906537570606818\\
				0.06	0.036	0.100940945119231\\
				0.06	0.04	0.110698678076784\\
				0.06	0.044	0.119974787449358\\
				0.06	0.048	0.128808259586209\\
				0.06	0.052	0.137231455508397\\
				0.06	0.056	0.145271643492857\\
				0.06	0.06	0.15295209141492\\
				0.06	0.064	0.16029286639217\\
				0.06	0.068	0.167311433623266\\
				0.06	0.072	0.174023113718484\\
				0.06	0.076	0.180441437965169\\
				0.06	0.08	0.186578428466209\\
				0.06	0.084	0.192444821979382\\
				0.06	0.088	0.19805025088859\\
				0.06	0.092	0.203403391063961\\
				0.06	0.096	0.208512083815357\\
				0.06	0.1	0.21338343733774\\
				0.06	0.104	0.218023911747785\\
				0.06	0.108	0.222439390862453\\
				0.06	0.112	0.226635243168039\\
				0.06	0.116	0.230616373901821\\
				0.06	0.12	0.234387269769208\\
				0.06	0.124	0.237952037513357\\
				0.06	0.128	0.241314437317348\\
				0.06	0.132	0.244477911833918\\
				0.06	0.136	0.247445611491909\\
				0.06	0.14	0.250220416612647\\
				0.06	0.144	0.252804956776648\\
				0.06	0.148	0.25520162780612\\
				0.06	0.152	0.25741260666789\\
				0.06	0.156	0.259439864551526\\
				0.06	0.16	0.261285178336401\\
				0.06	0.164	0.262950140627347\\
				0.06	0.168	0.264436168510096\\
				0.06	0.172	0.265744511153676\\
				0.06	0.176	0.266876256366617\\
				0.06	0.18	0.267832336196343\\
				0.06	0.184	0.268613531646068\\
				0.06	0.188	0.269220476570322\\
				0.06	0.192	0.269653660798561\\
				0.06	0.196	0.269913432525843\\
				0.06	0.2	0.27\\
				0.06	0.352	0.27\\
				0.06	0.356	0.27\\
			};
			\addplot3 [color=red, line width=2.0pt]
			table[row sep=crcr] {%
				0.09	0.008	-0.00\\
				0.09	0.012	0.0133175573494729\\
				0.09	0.016	0.0276876567247244\\
				0.09	0.02	0.0409190871347869\\
				0.09	0.024	0.0532334757463429\\
				0.09	0.028	0.064777095270067\\
				0.09	0.032	0.0756537570606818\\
				0.09	0.036	0.0859409451192309\\
				0.09	0.04	0.0956986780767844\\
				0.09	0.044	0.104974787449358\\
				0.09	0.048	0.113808259586209\\
				0.09	0.052	0.122231455508397\\
				0.09	0.056	0.130271643492857\\
				0.09	0.06	0.13795209141492\\
				0.09	0.064	0.14529286639217\\
				0.09	0.068	0.152311433623266\\
				0.09	0.072	0.159023113718484\\
				0.09	0.076	0.165441437965169\\
				0.09	0.08	0.171578428466209\\
				0.09	0.084	0.177444821979382\\
				0.09	0.088	0.18305025088859\\
				0.09	0.092	0.188403391063961\\
				0.09	0.096	0.193512083815357\\
				0.09	0.1	0.19838343733774\\
				0.09	0.104	0.203023911747785\\
				0.09	0.108	0.207439390862453\\
				0.09	0.112	0.211635243168039\\
				0.09	0.116	0.215616373901821\\
				0.09	0.12	0.219387269769208\\
				0.09	0.124	0.222952037513357\\
				0.09	0.128	0.226314437317348\\
				0.09	0.132	0.229477911833918\\
				0.09	0.136	0.232445611491909\\
				0.09	0.14	0.235220416612647\\
				0.09	0.144	0.237804956776648\\
				0.09	0.148	0.24020162780612\\
				0.09	0.152	0.24241260666789\\
				0.09	0.156	0.244439864551526\\
				0.09	0.16	0.246285178336401\\
				0.09	0.164	0.247950140627347\\
				0.09	0.168	0.249436168510096\\
				0.09	0.172	0.250744511153676\\
				0.09	0.176	0.251876256366617\\
				0.09	0.18	0.252832336196342\\
				0.09	0.184	0.253613531646068\\
				0.09	0.188	0.254220476570322\\
				0.09	0.192	0.25465366079856\\
				0.09	0.196	0.254913432525843\\
				0.09	0.2	0.255\\
				0.09	0.352	0.255\\
				0.09	0.356	0.255\\
			};
			\addplot3 [color=red, line width=2.0pt]
			table[row sep=crcr] {%
				0.12	0.012	-0.00\\
				0.12	0.016	0.0126876567247244\\
				0.12	0.02	0.0259190871347869\\
				0.12	0.024	0.0382334757463429\\
				0.12	0.028	0.049777095270067\\
				0.12	0.032	0.0606537570606818\\
				0.12	0.036	0.0709409451192309\\
				0.12	0.04	0.0806986780767844\\
				0.12	0.044	0.0899747874493584\\
				0.12	0.048	0.0988082595862093\\
				0.12	0.052	0.107231455508397\\
				0.12	0.056	0.115271643492857\\
				0.12	0.06	0.12295209141492\\
				0.12	0.064	0.13029286639217\\
				0.12	0.068	0.137311433623266\\
				0.12	0.072	0.144023113718484\\
				0.12	0.076	0.150441437965169\\
				0.12	0.08	0.156578428466209\\
				0.12	0.084	0.162444821979382\\
				0.12	0.088	0.16805025088859\\
				0.12	0.092	0.173403391063961\\
				0.12	0.096	0.178512083815357\\
				0.12	0.1	0.18338343733774\\
				0.12	0.104	0.188023911747785\\
				0.12	0.108	0.192439390862453\\
				0.12	0.112	0.196635243168039\\
				0.12	0.116	0.200616373901821\\
				0.12	0.12	0.204387269769208\\
				0.12	0.124	0.207952037513357\\
				0.12	0.128	0.211314437317348\\
				0.12	0.132	0.214477911833918\\
				0.12	0.136	0.217445611491909\\
				0.12	0.14	0.220220416612647\\
				0.12	0.144	0.222804956776648\\
				0.12	0.148	0.22520162780612\\
				0.12	0.152	0.22741260666789\\
				0.12	0.156	0.229439864551526\\
				0.12	0.16	0.231285178336401\\
				0.12	0.164	0.232950140627347\\
				0.12	0.168	0.234436168510096\\
				0.12	0.172	0.235744511153676\\
				0.12	0.176	0.236876256366617\\
				0.12	0.18	0.237832336196342\\
				0.12	0.184	0.238613531646068\\
				0.12	0.188	0.239220476570322\\
				0.12	0.192	0.239653660798561\\
				0.12	0.196	0.239913432525843\\
				0.12	0.2	0.24\\
				0.12	0.352	0.24\\
				0.12	0.356	0.24\\
			};
			\addplot3 [color=red, line width=2.0pt]
			table[row sep=crcr] {%
				0.15	0.016	-0.00\\
				0.15	0.02	0.0109190871347869\\
				0.15	0.024	0.0232334757463429\\
				0.15	0.028	0.034777095270067\\
				0.15	0.032	0.0456537570606818\\
				0.15	0.036	0.0559409451192309\\
				0.15	0.04	0.0656986780767844\\
				0.15	0.044	0.0749747874493584\\
				0.15	0.048	0.0838082595862093\\
				0.15	0.052	0.0922314555083967\\
				0.15	0.056	0.100271643492857\\
				0.15	0.06	0.10795209141492\\
				0.15	0.064	0.11529286639217\\
				0.15	0.068	0.122311433623266\\
				0.15	0.072	0.129023113718484\\
				0.15	0.076	0.135441437965169\\
				0.15	0.08	0.141578428466209\\
				0.15	0.084	0.147444821979382\\
				0.15	0.088	0.15305025088859\\
				0.15	0.092	0.158403391063961\\
				0.15	0.096	0.163512083815357\\
				0.15	0.1	0.16838343733774\\
				0.15	0.104	0.173023911747785\\
				0.15	0.108	0.177439390862453\\
				0.15	0.112	0.181635243168039\\
				0.15	0.116	0.185616373901821\\
				0.15	0.12	0.189387269769208\\
				0.15	0.124	0.192952037513357\\
				0.15	0.128	0.196314437317348\\
				0.15	0.132	0.199477911833918\\
				0.15	0.136	0.202445611491909\\
				0.15	0.14	0.205220416612647\\
				0.15	0.144	0.207804956776648\\
				0.15	0.148	0.21020162780612\\
				0.15	0.152	0.21241260666789\\
				0.15	0.156	0.214439864551526\\
				0.15	0.16	0.216285178336401\\
				0.15	0.164	0.217950140627347\\
				0.15	0.168	0.219436168510096\\
				0.15	0.172	0.220744511153676\\
				0.15	0.176	0.221876256366617\\
				0.15	0.18	0.222832336196343\\
				0.15	0.184	0.223613531646068\\
				0.15	0.188	0.224220476570322\\
				0.15	0.192	0.224653660798561\\
				0.15	0.196	0.224913432525843\\
				0.15	0.2	0.225\\
				0.15	0.352	0.225\\
				0.15	0.356	0.225\\
			};
			\addplot3 [color=red, line width=2.0pt]
			table[row sep=crcr] {%
				0.18	0.021	0.00\\
				0.18	0.024	0.00823347574634288\\
				0.18	0.028	0.019777095270067\\
				0.18	0.032	0.0306537570606818\\
				0.18	0.036	0.0409409451192309\\
				0.18	0.04	0.0506986780767844\\
				0.18	0.044	0.0599747874493584\\
				0.18	0.048	0.0688082595862093\\
				0.18	0.052	0.0772314555083967\\
				0.18	0.056	0.0852716434928568\\
				0.18	0.06	0.0929520914149201\\
				0.18	0.064	0.10029286639217\\
				0.18	0.068	0.107311433623266\\
				0.18	0.072	0.114023113718484\\
				0.18	0.076	0.120441437965169\\
				0.18	0.08	0.126578428466209\\
				0.18	0.084	0.132444821979382\\
				0.18	0.088	0.13805025088859\\
				0.18	0.092	0.143403391063961\\
				0.18	0.096	0.148512083815357\\
				0.18	0.1	0.15338343733774\\
				0.18	0.104	0.158023911747785\\
				0.18	0.108	0.162439390862453\\
				0.18	0.112	0.166635243168039\\
				0.18	0.116	0.170616373901821\\
				0.18	0.12	0.174387269769208\\
				0.18	0.124	0.177952037513357\\
				0.18	0.128	0.181314437317348\\
				0.18	0.132	0.184477911833918\\
				0.18	0.136	0.187445611491909\\
				0.18	0.14	0.190220416612647\\
				0.18	0.144	0.192804956776648\\
				0.18	0.148	0.19520162780612\\
				0.18	0.152	0.19741260666789\\
				0.18	0.156	0.199439864551526\\
				0.18	0.16	0.201285178336401\\
				0.18	0.164	0.202950140627347\\
				0.18	0.168	0.204436168510096\\
				0.18	0.172	0.205744511153676\\
				0.18	0.176	0.206876256366617\\
				0.18	0.18	0.207832336196343\\
				0.18	0.184	0.208613531646068\\
				0.18	0.188	0.209220476570322\\
				0.18	0.192	0.209653660798561\\
				0.18	0.196	0.209913432525843\\
				0.18	0.2	0.21\\
				0.18	0.352	0.21\\
				0.18	0.356	0.21\\
			};
			\addplot3 [color=red, line width=2.0pt]
			table[row sep=crcr] {%
				0.21	0.0265	-0.002\\
				0.21	0.028	0.00477709527006701\\
				0.21	0.032	0.0156537570606818\\
				0.21	0.036	0.0259409451192309\\
				0.21	0.04	0.0356986780767844\\
				0.21	0.044	0.0449747874493584\\
				0.21	0.048	0.0538082595862093\\
				0.21	0.052	0.0622314555083967\\
				0.21	0.056	0.0702716434928568\\
				0.21	0.06	0.0779520914149201\\
				0.21	0.064	0.0852928663921698\\
				0.21	0.068	0.0923114336232659\\
				0.21	0.072	0.0990231137184839\\
				0.21	0.076	0.105441437965169\\
				0.21	0.08	0.111578428466209\\
				0.21	0.084	0.117444821979382\\
				0.21	0.088	0.12305025088859\\
				0.21	0.092	0.128403391063961\\
				0.21	0.096	0.133512083815357\\
				0.21	0.1	0.13838343733774\\
				0.21	0.104	0.143023911747785\\
				0.21	0.108	0.147439390862453\\
				0.21	0.112	0.151635243168039\\
				0.21	0.116	0.155616373901821\\
				0.21	0.12	0.159387269769208\\
				0.21	0.124	0.162952037513357\\
				0.21	0.128	0.166314437317348\\
				0.21	0.132	0.169477911833918\\
				0.21	0.136	0.172445611491909\\
				0.21	0.14	0.175220416612647\\
				0.21	0.144	0.177804956776648\\
				0.21	0.148	0.18020162780612\\
				0.21	0.152	0.18241260666789\\
				0.21	0.156	0.184439864551526\\
				0.21	0.16	0.186285178336401\\
				0.21	0.164	0.187950140627347\\
				0.21	0.168	0.189436168510096\\
				0.21	0.172	0.190744511153676\\
				0.21	0.176	0.191876256366617\\
				0.21	0.18	0.192832336196343\\
				0.21	0.184	0.193613531646068\\
				0.21	0.188	0.194220476570322\\
				0.21	0.192	0.194653660798561\\
				0.21	0.196	0.194913432525843\\
				0.21	0.2	0.195\\
				0.21	0.352	0.195\\
				0.21	0.356	0.195\\
			};
			\addplot3 [color=red, line width=2.0pt]
			table[row sep=crcr] {%
				0.24	0.031	0.0005\\
				0.24	0.032	0.000653757060681825\\
				0.24	0.036	0.0109409451192309\\
				0.24	0.04	0.0206986780767844\\
				0.24	0.044	0.0299747874493584\\
				0.24	0.048	0.0388082595862093\\
				0.24	0.052	0.0472314555083967\\
				0.24	0.056	0.0552716434928568\\
				0.24	0.06	0.0629520914149201\\
				0.24	0.064	0.0702928663921698\\
				0.24	0.068	0.0773114336232658\\
				0.24	0.072	0.0840231137184839\\
				0.24	0.076	0.0904414379651692\\
				0.24	0.08	0.0965784284662087\\
				0.24	0.084	0.102444821979382\\
				0.24	0.088	0.10805025088859\\
				0.24	0.092	0.113403391063961\\
				0.24	0.096	0.118512083815357\\
				0.24	0.1	0.12338343733774\\
				0.24	0.104	0.128023911747785\\
				0.24	0.108	0.132439390862453\\
				0.24	0.112	0.136635243168039\\
				0.24	0.116	0.140616373901821\\
				0.24	0.12	0.144387269769208\\
				0.24	0.124	0.147952037513357\\
				0.24	0.128	0.151314437317348\\
				0.24	0.132	0.154477911833918\\
				0.24	0.136	0.157445611491909\\
				0.24	0.14	0.160220416612647\\
				0.24	0.144	0.162804956776648\\
				0.24	0.148	0.16520162780612\\
				0.24	0.152	0.16741260666789\\
				0.24	0.156	0.169439864551526\\
				0.24	0.16	0.171285178336401\\
				0.24	0.164	0.172950140627347\\
				0.24	0.168	0.174436168510096\\
				0.24	0.172	0.175744511153676\\
				0.24	0.176	0.176876256366617\\
				0.24	0.18	0.177832336196343\\
				0.24	0.184	0.178613531646068\\
				0.24	0.188	0.179220476570322\\
				0.24	0.192	0.179653660798561\\
				0.24	0.196	0.179913432525843\\
				0.24	0.2	0.18\\
				0.24	0.352	0.18\\
				0.24	0.356	0.18\\
			};
			\addplot3 [color=red, line width=2.0pt]
			table[row sep=crcr] {%
				0.27	0.039	0.00155905488076916\\
				0.27	0.04	0.00569867807678435\\
				0.27	0.044	0.0149747874493584\\
				0.27	0.048	0.0238082595862093\\
				0.27	0.052	0.0322314555083967\\
				0.27	0.056	0.0402716434928568\\
				0.27	0.06	0.0479520914149201\\
				0.27	0.064	0.0552928663921698\\
				0.27	0.068	0.0623114336232658\\
				0.27	0.072	0.0690231137184839\\
				0.27	0.076	0.0754414379651692\\
				0.27	0.08	0.0815784284662087\\
				0.27	0.084	0.0874448219793821\\
				0.27	0.088	0.0930502508885897\\
				0.27	0.092	0.0984033910639613\\
				0.27	0.096	0.103512083815357\\
				0.27	0.1	0.10838343733774\\
				0.27	0.104	0.113023911747785\\
				0.27	0.108	0.117439390862453\\
				0.27	0.112	0.121635243168039\\
				0.27	0.116	0.125616373901821\\
				0.27	0.12	0.129387269769208\\
				0.27	0.124	0.132952037513357\\
				0.27	0.128	0.136314437317348\\
				0.27	0.132	0.139477911833918\\
				0.27	0.136	0.142445611491909\\
				0.27	0.14	0.145220416612647\\
				0.27	0.144	0.147804956776648\\
				0.27	0.148	0.15020162780612\\
				0.27	0.152	0.15241260666789\\
				0.27	0.156	0.154439864551526\\
				0.27	0.16	0.156285178336401\\
				0.27	0.164	0.157950140627347\\
				0.27	0.168	0.159436168510096\\
				0.27	0.172	0.160744511153676\\
				0.27	0.176	0.161876256366617\\
				0.27	0.18	0.162832336196342\\
				0.27	0.184	0.163613531646068\\
				0.27	0.188	0.164220476570322\\
				0.27	0.192	0.164653660798561\\
				0.27	0.196	0.164913432525843\\
				0.27	0.2	0.165\\
				0.27	0.352	0.165\\
				0.27	0.356	0.165\\
			};
			\addplot3 [color=red, line width=2.0pt]
			table[row sep=crcr] {%
				0.3	0.046	0.00300132192321563\\
				0.3	0.048	0.00880825958620931\\
				0.3	0.052	0.0172314555083967\\
				0.3	0.056	0.0252716434928568\\
				0.3	0.06	0.0329520914149201\\
				0.3	0.064	0.0402928663921698\\
				0.3	0.068	0.0473114336232658\\
				0.3	0.072	0.0540231137184839\\
				0.3	0.076	0.0604414379651692\\
				0.3	0.08	0.0665784284662087\\
				0.3	0.084	0.0724448219793821\\
				0.3	0.088	0.0780502508885897\\
				0.3	0.092	0.0834033910639613\\
				0.3	0.096	0.0885120838153567\\
				0.3	0.1	0.0933834373377398\\
				0.3	0.104	0.0980239117477854\\
				0.3	0.108	0.102439390862453\\
				0.3	0.112	0.106635243168039\\
				0.3	0.116	0.110616373901821\\
				0.3	0.12	0.114387269769208\\
				0.3	0.124	0.117952037513357\\
				0.3	0.128	0.121314437317348\\
				0.3	0.132	0.124477911833918\\
				0.3	0.136	0.127445611491909\\
				0.3	0.14	0.130220416612647\\
				0.3	0.144	0.132804956776648\\
				0.3	0.148	0.13520162780612\\
				0.3	0.152	0.13741260666789\\
				0.3	0.156	0.139439864551526\\
				0.3	0.16	0.141285178336401\\
				0.3	0.164	0.142950140627347\\
				0.3	0.168	0.144436168510096\\
				0.3	0.172	0.145744511153676\\
				0.3	0.176	0.146876256366617\\
				0.3	0.18	0.147832336196343\\
				0.3	0.184	0.148613531646068\\
				0.3	0.188	0.149220476570322\\
				0.3	0.192	0.149653660798561\\
				0.3	0.196	0.149913432525843\\
				0.3	0.2	0.15\\
				0.3	0.352	0.15\\
				0.3	0.356	0.15\\
			};
			\addplot3 [color=red, line width=2.0pt]
			table[row sep=crcr] {%
				0.33	0.0515	0.00109174041379067\\
				0.33	0.052	0.00223145550839676\\
				0.33	0.056	0.0102716434928568\\
				0.33	0.06	0.0179520914149201\\
				0.33	0.064	0.0252928663921699\\
				0.33	0.068	0.0323114336232659\\
				0.33	0.072	0.0390231137184839\\
				0.33	0.076	0.0454414379651692\\
				0.33	0.08	0.0515784284662087\\
				0.33	0.084	0.0574448219793821\\
				0.33	0.088	0.0630502508885897\\
				0.33	0.092	0.0684033910639613\\
				0.33	0.096	0.0735120838153567\\
				0.33	0.1	0.0783834373377399\\
				0.33	0.104	0.0830239117477854\\
				0.33	0.108	0.0874393908624527\\
				0.33	0.112	0.0916352431680393\\
				0.33	0.116	0.0956163739018214\\
				0.33	0.12	0.0993872697692078\\
				0.33	0.124	0.102952037513357\\
				0.33	0.128	0.106314437317348\\
				0.33	0.132	0.109477911833918\\
				0.33	0.136	0.112445611491909\\
				0.33	0.14	0.115220416612647\\
				0.33	0.144	0.117804956776648\\
				0.33	0.148	0.12020162780612\\
				0.33	0.152	0.12241260666789\\
				0.33	0.156	0.124439864551526\\
				0.33	0.16	0.126285178336401\\
				0.33	0.164	0.127950140627347\\
				0.33	0.168	0.129436168510096\\
				0.33	0.172	0.130744511153676\\
				0.33	0.176	0.131876256366617\\
				0.33	0.18	0.132832336196343\\
				0.33	0.184	0.133613531646068\\
				0.33	0.188	0.134220476570322\\
				0.33	0.192	0.134653660798561\\
				0.33	0.196	0.134913432525843\\
				0.33	0.2	0.135\\
				0.33	0.352	0.135\\
				0.33	0.356	0.135\\
			};
			\addplot3 [color=red, line width=2.0pt]
			table[row sep=crcr] {%
				0.36	0.059	0.0012835650714321\\
				0.36	0.06	0.00295209141492012\\
				0.36	0.064	0.0102928663921699\\
				0.36	0.068	0.0173114336232659\\
				0.36	0.072	0.0240231137184839\\
				0.36	0.076	0.0304414379651692\\
				0.36	0.08	0.0365784284662087\\
				0.36	0.084	0.0424448219793821\\
				0.36	0.088	0.0480502508885897\\
				0.36	0.092	0.0534033910639613\\
				0.36	0.096	0.0585120838153567\\
				0.36	0.1	0.0633834373377399\\
				0.36	0.104	0.0680239117477854\\
				0.36	0.108	0.0724393908624527\\
				0.36	0.112	0.0766352431680393\\
				0.36	0.116	0.0806163739018214\\
				0.36	0.12	0.0843872697692078\\
				0.36	0.124	0.087952037513357\\
				0.36	0.128	0.0913144373173482\\
				0.36	0.132	0.0944779118339182\\
				0.36	0.136	0.097445611491909\\
				0.36	0.14	0.100220416612647\\
				0.36	0.144	0.102804956776648\\
				0.36	0.148	0.10520162780612\\
				0.36	0.152	0.10741260666789\\
				0.36	0.156	0.109439864551526\\
				0.36	0.16	0.111285178336401\\
				0.36	0.164	0.112950140627347\\
				0.36	0.168	0.114436168510096\\
				0.36	0.172	0.115744511153676\\
				0.36	0.176	0.116876256366617\\
				0.36	0.18	0.117832336196343\\
				0.36	0.184	0.118613531646068\\
				0.36	0.188	0.119220476570322\\
				0.36	0.192	0.119653660798561\\
				0.36	0.196	0.119913432525843\\
				0.36	0.2	0.12\\
				0.36	0.352	0.12\\
				0.36	0.356	0.12\\
			};
			\addplot3 [color=red, line width=2.0pt]
			table[row sep=crcr] {%
				0.39	0.067	0.00170713360783016\\
				0.39	0.068	0.00231143362326585\\
				0.39	0.072	0.00902311371848392\\
				0.39	0.076	0.0154414379651692\\
				0.39	0.08	0.0215784284662087\\
				0.39	0.084	0.0274448219793821\\
				0.39	0.088	0.0330502508885897\\
				0.39	0.092	0.0384033910639613\\
				0.39	0.096	0.0435120838153567\\
				0.39	0.1	0.0483834373377398\\
				0.39	0.104	0.0530239117477853\\
				0.39	0.108	0.0574393908624527\\
				0.39	0.112	0.0616352431680392\\
				0.39	0.116	0.0656163739018214\\
				0.39	0.12	0.0693872697692078\\
				0.39	0.124	0.072952037513357\\
				0.39	0.128	0.0763144373173482\\
				0.39	0.132	0.0794779118339182\\
				0.39	0.136	0.082445611491909\\
				0.39	0.14	0.0852204166126473\\
				0.39	0.144	0.0878049567766477\\
				0.39	0.148	0.0902016278061197\\
				0.39	0.152	0.0924126066678898\\
				0.39	0.156	0.0944398645515262\\
				0.39	0.16	0.0962851783364006\\
				0.39	0.164	0.0979501406273472\\
				0.39	0.168	0.0994361685100961\\
				0.39	0.172	0.100744511153676\\
				0.39	0.176	0.101876256366617\\
				0.39	0.18	0.102832336196343\\
				0.39	0.184	0.103613531646068\\
				0.39	0.188	0.104220476570322\\
				0.39	0.192	0.104653660798561\\
				0.39	0.196	0.104913432525843\\
				0.39	0.2	0.105\\
				0.39	0.352	0.105\\
				0.39	0.356	0.105\\
			};
			\addplot3 [color=red, line width=2.0pt]
			table[row sep=crcr] {%
				0.42	0.075	0.00027688628151606\\
				0.42	0.076	0.000441437965169222\\
				0.42	0.08	0.00657842846620871\\
				0.42	0.084	0.0124448219793821\\
				0.42	0.088	0.0180502508885897\\
				0.42	0.092	0.0234033910639613\\
				0.42	0.096	0.0285120838153567\\
				0.42	0.1	0.0333834373377399\\
				0.42	0.104	0.0380239117477854\\
				0.42	0.108	0.0424393908624527\\
				0.42	0.112	0.0466352431680393\\
				0.42	0.116	0.0506163739018214\\
				0.42	0.12	0.0543872697692078\\
				0.42	0.124	0.057952037513357\\
				0.42	0.128	0.0613144373173482\\
				0.42	0.132	0.0644779118339182\\
				0.42	0.136	0.067445611491909\\
				0.42	0.14	0.0702204166126473\\
				0.42	0.144	0.0728049567766477\\
				0.42	0.148	0.0752016278061198\\
				0.42	0.152	0.0774126066678899\\
				0.42	0.156	0.0794398645515262\\
				0.42	0.16	0.0812851783364006\\
				0.42	0.164	0.0829501406273472\\
				0.42	0.168	0.0844361685100961\\
				0.42	0.172	0.085744511153676\\
				0.42	0.176	0.0868762563666167\\
				0.42	0.18	0.0878323361963425\\
				0.42	0.184	0.0886135316460678\\
				0.42	0.188	0.0892204765703219\\
				0.42	0.192	0.0896536607985605\\
				0.42	0.196	0.089913432525843\\
				0.42	0.2	0.09\\
				0.42	0.352	0.09\\
				0.42	0.356	0.09\\
			};
			\addplot3 [color=red, line width=2.0pt]
			table[row sep=crcr] {%
				0.45	0.084	-0.0025551780206179\\
				0.45	0.088	0.0030502508885897\\
				0.45	0.092	0.00840339106396131\\
				0.45	0.096	0.0135120838153567\\
				0.45	0.1	0.0183834373377398\\
				0.45	0.104	0.0230239117477854\\
				0.45	0.108	0.0274393908624527\\
				0.45	0.112	0.0316352431680392\\
				0.45	0.116	0.0356163739018214\\
				0.45	0.12	0.0393872697692078\\
				0.45	0.124	0.042952037513357\\
				0.45	0.128	0.0463144373173482\\
				0.45	0.132	0.0494779118339182\\
				0.45	0.136	0.052445611491909\\
				0.45	0.14	0.0552204166126473\\
				0.45	0.144	0.0578049567766477\\
				0.45	0.148	0.0602016278061198\\
				0.45	0.152	0.0624126066678899\\
				0.45	0.156	0.0644398645515262\\
				0.45	0.16	0.0662851783364006\\
				0.45	0.164	0.0679501406273472\\
				0.45	0.168	0.0694361685100961\\
				0.45	0.172	0.0707445111536759\\
				0.45	0.176	0.0718762563666167\\
				0.45	0.18	0.0728323361963425\\
				0.45	0.184	0.0736135316460677\\
				0.45	0.188	0.0742204765703219\\
				0.45	0.192	0.0746536607985605\\
				0.45	0.196	0.074913432525843\\
				0.45	0.2	0.075\\
				0.45	0.352	0.075\\
				0.45	0.356	0.075\\
			};
			\addplot3 [color=red, line width=2.0pt]
			table[row sep=crcr] {%
				%
				0.48	0.096	-0.00148791618464333\\
				0.48	0.1	0.00338343733773984\\
				0.48	0.104	0.00802391174778534\\
				0.48	0.108	0.0124393908624527\\
				0.48	0.112	0.0166352431680392\\
				0.48	0.116	0.0206163739018214\\
				0.48	0.12	0.0243872697692078\\
				0.48	0.124	0.027952037513357\\
				0.48	0.128	0.0313144373173482\\
				0.48	0.132	0.0344779118339182\\
				0.48	0.136	0.037445611491909\\
				0.48	0.14	0.0402204166126473\\
				0.48	0.144	0.0428049567766476\\
				0.48	0.148	0.0452016278061197\\
				0.48	0.152	0.0474126066678898\\
				0.48	0.156	0.0494398645515261\\
				0.48	0.16	0.0512851783364006\\
				0.48	0.164	0.0529501406273472\\
				0.48	0.168	0.054436168510096\\
				0.48	0.172	0.0557445111536759\\
				0.48	0.176	0.0568762563666167\\
				0.48	0.18	0.0578323361963425\\
				0.48	0.184	0.0586135316460677\\
				0.48	0.188	0.0592204765703219\\
				0.48	0.192	0.0596536607985605\\
				0.48	0.196	0.059913432525843\\
				0.48	0.2	0.06\\
				0.48	0.352	0.06\\
				0.48	0.356	0.06\\
			};
			\addplot3 [color=red, line width=2.0pt]
			table[row sep=crcr] {%
			%
				0.51	0.108	-0.00256060913754732\\
				0.51	0.112	0.00163524316803922\\
				0.51	0.116	0.00561637390182136\\
				0.51	0.12	0.00938726976920778\\
				0.51	0.124	0.012952037513357\\
				0.51	0.128	0.0163144373173482\\
				0.51	0.132	0.0194779118339182\\
				0.51	0.136	0.022445611491909\\
				0.51	0.14	0.0252204166126473\\
				0.51	0.144	0.0278049567766476\\
				0.51	0.148	0.0302016278061197\\
				0.51	0.152	0.0324126066678898\\
				0.51	0.156	0.0344398645515261\\
				0.51	0.16	0.0362851783364005\\
				0.51	0.164	0.0379501406273472\\
				0.51	0.168	0.039436168510096\\
				0.51	0.172	0.0407445111536759\\
				0.51	0.176	0.0418762563666167\\
				0.51	0.18	0.0428323361963425\\
				0.51	0.184	0.0436135316460677\\
				0.51	0.188	0.0442204765703219\\
				0.51	0.192	0.0446536607985605\\
				0.51	0.196	0.0449134325258429\\
				0.51	0.2	0.045\\
				0.51	0.352	0.045\\
				0.51	0.356	0.045\\
			};
			\addplot3 [color=red, line width=2.0pt]
			table[row sep=crcr] {%
			%
				0.54	0.124	-0.00204796248664301\\
				0.54	0.128	0.00131443731734815\\
				0.54	0.132	0.00447791183391819\\
				0.54	0.136	0.00744561149190895\\
				0.54	0.14	0.0102204166126473\\
				0.54	0.144	0.0128049567766476\\
				0.54	0.148	0.0152016278061197\\
				0.54	0.152	0.0174126066678898\\
				0.54	0.156	0.0194398645515261\\
				0.54	0.16	0.0212851783364005\\
				0.54	0.164	0.0229501406273472\\
				0.54	0.168	0.024436168510096\\
				0.54	0.172	0.0257445111536759\\
				0.54	0.176	0.0268762563666167\\
				0.54	0.18	0.0278323361963425\\
				0.54	0.184	0.0286135316460677\\
				0.54	0.188	0.0292204765703219\\
				0.54	0.192	0.0296536607985605\\
				0.54	0.196	0.0299134325258429\\
				0.54	0.2	0.03\\
				0.54	0.352	0.03\\
				0.54	0.356	0.03\\
			};
			\addplot3 [color=red, line width=2.0pt]
			table[row sep=crcr] {%
			%
			%
				0.57	0.144	-0.00219504322335232\\
				0.57	0.148	0.000201627806119775\\
				0.57	0.152	0.00241260666788987\\
				0.57	0.156	0.00443986455152617\\
				0.57	0.16	0.00628517833640059\\
				0.57	0.164	0.00795014062734726\\
				0.57	0.168	0.00943616851009608\\
				0.57	0.172	0.010744511153676\\
				0.57	0.176	0.0118762563666167\\
				0.57	0.18	0.0128323361963425\\
				0.57	0.184	0.0136135316460678\\
				0.57	0.188	0.0142204765703219\\
				0.57	0.192	0.0146536607985605\\
				0.57	0.196	0.014913432525843\\
				0.57	0.2	0.015\\
				0.57	0.352	0.015\\
				0.57	0.356	0.015\\
			};
			\addplot3 [color=red, line width=2.0pt]
			table[row sep=crcr] {%
				0.6	0	nan\\
				0.6	0.004	-0.275762059231227\\
				0.6	0.008	-0.257567837237454\\
				0.6	0.012	-0.241682442650527\\
				0.6	0.016	-0.227312343275276\\
				0.6	0.02	-0.214080912865213\\
				0.6	0.024	-0.201766524253657\\
				0.6	0.028	-0.190222904729933\\
				0.6	0.032	-0.179346242939318\\
				0.6	0.036	-0.169059054880769\\
				0.6	0.04	-0.159301321923216\\
				0.6	0.044	-0.150025212550642\\
				0.6	0.048	-0.141191740413791\\
				0.6	0.052	-0.132768544491603\\
				0.6	0.056	-0.124728356507143\\
				0.6	0.06	-0.11704790858508\\
				0.6	0.064	-0.10970713360783\\
				0.6	0.068	-0.102688566376734\\
				0.6	0.072	-0.0959768862815161\\
				0.6	0.076	-0.0895585620348308\\
				0.6	0.08	-0.0834215715337913\\
				0.6	0.084	-0.0775551780206179\\
				0.6	0.088	-0.0719497491114103\\
				0.6	0.092	-0.0665966089360387\\
				0.6	0.096	-0.0614879161846433\\
				0.6	0.1	-0.0566165626622601\\
				0.6	0.104	-0.0519760882522146\\
				0.6	0.108	-0.0475606091375473\\
				0.6	0.112	-0.0433647568319607\\
				0.6	0.116	-0.0393836260981786\\
				0.6	0.12	-0.0356127302307922\\
				0.6	0.124	-0.032047962486643\\
				0.6	0.128	-0.0286855626826518\\
				0.6	0.132	-0.0255220881660818\\
				0.6	0.136	-0.022554388508091\\
				0.6	0.14	-0.0197795833873527\\
				0.6	0.144	-0.0171950432233523\\
				0.6	0.148	-0.0147983721938802\\
				0.6	0.152	-0.0125873933321101\\
				0.6	0.156	-0.0105601354484738\\
				0.6	0.16	-0.00871482166359942\\
				0.6	0.164	-0.00704985937265276\\
				0.6	0.168	-0.00556383148990394\\
				0.6	0.172	-0.00425548884632405\\
				0.6	0.176	-0.00312374363338328\\
				0.6	0.18	-0.00216766380365748\\
				0.6	0.184	-0.00138646835393226\\
				0.6	0.188	-0.000779523429678108\\
				0.6	0.192	-0.000346339201439472\\
				0.6	0.196	-8.65674741570288e-05\\
				0.6	0.2	0\\
				0.6	0.352	0\\
				0.6	0.356	0\\
			};
	
			%
%
		[%
		width=6.028in,
		height=4.754in,
		at={(1.011in,0.642in)},
		scale only axis,
		unbounded coords=jump,
		xmin=0,
		xmax=0.6,
		tick align=outside,
		xlabel style={font=\color{white!15!black}},
		ymin=0,
		ymax=0.35,
		ylabel style={font=\color{white!15!black}},
		zmin=-0,
		zmax=0.35,
		zlabel style={font=\color{white!15!black}},
		view={43}{14},
		axis background/.style={fill=white},
		axis x line*=bottom,
		axis y line*=left,
		axis z line*=left,
		xmajorgrids,
		ymajorgrids,
		zmajorgrids,
		z tick label style={/pgf/number format/.cd,%
			scaled z ticks = false,
			set decimal separator={.},
			fixed},
		y tick label style={/pgf/number format/.cd,%
			scaled z ticks = false,
			set decimal separator={.},
			fixed}
		]
		
	\end{axis}
	\end{tikzpicture}%
		\caption{Boundary of the capacity-distortion region $\Cc\Dc$ for   Example 3 in Subsection~\ref{ex:Binary1}.}
		\vspace{-0.5cm}
		\label{fig:binary}
	\end{figure}
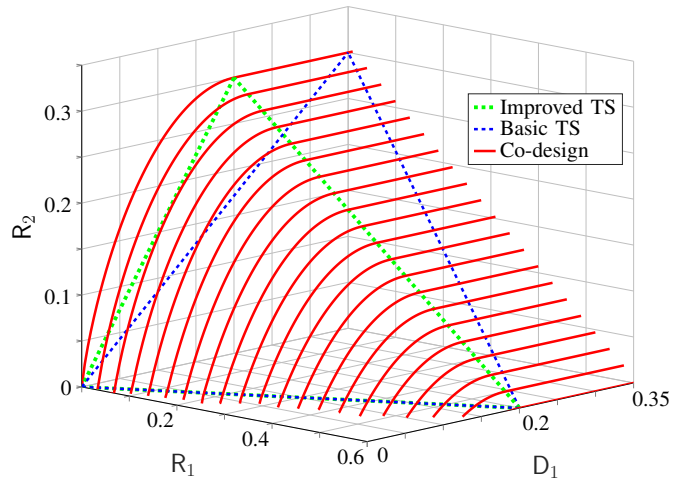
	Fig.~\ref{fig:binary} shows in red colour the  boundary  of the projection of the tradeoff region $\CDc$ of this example onto the $3$-dimensional plane $(\R_1,  \R_2,  \D_1)$, for parameters $\gamma=0.5$ and $q=0.6$.
	The tradeoff with $\D_2$ is omitted for simplicity and because $\D_2$ is a scaled version of $\D_1$. 	
	The figure also shows the boundaries of the basic and improved TS baseline schemes. We again notice a significant gain for an optimal co-design scheme compared to the TS baseline schemes.

	%

	So far,  there was no tradeoff between the two distortion constraints $\D_1$ and $\D_2$. This is different in the next example,  which otherwise is very similar.

	\subsubsection{Example 4: Binary BC with Multiplicative Bernoulli States and Flipping Inputs}\label{ex:BinaryFlipped}
	\newcommand{\bp}{{\bar{p}}}
	\newcommand{\bq}{{\bar{p_s}}}
	\newcommand{\bg}{{\bar{\gamma}}}
	Reconsider the same state pmf $P_{S_1S_2}$ as in the previous example,  but  now {an} SDMBC with a transition law that flips the input for receiver 2:
	\begin{IEEEeqnarray}{rCl}\label{model:flipped}
		Y_1&=& S_1 X,  \qquad Y_2 = S_2(1-X).
	\end{IEEEeqnarray}
	As in the previous example we consider output feedback $Z=(Y_1, Y_2)$. 
	\begin{corollary}\label{corollary:binaryBC2}		
		The capacity-distortion region $\CDc$ of the binary SDMBC with flipping inputs in \eqref{model:flipped} and output feedback is the set of all tuples $(\R_0,\R_1, \R_2, \D_1, \D_2)$ satisfying				\begin{subequations}\label{eq:con2}
			\begin{IEEEeqnarray}{rCl}
				\R_1 &\leq& q H_\textnormal{b}(p) r, \label{r1} \\
				\R_0+	\R_2 &\leq&\gamma  q H_\textnormal{b}(p) (1-r), \label{r2} \\
				\D_1&\geq&p   \min \{ q (1-\gamma), (1-q)\}, \label{d1}\\
				\D_2&\geq&(1-p) q \min \{  \gamma,  1-\gamma \},\label{d2}
			\end{IEEEeqnarray}
		\end{subequations}
		for some choice of the parameters $r,  p\in[0, 1]$.

	\end{corollary}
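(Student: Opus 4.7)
The plan is to reduce the claim to Theorem~\ref{Th:physically_BC} after first verifying that this flipped-input channel is physically degraded, and then to explicitly compute the two mutual-information expressions and the two symbol-by-symbol distortions using Lemma~\ref{BC:lemma:Shat}.

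The first step is the physical-degradedness check, which is the main obstacle since it is not obvious from the definition. The key observation is that the joint state pmf in \eqref{ew:p_ss} forces $S_2=0$ whenever $S_1=0$, so in that case $Y_2=S_2(1-X)=0$ deterministically regardless of $X$. When $S_1=1$, one has $Y_1=X$, so $(S_1,Y_1)$ determines $X$, and therefore $Y_2=S_2(1-X)$ is a deterministic function of $(S_1,Y_1,S_2)$. Combining both cases, there exists a conditional law $P_{S_2 Y_2\mid S_1 Y_1}$ such that $X\markov (S_1,Y_1)\markov (S_2,Y_2)$, so Theorem~\ref{Th:physically_BC} applies.

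The second step is to evaluate the rate bounds of Theorem~\ref{Th:physically_BC}. Since $Y_1=S_1 X$, conditioning on $S_1=0$ yields a degenerate output, while conditioning on $S_1=1$ yields $Y_1=X$, so $I(X;Y_1\mid U,S_1)=P_{S_1}(1)H(X\mid U)=qH(X\mid U)$. Analogously, $Y_2=S_2(1-X)$ is degenerate given $S_2=0$ and equals $1-X$ given $S_2=1$, so $I(U;Y_2\mid S_2)=P_{S_2}(1)I(U;X)=q\gamma\bigl(H(X)-H(X\mid U)\bigr)$. Introducing the parameters $p:=P_X(0)$ and $r:=H(X\mid U)/H(X)\in[0,1]$ (with $r$ arbitrary when $H(X)=0$), the two bounds become exactly \eqref{r1} and \eqref{r2}.

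The third step is to evaluate the distortion constraints in \eqref{eq:dis:physicallyBC} using the optimal symbol-by-symbol estimators from Lemma~\ref{BC:lemma:Shat} applied to the feedback $Z=(Y_1,Y_2)$. When $X=1$, we have $Y_1=S_1$ and $Y_2=0$, so $S_1$ is acquired perfectly (contributing $c_1(1)=0$), while $S_2$ must be estimated from $S_1$ alone; using $P_{S_2\mid S_1}$ one finds $c_2(1)=q\min\{\gamma,1-\gamma\}$. When $X=0$, we have $Y_1=0$ and $Y_2=S_2$, so $S_2$ is acquired perfectly ($c_2(0)=0$), while $S_1$ is estimated from $S_2$; since $P_{S_1S_2}(0,1)=0$ the case $S_2=1$ reveals $S_1=1$, and the case $S_2=0$ contributes $\min\{P_{S_1S_2}(0,0),P_{S_1S_2}(1,0)\}=\min\{1-q,q(1-\gamma)\}$, giving $c_1(0)=\min\{1-q,q(1-\gamma)\}$. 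Averaging with $P_X(0)=p$ yields $\mathbb{E}[d_1]=p\min\{1-q,q(1-\gamma)\}$ and $\mathbb{E}[d_2]=(1-p)q\min\{\gamma,1-\gamma\}$, which match \eqref{d1} and \eqref{d2}. Finally, one observes that $p$ fully parameterizes the input pmf on $\{0,1\}$ and that $r$ ranges over $[0,1]$ as $U$ varies, so the region in Theorem~\ref{Th:physically_BC} is indeed exhausted by the parameterization in \eqref{eq:con2}.
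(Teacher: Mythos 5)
Your proof is correct and follows essentially the same route as the paper: apply Theorem~\ref{Th:physically_BC}, compute the rate bounds exactly as in Corollary~\ref{corollary:binaryBC1}, and evaluate the optimal estimation costs $c_1(\cdot),c_2(\cdot)$ case-by-case using Lemma~\ref{BC:lemma:Shat}. The one small but genuinely useful addition you make is the explicit verification that the flipping-input channel is physically degraded (the paper carries out the analogous check only for Example~3 and treats Example~4 as ``similar''); your observation that $(S_1,Y_1)$ determines $X$ whenever $S_1=1$, while $S_1=0$ forces $S_2=0$ and hence $Y_2=0$, is precisely the factorization $P_{S_1}P_{Y_1|S_1X}P_{S_2Y_2|S_1Y_1}$ needed for Theorem~\ref{Th:physically_BC} to apply, and is worth stating. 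Your expressions for $c_1(0)=\min\{1-q,q(1-\gamma)\}$ and $c_2(1)=q\min\{\gamma,1-\gamma\}$ agree with the paper's, and your final averaging step matches \eqref{d1}--\eqref{d2}.
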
	
	The capacity-distortion region expression above captures the tradeoffs between the two rates through the parameter $r$, between the rates and the distortions through the parameter $p$, and between the two distortions through the parameter $p$.
	
	{Comparing above Corollary~\ref{corollary:binaryBC2} to the previous Corollary~\ref{corollary:binaryBC1}, we remark the identical rate constraints and the relaxed distortion contraints for both $\D_1$ and $\D_2$ in Corollary~\ref{corollary:binaryBC2}.  The reason is that the flipping input allows  the transmitter to perfectly estimate $S_1$ from $(X,Y_1,Y_2)$ not only when  $X=1$ but also when $X=0$ and $Y_2=1$ because they imply that $S_2=1$ and by \eqref{ew:p_ss} also $S_1=1$.}
	
	\begin{IEEEproof}
		{ 
			The proof is similar to the proof of Corollary~\ref{corollary:binaryBC1}, except for the description of the optimal estimators. To determine these optimal estimators, we remark that  only four   input-output relations are possible: $(x, y_1, y_2)\in\{(0,0,0), (0,0,1), (1,0,0), (1,1,0)\}$.  Moreover, when $X=1$, then $Y_1=S_1$, and when $X=0$, then $Y_2=S_2$. In particular, when $X=0$ and $Y_2=1$, then $S_2=1$ and also $S_1=1$, see \eqref{ew:p_ss}.   The optimal estimator for state $S_1$ thus is:
			\begin{equation}\label{eq:estimator1}
				\hat{s}_1^*(x,y_1,y_2)= \begin{cases} y_1, &\textnormal{if } x=1 \\
					1, &\textnormal{if } (x, y_2) = (0,1)\\
					\underset{s}{\arg\min} P_{S_1|S_2} (s|0), &\textnormal{else}  ,
				\end{cases}
			\end{equation}
			and $\hat{s}_1^*(X,Y_1,Y_2)=S_1$ unless $X=0$, $Y_2=0$, and $S_1\neq 	\arg\min_{s} P_{S_1|S_2} (s|0)$, which is equivalent to $(X=0, S_2=0)$ and $S_1\neq 	\arg\min_{s} P_{S_1|S_2} (s|0)$. This
			yields $c_1(1)=0$ and because $S_2$ is independent of $X$:
			\begin{align}
				c_1(0) 
				&= P_{S_2}(0) \min_{s} P_{S_1|S_2} (s|0).
			\end{align}
			Recalling $p=P_X(0)$, we readily obtain the distortion for state $S_1$:
			\begin{align}
				\D_1 = p \min_{s} P_{S_1,S_2} (s,0)= p \min\{q(1-\gamma) , 1-q\}.
			\end{align}
			The optimal estimator and the corresponding distortion for state $S_2$ can be  obtained in a similar way.
		}	
	\end{IEEEproof}	
	
	\subsection{Capacity-Distortion Region for General SDMBCs}
	
	In the remainder of this section, we reconsider general SDMBCs, for which we present bounds on $\CDc$. We start with a simple outer bound.

	\begin{theorem}[Outer Bound on $\CDc$]\label{outer1}
		If $(\R_0,\R_1,  \R_2,  \D_1,  \D_2)$ lies in $\CDc$ for a given SDMBC $P_{Y_1Y_2Z|S_1S_2X}$ with state pmf $P_{S_1S_2}$, then there exist pmfs $P_X, P_{U_1|X}, P_{U_2|X}$ such that the random tuple $(U_k,  X,  S_1, S_2, Y_1, Y_2,  Z)\sim P_{U_k|X}P_X P_{S_1S_2} P_{Y_1Y_2Z\mid S_1S_2X}$ satisfies the  rate constraints 
		\begin{subequations}\label{eq:Rupper}
			\begin{IEEEeqnarray}{rCl}
				\R_0+	\R_k &\leq &I(U_k;Y_k \mid S_k), \quad k=1,2,\label{R1}\\
				\R_0+\R_1+\R_2 &\leq &I(X;Y_1,  Y_2 \mid  S_1,  S_2), \label{R2}
			\end{IEEEeqnarray}
		\end{subequations}
		and the average distortion constraint
		\begin{equation}\label{eq:dis_general_BC}
			\mathbb{E}[d_k(S_k,  \hat{s}_{k}^*(X, Z))]\leq \D_k,  \quad k=1, 2, 
		\end{equation}
		where the function $\hat{s}_{k}^*(\cdot,  \cdot)$ is defined in \eqref{eq:BCestimator}.
	\end{theorem}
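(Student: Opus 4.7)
The plan is to establish the outer bound via standard Fano-based rate bounding, judicious auxiliary random variable identifications, and a time-sharing argument, with care taken to preserve the required factorization of the single-letter joint distribution. Fix any sequence of $(2^{n\R_0}, 2^{n\R_1}, 2^{n\R_2}, n)$ codes achieving the tuple $(\R_0, \R_1, \R_2, \D_1, \D_2)$. By Fano's inequality applied at each decoder, reliable decoding implies $H(W_0, W_k \mid S_k^n, Y_k^n) \leq n \epsilon_n$ for $k=1,2$, and from the joint decoding condition also $H(W_0, W_1, W_2 \mid S_1^n, Y_1^n, S_2^n, Y_2^n) \leq n \epsilon_n$ for some $\epsilon_n \to 0$.

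For the bound on $\R_0 + \R_k$, I start with $n(\R_0 + \R_k) = H(W_0, W_k \mid S_k^n)$ (messages independent of states), bound via Fano by $I(W_0, W_k; Y_k^n \mid S_k^n) + n \epsilon_n$, and apply the chain rule to obtain $\sum_i I(W_0, W_k; Y_{k,i} \mid Y_k^{i-1}, S_k^n)$. Each summand I upper bound by moving $S_k^{\setminus i}$ (meaning $S_k^n$ with the $i$-th entry removed) and $Y_k^{i-1}$ from the conditioning into the mutual information:
\begin{equation*}
I(W_0, W_k; Y_{k,i} \mid Y_k^{i-1}, S_k^n) \leq I(W_0, W_k, Y_k^{i-1}, S_k^{\setminus i}; Y_{k,i} \mid S_{k,i}) = I(U_{k,i}; Y_{k,i} \mid S_{k,i}),
\end{equation*}
where I identify $U_{k,i} := (W_0, W_k, Y_k^{i-1}, S_k^{\setminus i})$.

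For the sum-rate bound, the joint Fano step gives $n(\R_0 + \R_1 + \R_2) \leq I(W_0, W_1, W_2; Y_1^n, Y_2^n \mid S_1^n, S_2^n) + n \epsilon_n$. After chain-rule expansion and the same conditioning trick, each summand becomes at most $I(W_0, W_1, W_2, Y_1^{i-1}, Y_2^{i-1}, Z^{i-1}, S_1^{\setminus i}, S_2^{\setminus i}; Y_{1,i}, Y_{2,i} \mid S_{1,i}, S_{2,i})$, where the feedback $Z^{i-1}$ is deliberately inserted so that $X_i = \phi_i(W_0, W_1, W_2, Z^{i-1})$ is a deterministic function of the conditioning argument. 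Invoking memorylessness of the channel, which gives $(Y_{1,i}, Y_{2,i}) \perp \text{(past signals)} \mid (X_i, S_{1,i}, S_{2,i})$, this mutual information equals $I(X_i; Y_{1,i}, Y_{2,i} \mid S_{1,i}, S_{2,i})$.

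The final step is single-letterization via a time-sharing variable $Q$ uniform on $[1{:}n]$ independent of all else. Setting $U_k := (U_{k,Q}, Q)$, $X := X_Q$, $S_k := S_{k,Q}$, $Y_k := Y_{k,Q}$, $Z := Z_Q$ yields the per-receiver bound $\R_0 + \R_k \leq I(U_k; Y_k \mid S_k)$ directly, while for the sum-rate bound concavity of conditional mutual information in the input distribution (with the channel kernel and state distribution fixed) absorbs $Q$. The distortion constraints follow from Lemma~\ref{BC:lemma:Shat}: since the optimal symbol-by-symbol estimator minimizes distortion, $\D_k \geq \frac{1}{n}\sum_i \mathbb{E}[d_k(S_{k,i}, \hat{s}_k^*(X_i, Z_i))] = \mathbb{E}[d_k(S_k, \hat{s}_k^*(X, Z))]$. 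The most delicate point to verify is that the resulting single-letter joint distribution factorizes as $P_{U_k|X} P_X P_{S_1 S_2} P_{Y_1 Y_2 Z | X S_1 S_2}$; this reduces to the per-time claims $(S_{1,i}, S_{2,i}) \perp (U_{k,i}, X_i)$ and $(Y_{1,i}, Y_{2,i}, Z_i) \perp U_{k,i} \mid (X_i, S_{1,i}, S_{2,i})$, both of which hold by the i.i.d. state assumption and the memoryless channel property, respectively.
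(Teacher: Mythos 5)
Your proof is correct, and it takes a genuinely different route from the paper's. The paper obtains Theorem~\ref{outer1} by a reduction argument: for each $k$ it constructs an \emph{enhanced} SDMBC in which the other receiver is given both output pairs and both states (for instance, Receiver~1 is handed $\tilde Y_1=(Y_1,Y_2)$ and $\tilde S_1=(S_1,S_2)$), notes that the enhanced channel is trivially physically degraded, and then simply invokes the converse already proved for physically degraded SDMBCs (Theorem~\ref{Th:physically_BC}, Appendix~\ref{app:converse_proof}); summing the two resulting single-rate bounds and using $U_k - X - (S_1,S_2,Y_1,Y_2,Z)$ yields the sum-rate bound. Your argument instead runs the Fano + chain rule + single-letterization machinery from scratch: it proves the per-receiver bound with $U_{k,i}=(W_0,W_k,Y_k^{i-1},S_k^{\setminus i})$, proves the sum-rate bound by a joint Fano step (the two per-decoder Fano inequalities combine to give $H(W_0,W_1,W_2\mid S_1^n,S_2^n,Y_1^n,Y_2^n)\leq 2n\epsilon_n$, which still vanishes), and inserts $Z^{i-1}$ so that $X_i$ becomes a deterministic function of the conditioning and memorylessness kills the residual term. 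The paper's route is shorter because it reuses Theorem~\ref{Th:physically_BC}; yours is more self-contained and does not presuppose the degraded-case converse. A few technical points you compress but that do hold: the factorization $P_{U_k|X}P_X P_{S_1S_2}P_{Y_1Y_2Z\mid XS_1S_2}$ at each time $i$ follows from i.i.d.\ states (giving $(S_{1,i},S_{2,i})\perp(U_{k,i},X_i)$, even though $U_{k,i}$ contains future states $S_{k,j}$, $j>i$) and from memorylessness (giving the Markov chain $(Y_{1,i},Y_{2,i},Z_i)-(X_i,S_{1,i},S_{2,i})-U_{k,i}$); and absorbing the time-sharing variable $Q$ into $U_k$ preserves the factorization because the channel kernel and the state law do not depend on $Q$, which is also exactly what justifies the concavity step you use to drop $Q$ in the sum-rate bound.
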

	\begin{IEEEproof}  
		See Appendix~\ref{app:converse_proof}. 
	\end{IEEEproof}
	
	
	Achievability results are easily obtained by combining existing achievability results for SDMBCs with generalized feedback with the optimal estimator in Lemma~\ref{BC:lemma:Shat}. We consider the block-Markov coding scheme in  \cite{shayevitz2012capacity}, which in each block applies Marton coding to transmit fresh data  to the receivers  as well as  compression information describing the inputs and outputs of the previous block. The receivers decode the Marton codewords backwards,  starting from the last block, and using both their channel outputs as well as the previously decoded compression information pertaining to the block. Combining this scheme with the optimal estimator in Lemma~\ref{BC:lemma:Shat} yields the following proposition.   
	\begin{proposition}[Inner Bound on $\CDc$]\label{prp:inner}
		Consider {an} SDMBC   $P_{Y_1Y_2Z|S_1S_2X}$ with state pmf $P_{S_1S_2}$. The capacity-distortion region $\CDc$ includes all tuples $(\R_0,\R_1, \R_2, \D_1, \D_2)$ that satisfy  inequalities \eqref{eq:inner} on top of this page and  the distortion constraints \eqref{eq:dis_general_BC}. 
		where $(U_0,   U_1, U_2, X, S_1, S_2,  Y_1, Y_2, Z,  V_0, V_1, V_2)\sim$ 
		$ P_{U_0U_1U_2X} P_{S_1S_2}P_{Y_1Y_2Z|S_1S_2X}   P_{V_0V_1V_2|U_0U_1U_2Z}$,  
		for some choice of (conditional) pmfs $P_{U_0U_1U_2X}$ and $P_{V_0V_1V_2|U_0U_1U_2Z}$.
	\end{proposition}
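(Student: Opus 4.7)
The plan is to inherit the rate region from Shayevitz and Wigger's inner bound for general BCs with generalized feedback \cite{shayevitz2012capacity} by treating the side information $S_k$ available at Receiver~$k$ as part of its effective channel output, and to bolt onto this scheme the symbol-by-symbol optimal estimator from Lemma~\ref{BC:lemma:Shat} in order to handle the distortion constraint.

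First, I would describe the coding architecture. Fix the joint pmf $P_{U_0U_1U_2X}P_{V_0V_1V_2|U_0U_1U_2Z}$. Communication is carried out over $B$ blocks of length $n$, with rates approaching the target tuple $(\R_0,\R_1,\R_2)$ as $B\to\infty$. In block $b$, fresh messages $(M_0^{(b)},M_1^{(b)},M_2^{(b)})$ are encoded using Marton coding on $(U_0,U_1,U_2,X)$, with $U_0$ carrying the common message together with cooperative resolution information from the previous block. After block $b$, the transmitter uses the past generalized-feedback string $Z^n$ to perform Gray–Wyner compression with side-information $U_0^n,U_1^n,U_2^n$ into three bin indices associated with auxiliaries $V_0,V_1,V_2$; the common index is routed through $U_0$ in block $b+1$ and the private indices are superposed to help the two receivers in their backward decoding. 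Treating $(S_k^n,Y_k^n)$ as the effective output at Receiver~$k$ and repeating the random coding and joint typicality analysis of \cite{shayevitz2012capacity} verbatim yields the rate inequalities in \eqref{eq:inner}; all mutual information terms simply acquire the extra conditioning on the appropriate $S_k$, which is legitimate because $S_k$ is known noncausally at Receiver~$k$ and is i.i.d. and independent of the codebook.

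Next, I would handle the sensing part. By Lemma~\ref{BC:lemma:Shat}, for any fixed code the per-block distortion is minimized by setting $\hat S_{k,i}=\hat s_k^*(X_i,Z_i)$, which depends only on the marginal law $P_{S_1S_2}P_{Y_1Y_2Z|S_1S_2X}$ and the realized pair $(X_i,Z_i)$. Using this estimator, the expected per-symbol distortion in block $b$ equals $\mathbb{E}[d_k(S_k,\hat s_k^*(X,Z))]$ under the single-letter joint distribution induced by $P_{U_0U_1U_2X}P_{S_1S_2}P_{Y_1Y_2Z|S_1S_2X}$, because the random codebook construction makes $(X_i,S_{k,i},Z_i)$ marginally distributed according to this law at each time $i$ in each block (averaged over the codebook ensemble). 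Summing over the $i$'s in a block and over the $B$ blocks, the average per-block distortion therefore equals $\mathbb{E}[d_k(S_k,\hat s_k^*(X,Z))]$, which by the choice of $P_{U_0U_1U_2X}$ satisfies the target distortion $\D_k$, so \eqref{eq:dis_general_BC} holds as desired.

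The two arguments combine via a standard expurgation: conditioned on the high-probability typical set where the backward decoding succeeds, both the error probability and the empirical distortion deviations vanish as $n\to\infty$, so there exists a sequence of codes realizing a rate-distortion tuple that converges to the claimed one. The main obstacle is essentially bookkeeping rather than conceptual: one must verify that the Gray–Wyner/Marton decoding conditions of \cite{shayevitz2012capacity} continue to produce precisely the inequalities \eqref{eq:inner} once the side information $S_k$ is appended to the outputs, and that the block-Markov structure does not bias the empirical joint type of $(X_i,S_{k,i},Z_i)$ away from the target single-letter distribution used to compute $\mathbb{E}[d_k(S_k,\hat s_k^*(X,Z))]$. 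Both points follow from the standard fact that, for random i.i.d. codebooks and typical feedback realizations, the induced joint type is arbitrarily close to the target product-form distribution, which makes the sensing analysis essentially an application of the law of large numbers once Lemma~\ref{BC:lemma:Shat} has reduced the problem to the single-letter level.
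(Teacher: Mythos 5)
Your proposal is correct and takes essentially the same approach as the paper, whose proof of Proposition~\ref{prp:inner} is simply ``Similar to \cite{shayevitz2012capacity} and omitted.'' You have filled in exactly the intended argument: inherit the Shayevitz--Wigger block-Markov/Marton/Gray--Wyner scheme with $S_k$ appended to Receiver~$k$'s output, and combine it with the optimal symbol-by-symbol estimator of Lemma~\ref{BC:lemma:Shat} plus the standard joint-typicality argument on the induced empirical distribution to control the distortion.
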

	
	\begin{figure*}[t!]
		\hrule	\begin{subequations}\label{eq:inner}
			\begin{IEEEeqnarray}{rCl}
				\R_0+	\R_1 &\leq& I(U_0, U_1;Y_1, V_1\mid S_1) - I(U_0, U_1, U_2,  Z;V_0, V_1 \mid S_1, Y_1)\label{eq:inner1}
				\\
				\R_0+	\R_2 &\leq& I(U_0, U_2; Y_2, V_2\mid S_2) - I(U_0, U_1, U_2,  Z;V_0, V_2\mid S_2, Y_2)\label{eq:inner2}
				\\
				\R_0+	\R_1+\R_2 &\leq& I(U_1; Y_1, V_1|U_0,  S_1) + I(U_2; Y_2, V_2\mid U_0,  S_2) + \min_{k\in\{1, 2\}}I(U_0;Y_k, V_k\mid S_k)- I(U_1;U_2 \mid U_0)\nonumber \\
				& &-I(U_0, U_1, U_2,  Z;V_1\mid V_0, S_1, Y_1) -I(U_0, U_1, U_2,  Z;V_2|V_0, S_2, Y_2) \nonumber \\
				&& - \max_{k\in\{1, 2\}} I(U_0, U_1, U_2,  Z;V_0 \mid S_k, Y_k)\label{eq:inner3}\\
				\nonumber 2\R_0+ \R_1+\R_2 &\leq& I(U_0,U_1; Y_1, V_1\mid S_1) + I(U_0, U_2; Y_2, V_2\mid S_2) - I(U_1;U_2\mid U_0) \\
				&& - I(U_0, U_1, U_2,  Z;V_0, V_1 \mid S_1, Y_1) -I(U_0, U_1, U_2, Z;V_0, V_2 \mid S_2, Y_2)\label{eq:inner4}
			\end{IEEEeqnarray}
		\end{subequations}
		\hrule
	\end{figure*}
	\begin{IEEEproof}
		Similar to \cite{shayevitz2012capacity} and omitted.  
		\qedhere
	\end{IEEEproof}

	In analogy to Corollary~\ref{cor1:notradeoff} for the single-receiver case, for some SDMBCs  there is no  tradeoff between the achievable distortions and communication rates. In this case, for the BC, 
	the capacity-distortion region is given by the Cartesian product between the SDMBC's capacity region:
	\begin{IEEEeqnarray}{rCl}
	\hspace{0cm}	\mathcal{C}:= &&\{(\R_0,\R_1,  \R_2)\colon \D_1 \geq 0, \;\D_2 \geq 0\quad \nonumber\\
		&& \hspace{2cm}\textnormal{ s.t. } (\R_0,\R_1, \R_2, \D_1, \D_2) \in \CDc  \}, \IEEEeqnarraynumspace
	\end{IEEEeqnarray}
	and its distortion region:
	\begin{IEEEeqnarray}{rCl}
		\mathcal{D}:= &&\{(\D_1,  \D_2)\colon \R_0\geq 0, \;\R_1\geq 0,\; \R_2 \geq 0  
		\nonumber\\
	&& \hspace{2cm} \textnormal{ s.t. } (\R_0,\R_1, \R_2, \D_1, \D_2) \in \CDc  \}. \IEEEeqnarraynumspace
	\end{IEEEeqnarray}

	\begin{proposition}[No Rate-Distortion Tradeoff]\label{prp1:BC:notradeoff}
		Consider an SDMBC $P_{Y_1Y_2Z|S_1S_2X}$ with state pmf $P_{S_1S_2}$ for which there exist functions $\psi_1$ and $\psi_2$ with domain $\Xc\times \Zc$ so that irrespective of the input distribution $P_X$  the relations 
		\begin{IEEEeqnarray}{rCl}
			&(S_k, \psi_{k}(Z,  X)) \perp  X, \label{cond1}\\
			&S_k\markov \psi_{k}(Z,X)\markov (Z, X),  \quad k=1, 2,  \label{cond2}
		\end{IEEEeqnarray} 
		hold for $(S_1,S_2,X_2\textcolor{blue}{, }Z) \sim P_{S_1}P_{S_2}P_X P_{Z|XS_1,X_2}$. The capacity-distortion region of this  SDMBC is the product of the capacity region and the distortion region:
		\begin{IEEEeqnarray}{rCl}\label{eq:product}
			\CDc =  \mathcal{C} \times \mathcal{D}.
		\end{IEEEeqnarray}	
	\end{proposition}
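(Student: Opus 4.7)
The plan is to prove $\CDc = \mathcal{C}\times\mathcal{D}$ by establishing the two inclusions separately. The inclusion $\CDc\subseteq \mathcal{C}\times\mathcal{D}$ is immediate from the definitions of $\mathcal{C}$ and $\mathcal{D}$ as projections of $\CDc$, so all the content lies in the reverse direction: any $(\R_0,\R_1,\R_2)\in\mathcal{C}$ can be paired with any $(\D_1,\D_2)\in\mathcal{D}$ to yield a point in $\CDc$.

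The key intermediate step is to show that, under conditions \eqref{cond1}--\eqref{cond2}, the optimal per-symbol estimator $\hat{s}_k^*(x,z)$ of Lemma~\ref{BC:lemma:Shat} factors as $g_k(\psi_k(x,z))$ for a function $g_k$ that depends only on the channel law $P_{Y_1Y_2Z|S_1S_2X}$ and on $P_{S_1S_2}$, not on the input pmf. This is obtained exactly as in Corollary~\ref{cor1:notradeoff}: the Markov condition gives $P_{S_k|XZ}(s_k|x,z)=P_{S_k|\psi_k(X,Z)}(s_k|\psi_k(x,z))$, and the fact that $(S_k,\psi_k(X,Z))\perp X$ holds for every $P_X$ forces the joint pmf $P_{S_k,\psi_k(X,Z)}$, and thus the conditional used in \eqref{eq:BCestimator}, to be independent of $P_X$.

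Next, I would compute the expected distortion achieved by $\hat{s}_k^*$ on top of an arbitrary feedback-dependent code. Since $S_{k,i}$ is i.i.d.\ and independent of $(W_0,W_1,W_2,Z^{i-1})$, conditioning on $X_i=x$ recovers the single-shot law $(S_{k,i},Z_i)\mid X_i=x$, and the previous step yields
\[
\mathbb{E}\bigl[d_k(S_{k,i},\hat{s}_k^*(X_i,Z_i))\mid X_i=x\bigr]=\sum_{s_k,u}P_{S_k,\psi_k(X,Z)}(s_k,u)\,d_k(s_k,g_k(u)),
\]
which is a constant $\mathsf{D}_k^{\min}$ independent of $x$ and of the code. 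Averaging over $i$ gives $\Delta_k^{(n)}=\mathsf{D}_k^{\min}$ for every $n$. By optimality of $\hat{s}_k^*$ (Lemma~\ref{BC:lemma:Shat}), $\mathsf{D}_k^{\min}$ is also the infimum of achievable distortions, so $\mathcal{D}=\{(\D_1,\D_2)\colon \D_k\geq\mathsf{D}_k^{\min},\;k=1,2\}$.

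To conclude, I would take any $(\R_0,\R_1,\R_2)\in\mathcal{C}$, pick a witnessing code sequence (with some, possibly large, distortions) and swap its estimators for $\hat{s}_1^*,\hat{s}_2^*$. Rates and error probability are unaffected, while the achieved distortions drop to $\mathsf{D}_k^{\min}$, which is dominated by any $\D_k$ with $(\D_1,\D_2)\in\mathcal{D}$. Hence $(\R_0,\R_1,\R_2,\D_1,\D_2)\in\CDc$, closing the reverse inclusion. The main obstacle is the first step — proving that the optimal estimator and its achieved distortion are functionals of the channel alone — but this is a direct adaptation of the argument used for the single-receiver Corollary~\ref{cor1:notradeoff}, applied independently for $k=1,2$.
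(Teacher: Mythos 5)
Your proof is correct and follows essentially the same route as the paper: the paper reduces to the single-user argument in Appendix~\ref{app:notradeoff} (which shows the expected distortion of $\hat{s}_k^*$ is independent of the input distribution by rewriting it in terms of $T_k=\psi_k(X,Z)$ and invoking conditions~\eqref{cond1}--\eqref{cond2}), applied coordinatewise for $k=1,2$. You carry out exactly this reduction and additionally make explicit the product-set bookkeeping — one inclusion is a projection, the other follows by swapping an arbitrary witnessing code's estimators for $\hat{s}_1^*,\hat{s}_2^*$ — that the paper leaves implicit.
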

	
	\begin{IEEEproof}
		Analogous to the proof of Corollary~\ref{cor1:notradeoff}. Specifically, the proof is obtained from Appendix~\ref{app:notradeoff} by replacing $(S,\hat{S},\psi,Y,T)$ with $(S_k,\hat{S}_k,\psi_k,Y_k,T_k)$, for $k=1,2$.
	\end{IEEEproof}

	\subsection{Example 5: Erasure BC with Noisy Feedback} Our first example satisfies Conditions \eqref{cond1} and \eqref{cond2} in Proposition~\ref{prp1:BC:notradeoff} for an appropriate choice of $\psi_1$ and $\psi_2$, and its capacity-distortion region is thus given by the product of the capacity region and the distortion region.
	
	Let $(E_1,  S_1, E_2, S_2) \sim  P_{E_1S_1E_2S_2}$ over $\{0,1\}^4$ be given but arbitrary. 
	Consider the  state-dependent erasure BC
	\begin{IEEEeqnarray}{rCl}\label{eq:Y}
		Y_{k}=\begin{cases}
			X & \text{if }  S_{k}=0, \\
			? & \text{if } S_{k}=1, 
		\end{cases}, \qquad k=1, 2, 
	\end{IEEEeqnarray}
	where the feedback signal $Z=(Z_1, Z_2)$ is given by
	\begin{IEEEeqnarray}{rCl}\label{eq:Z}
		Z_k=\begin{cases}
			Y_{k} & \text{if }  E_k=0, \\
			? 
			& \text{if } E_k=1, 
		\end{cases}, \qquad k=1, 2.
	\end{IEEEeqnarray}
	Further consider  Hamming distortion measures $d_k(s,  \hat{s}) = s \oplus \hat{s}$,  for $k=1, 2$. 
	For the choice
	\begin{equation}
		\psi_k(Z_k)=
		\begin{cases}
			1, & \text{if} \quad Z_k=?,\\
			0,   & \text{else},  
		\end{cases}
	\end{equation}
	the described SDMBC satisfies the conditions in Proposition~\ref{prp1:BC:notradeoff}, thus yielding the following corollary. 
	
	\begin{corollary}The capacity-distortion region of the state-dependent erasure BC with noisy feedback in \eqref{eq:Y}--\eqref{eq:Z} is the Cartesian product between the capacity region of the SDMBC and its distortion region:
		\begin{equation}
			\CDc = \Cc \times \Dc.
		\end{equation}
		When $P_{E_1S_1E_2S_2}=P_{E_1S_1}P_{E_2S_2}$, then the distortion region is given by:
		\begin{equation}\label{eq:d}
			\Dc=\{(\D_1,\D_2) \colon \D_k \geq P_{E_kS_k}(1,0)\}.
		\end{equation}

		
	\end{corollary}
			
	\begin{proof}
		{The state can  perfectly be estimated ($S_k=0)$ with zero distortion  if $(S_k, E_k)=(0,0)$.  Otherwise, the feedback is $Z_k=?$ and provides no information. The optimal estimator is then given by the best  constant estimator, which in this example is:
			\begin{align}
				\hat{s}_{{\rm const},k} =\onev\{P_{S_k}(1)\geq P_{S_kE_k}(0,1)\}.
			\end{align}
			This immediately yields the distortion constraint in \eqref{eq:d}.}
	\end{proof}
	
	Notice that the capacity region $\Cc$ of the  SDMBC \eqref{eq:Y} is unknown even with perfect feedback. In \cite{wang2012capacity,gatzianas2013multiuser}, the capacity region of this SDMBC  with perfect feedback was characterized when each receiver is informed about the state realizations at \emph{both} receivers. 	
\subsection{Example 6: State-Dependent Dueck's BC with { Multiplicative Bernoulli States}} \label{subsection: Dueck}
We consider a state-dependent  version of  Dueck's example in \cite{DUECK}, which first served to show that feedback can increase capacity of a memoryless BC. Interestingly, despite its simplicity, the state-dependent extension of this example allows observing various kinds of tradeoffs between communication and sensing performances and also between performances at the various receivers.  For example, for specific choices of parameters, the problems of sensing and communication decompose (Corollary~\ref{deuck_noTradeoff}), and it is possible to simultaneously achieve the optimal sensing and communication performances. 
For other parameters a tradeoff arises.  The present example also shows nicely that  our presented co-design scheme can significantly outperform the two TS methods.

	\begin{figure}
		\centering
	\includegraphics[scale=0.92]{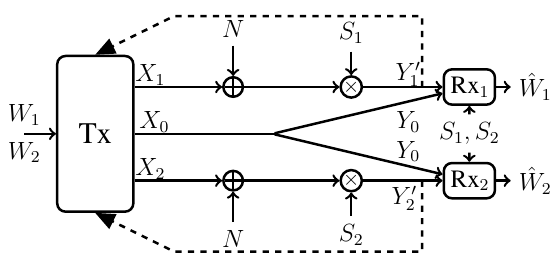}
		\caption{State-dependent Dueck Broadcast Channel.}
		\label{fig:Dueck}
	\end{figure}
	Consider  the state-dependent  BC  in Figure~\ref{fig:Dueck} 
	with input  $X=(X_0, X_1, X_2)\in \{0, 1\}^3$, i.i.d. Bernoulli states $S_1, S_2\sim \mathcal{B}(q)$, for $q \in [0,1]$,  and outputs 
	\begin{equation}
		Y_k=(X_0,  Y_k',  S_1, S_2),  \qquad  {k=1, 2}, 
	\end{equation}
	where 
	\begin{subequations}\label{eq:Mehrasa}
		\begin{equation}
			Y_k' =  S_k (X_k \oplus N),  \qquad {k=1, 2}, 
		\end{equation}
	\end{subequations}
	and  the noise $N\sim \mathcal{B}(1/2)$ is independent of the inputs and the states. 
	The feedback signal is  
	\begin{equation}
		Z= (Y_1',  Y_2'),
	\end{equation}
	and for simplicity we again ignore the common rate $\R_0$. 
	
	We notice that only $X_1$ and $X_2$ are corrupted by the state and the noise. Since $X_0$ is received without any state or noise, it is thus completely useless for sensing. 
	In fact, the optimal estimator of Lemma~\ref{BC:lemma:Shat} for $k=1,2$ is  (see Appendix~\ref{appendix-sec:ex_estimator}) 
	\begin{IEEEeqnarray}{rCl}\label{ex:optimal_estimator}
	&&\hspace{-0.3cm}	\hat{s}_k^*(x_1,x_2,y_1',y_2')\nonumber\\&&\hspace{-0.2cm} =\hspace{-0.12cm}\begin{cases} 
			\mathbbm{1}\{q \geq (1-q)\} &\hspace{-0.1cm}y_k'=0, y_{\bar{k}}'=1, x_1\neq x_2\\
			0 &\hspace{-0.1cm}y_k'=0, y_{\bar{k}}'=1, x_1= x_2\\
		1 &\hspace{-0.1cm}y_k'=1 \\
			0 &\hspace{-0.1cm}y_1=y_2=0, x_1\neq x_2\\
			\mathbbm{1}\{q \geq (1-q)(2-q)\} &\hspace{-0.1cm}y_1'=y_2'=0, x_1= x_2
		\end{cases},\nonumber\\
	\end{IEEEeqnarray}
	where we slightly abuse notation by omitting the argument $x_0$ for the estimator $\hat{s}_k^*$ because this latter does not depend on $x_0$. 
	
	For a  given input pmf  with probability $t:=\Prv{X_1\neq X_2}$, the expected distortion achieved by the optimal estimators in \eqref{ex:optimal_estimator} is (see Appendix~\ref{app:ex_mindist}): 
	\begin{IEEEeqnarray}{rCl}\label{ex:distortion}
		\lefteqn{		\hspace{-2cm}\E{d_k(S_k,\hat{s}_k^*(X_1,X_2,Y_1',Y_2'))} }
	&&
		\nonumber \\
	\hspace{-0.7cm}&=&	\frac{1}{2} t q \left( \min \{q, 1-q\} + (1-q) \right)
		\nonumber\\
		\hspace{1.5cm} &+&\frac{1}{2} (1-t)  \min \{q,(1-q)(2-q)\} 
	\end{IEEEeqnarray}
	We observe different cases: i) for $q\in[0,1/2]$, both minima are achieved by $q$; ii) for $q \in \big(1/2, 2- \sqrt{2}\big]$,  the first and second minima are achieved by $1-q$ and $q$, respectively; iii) for $q \in \big(2-\sqrt2,1\big]$,  the first and second minimum are achieved by $(1-q)$ and $(1-q)(2-q)$, respectively. 
	The distortion constraint~\eqref{eq:dis_general_BC} thus evaluates to:
	\begin{equation}\label{ex:distortin}
		\D_k \geq \begin{cases}  q/2 & q \in[0,1/2] \\
			q(1-t(2q-1))/2 & \hspace{-0.4cm}q \in \big(1/2, 2- \sqrt{2}\big]\\
			(1-q)(2-q+t(3q-2))/2 & q \in \big(2-\sqrt2,1\big].
		\end{cases}
	\end{equation} 
	
	We notice that for $q \in [0,1/2]$, the distortion constraint is independent of $t$ and thus of $P_X$, and the minimum expected distortions are  $\D_{\min,1}=\D_{\min,2}=\frac{1}{2} q$.  For $q \in \big(1/2, 2- \sqrt{2}\big]$, the minimum expected distortions are achieved for $t=1$ and the same holds also for $q\in\big(2-\sqrt{2}, 2/3\big]$. For $q \in [2/3,1]$, the distortions are minimized for $t=0$. We thus have $\D_{\min,1}=\D_{\min,2}=\D_{\min}$, where
	\begin{IEEEeqnarray}{rCl}\label{eq:DueckMinDist}
		\D_{\min}:=\begin{cases} q/2, & q \in [0,1/2]\\
			q(1-q), & q\in [1/2, 2/3] \\
			(1-q)(2-q)/2, & q \in[2/3,1].
		\end{cases}
	\end{IEEEeqnarray}
	We obtain a characterization of the distortion region:
	\begin{IEEEeqnarray}{rCl}\label{ex:D}
		\Dc & = & \{(\D_1, \D_2) \colon \D_1\geq \D_{\min}, \; \D_2 \geq \D_{\min}\}.
	\end{IEEEeqnarray}
	{The private-messages} capacity region is:
	\begin{IEEEeqnarray}{rCl}\label{ex:capacity}
	\hspace{-2cm}	\Cc & = & \{(\R_1, \R_2)\colon \R_1 \leq 1,\; \R_2 \leq 1,\; 
		\nonumber\\
		&&\hspace{3cm}\textnormal{and} \quad \R_1+\R_2\leq 1+q^2\}.
	\end{IEEEeqnarray}    	
	The converse and achievability proofs are provided in Appendices~\ref{app:ex_converse} and \ref{app:ex_ach}, respectively.

	Reconsider now the case where $q\in[0,1/2]$. As previously explained, the distortion is independent of the input distribution, and the capacity-distortion region $\CDc$ degenerates to the product of the capacity and distortion regions: 
	\begin{corollary}\label{deuck_noTradeoff}[No Rate-Distortion Tradeoff] For above state-dependent Dueck example with $q\in [0,1/2]$:
		\begin{equation}
			\CDc = \Cc \times \Dc.
		\end{equation}
	\end{corollary}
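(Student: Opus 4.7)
The plan is to prove the two inclusions in $\CDc = \Cc \times \Dc$ separately. The inclusion $\CDc \subseteq \Cc \times \Dc$ is immediate from the definitions of $\Cc$ and $\Dc$ as the rate and distortion projections of $\CDc$: any achievable tuple $(\R_1,\R_2,\D_1,\D_2)$ projects by construction into both sets.

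For the reverse inclusion $\Cc \times \Dc \subseteq \CDc$, the decisive observation is that in the regime $q \in [0,1/2]$ the distortion expression in \eqref{ex:distortin} collapses to
\begin{equation}
\mathbb{E}[d_k(S_k,\hat{s}_k^*(X,Z))] = \frac{q}{2}, \qquad k=1,2,
\end{equation}
which is independent of the parameter $t=\Pr\{X_1 \neq X_2\}$ and hence independent of the input pmf $P_X$. By \eqref{eq:DueckMinDist} this common value equals $\D_{\min}$, so the distortion region \eqref{ex:D} specializes to $\Dc = \{(\D_1,\D_2) : \D_k \geq q/2,\; k=1,2\}$.

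Given any $(\R_1,\R_2) \in \Cc$ and any $(\D_1,\D_2) \in \Dc$, my plan is to employ the capacity-achieving code for the rate pair $(\R_1,\R_2)$ constructed in Appendix~\ref{app:ex_ach}, combined at the transmitter with the symbol-by-symbol estimators $\hat{s}_k^*$ from \eqref{ex:optimal_estimator} (which are optimal by Lemma~\ref{BC:lemma:Shat}). Because the per-letter expected distortion under these estimators equals $q/2$ for every input marginal $P_{X_i}$, the per-block average distortion $\Delta_k^{(n)}$ also equals $q/2 \leq \D_k$ for both $k=1,2$, while the rate pair $(\R_1,\R_2)$ is achieved reliably by the underlying channel code. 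This establishes that every tuple in $\Cc \times \Dc$ lies in $\CDc$.

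The argument encounters no real obstacle because the substantive analytic content — deriving the closed-form distortion \eqref{ex:distortin} and the capacity region \eqref{ex:capacity} — has already been completed earlier in this subsection. The corollary is therefore a clean decoupling statement: once one notices that the optimal sensing distortion is insensitive to the input distribution when $q \leq 1/2$, the communication and sensing sub-problems can be optimized independently and the joint region degenerates into a Cartesian product. In spirit this is exactly the mechanism identified in Proposition~\ref{prp1:BC:notradeoff}, specialized to the present channel.
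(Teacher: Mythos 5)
Your proposal is correct and matches the paper's (implicit) reasoning: the paper simply observes that for $q\in[0,1/2]$ the distortion in \eqref{ex:distortin} equals $q/2$ independently of $t$ and hence of $P_X$, so the capacity-achieving code (achieved with $t=1/2$ in Appendix~\ref{app:ex_ach}) paired with the optimal estimators of Lemma~\ref{BC:lemma:Shat} simultaneously attains any $(\R_1,\R_2)\in\Cc$ and the distortion floor $\D_{\min}=q/2$, which is precisely your argument. One small caveat on your closing remark: the appeal to Proposition~\ref{prp1:BC:notradeoff} is only loosely apt, since the Dueck channel does not satisfy the structural conditions \eqref{cond1}--\eqref{cond2} (otherwise the distortion would be input-independent for all $q$, which it is not for $q>1/2$); the decoupling here is a parameter-specific coincidence rather than a consequence of that proposition's hypotheses, but since you hedge with ``in spirit'' this is not an error.
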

	For  the general case,  we only have bounds on the capacity-distortion region $\CDc$. We first present our outer bound, which is based on Theorem~\ref{outer1} and proved in Appendix~\ref{app:ex_converse}.
	\begin{corollary}[Outer Bound]\label{corollary:DueckOuter}
		The capacity-distortion region $\CDc$ (without common message) of Dueck's state-dependent BC  is included in the set of  tuples $(\R_1, \R_2, \D_1, \D_2)$ that for some choice of the parameters $ t \in [0, 1]$ satisfy the rate-constraints 	
		\begin{IEEEeqnarray}{rCl}
			\R_k&\leq& 1, \qquad k=1,2,\\
			\R_1+\R_2 &\leq& 1+q^2 H_\textnormal{b}(t) \label{eq:R1-outer}
		\end{IEEEeqnarray}
		and   the distortion constraints in \eqref{ex:distortin}.
	\end{corollary}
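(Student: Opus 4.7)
The plan is to specialize the general outer bound of Theorem~\ref{outer1} to Dueck's state-dependent BC with the common rate set to $\R_0 = 0$. For any achievable tuple $(\R_1, \R_2, \D_1, \D_2)$, Theorem~\ref{outer1} produces an input distribution $P_X$ (and auxiliaries $P_{U_k|X}$) such that $\R_k \leq I(U_k;Y_k\mid S_k)$, $\R_1+\R_2 \leq I(X;Y_1,Y_2\mid S_1,S_2)$, and the distortion constraints~\eqref{eq:dis_general_BC} hold. I would then define $t := \Pr_{P_X}[X_1 \neq X_2]$ and show that all constraints in the corollary follow for this value of~$t$, regardless of the specific $P_X$.

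For the individual rate bound $\R_k\leq 1$, I would use the Markov chain $U_k \to X \to Y_k$ implicit in the theorem to write $I(U_k;Y_k\mid S_k) \leq I(X;Y_k\mid S_k)$. Since $Y_k$ contains $(S_1,S_2)$ deterministically and $Y_k' = 0$ when $S_k = 0$, a short entropy calculation gives
\begin{equation}
I(X;Y_k\mid S_k) = H(X_0) + H(Y_k'\mid X_0, S_k) - H(Y_k'\mid X, S_k) \leq 1 + q - q = 1,
\end{equation}
using $H(X_0)\leq 1$, $H(Y_k'\mid X_0, S_k)\leq q$, and the fact that the last term equals $q H(N) = q$ because given $X$, the only residual uncertainty in $Y_k'$ (when $S_k=1$) is the noise $N$.

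The main step is the sum-rate bound. Writing $Y_k = (X_0, Y_k', S_1, S_2)$, I would expand
\begin{equation}
I(X;Y_1,Y_2\mid S_1,S_2) = H(X_0) + H(Y_1',Y_2'\mid X_0,S_1,S_2) - H(Y_1',Y_2'\mid X,S_1,S_2).
\end{equation}
The last term equals $(1-(1-q)^2)H(N) = 2q - q^2$, since $N$ is the only residual randomness in $(Y_1',Y_2')$ given $X$ as long as at least one $S_k=1$. The middle term requires more care and is where the parameter $t$ enters: in the $(S_1,S_2)=(1,1)$ case (probability $q^2$), the identity $Y_1' \oplus Y_2' = X_1 \oplus X_2$ yields the decomposition
\begin{equation}
H(Y_1',Y_2'\mid X_0, S_1=S_2=1) = H(X_1\oplus X_2\mid X_0) + H(Y_1'\mid X_0, X_1\oplus X_2) \leq H_{\textnormal{b}}(t) + 1,
\end{equation}
since independence of $N$ from $(X_0,X_1,X_2)$ makes $Y_1' = X_1\oplus N$ uniform given $(X_0,X_1\oplus X_2)$, and conditioning reduces entropy on the left summand. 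The single-erasure cases contribute $1$ each and the $(0,0)$ case contributes $0$, so altogether $H(Y_1',Y_2'\mid X_0,S_1,S_2)\leq 2q - q^2 + q^2 H_{\textnormal{b}}(t)$. Combined with $H(X_0)\leq 1$, this gives $\R_1+\R_2 \leq 1 + q^2 H_{\textnormal{b}}(t)$.

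The distortion constraints are then immediate: the expected per-symbol distortion under the optimal estimator~\eqref{ex:optimal_estimator} was computed in~\eqref{ex:distortion} and shown in~\eqref{ex:distortin} to depend on $P_X$ only through $t$, so applying~\eqref{eq:dis_general_BC} yields exactly the distortion bounds stated in the corollary. The main technical obstacle of the proof is the entropy decomposition in the $(1,1)$-state case using the XOR identity $Y_1'\oplus Y_2'=X_1\oplus X_2$, which is precisely what ties the sum-rate bound to the structural parameter~$t$; the remaining steps are routine entropy bookkeeping.
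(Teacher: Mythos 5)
Your proposal is correct and follows essentially the same route as the paper's Appendix proof: specialize Theorem~\ref{outer1}, obtain $\R_k\leq 1$ from the fact that only $X_0$ carries information through $Y_k$ alone, extract the $q^2 H_{\textnormal{b}}(t)$ term in the sum rate from the XOR identity $Y_1'\oplus Y_2'=X_1\oplus X_2$ in the $(S_1,S_2)=(1,1)$ case, and note that the distortion in \eqref{ex:distortion} depends on $P_X$ only through $t$. The only (immaterial) differences are in bookkeeping: the paper bounds $I(U_k;Y_k\mid S_k)=I(U_k;X_0)$ directly rather than via data processing to $X$, and organizes the sum-rate bound as a mutual-information chain rule conditioned on $Y_1'$ rather than as your difference of the two conditional entropies of the pair $(Y_1',Y_2')$.
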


	The inner bound is based on Proposition~\ref{prp:inner}, see Appendix~\ref{app:ex_ach}. Together with the outer bound in Corollary~\ref{corollary:DueckOuter} it characterizes both the distortion region $\Dc$ and the capacity region $\Cc$ in \eqref{ex:D} and \eqref{ex:capacity}.

	\begin{corollary}[Inner bound]\label{corollary:DueckInner}
		The capacity-distortion region $\CDc$ of the state-dependent Dueck BC includes all rate-distortion tuples $(\R_1,\R_2,\D_1,\D_2)$ that for some choice of $t\in[0,1]$ satisfy \eqref{ex:distortin}  and 
		\begin{IEEEeqnarray}{rCl}
			\R_k &\leq &1, \quad k=1,2,  \\  
			\R_1 +\R_2 & \leq & 1+q H_{\textnormal{b}}(t) - q(1-q), 
		\end{IEEEeqnarray}
		as well as the convex hull of all these tuples.
	\end{corollary}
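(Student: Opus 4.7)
The plan is to specialize the Shayevitz--Wigger inner bound in Proposition~\ref{prp:inner} to the state-dependent Dueck BC. For the input distribution I would take $X_0$ uniform on $\{0,1\}$ and independent of $(X_1,X_2)$, where $(X_1,X_2)$ has uniform $\{0,1\}$ marginals and $\Pr[X_1\neq X_2]=t$ for a parameter $t\in[0,1]$ to be optimized. This choice is natural because the expected distortion of the optimal estimators in \eqref{ex:optimal_estimator} depends on $P_X$ only through $t$, as already computed in \eqref{ex:distortion}--\eqref{ex:distortin}; together with Lemma~\ref{BC:lemma:Shat} this directly delivers the distortion constraints of the corollary. I would then take the Marton-type auxiliaries as $U_0=X_0$, $U_1=X_1$, $U_2=X_2$, and exploit the feedback through the ``refresh'' variables $V_0,V_1,V_2$. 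The crucial observation is that the transmitter recovers the noise bit $N$ from $Z=(Y_1',Y_2')$ whenever $S_1=1$ or $S_2=1$, i.e.\ with probability $1-(1-q)^2=q(2-q)$. Accordingly I would let $V_0$ equal $N$ on that event and be an erasure symbol otherwise (so that it can be re-broadcast in the next block through $X_0$), and set $V_1=V_2$ to deterministic constants, which corresponds to using only a common feedback refresh.

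With these choices, the four rate inequalities \eqref{eq:inner1}--\eqref{eq:inner4} must be evaluated. The individual-rate bounds $\R_k\leq 1$ follow from the fact that $X_0$ is observed cleanly at each receiver, so $I(U_0;Y_k\mid S_k)=H(X_0)=1$, while $I(U_k;Y_k\mid U_0,S_k)=0$ because $X_k$ is uniform and $N$ is independent of the inputs and states. The Marton overlap cost evaluates to $I(U_1;U_2\mid U_0)=1-H_\textnormal{b}(t)$. The sum-rate bound should emerge from \eqref{eq:inner4}: after subtracting the Marton penalty and accounting for the common refresh contribution of $V_0$ transmitted through $X_0$, the net sum-rate becomes $1+qH_\textnormal{b}(t)-q(1-q)$. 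Intuitively, the additive $qH_\textnormal{b}(t)$ captures the information about the noise $N$ that the transmitter forwards, whereas the penalty $q(1-q)$ reflects the asymmetric-state events $\{S_1\neq S_2\}$ in which only one receiver directly benefits from the forwarded refresh. Taking the convex hull of all such rate-distortion tuples over $t\in[0,1]$ completes the proof of the inner bound.

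The main obstacle will be the precise accounting of the four ``backward'' mutual-information penalty terms $I(U_0,U_1,U_2,Z;V_0,V_k\mid S_k,Y_k)$ in \eqref{eq:inner3}--\eqref{eq:inner4}: one must verify that, under the proposed joint law, these penalties combine exactly with $I(U_1;U_2\mid U_0)$ to collapse into the clean expression $qH_\textnormal{b}(t)-q(1-q)$ in the sum-rate. A secondary but easier point is to confirm that for this choice of auxiliaries the inequalities \eqref{eq:inner1}--\eqref{eq:inner2} are not tighter than the trivial individual constraints $\R_k\leq 1$, so that only \eqref{eq:inner4} is binding in the sum-rate direction; by the symmetric structure of the Dueck channel and of the chosen $(U,V)$, this should follow without surprises.
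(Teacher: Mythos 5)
Your overall strategy (specialize Proposition~\ref{prp:inner} with $U_k=X_k$, input law parameterized by $t=\Pr[X_1\neq X_2]$, distortions from Lemma~\ref{BC:lemma:Shat} and \eqref{ex:distortin}) matches the paper's, but the core of your construction --- the choice of $V_0$ --- is not admissible, and this is a genuine gap rather than a detail. In Proposition~\ref{prp:inner} the refresh variables must be generated according to $P_{V_0V_1V_2\mid U_0U_1U_2Z}$, i.e.\ from the transmitter's observations $(X_0,X_1,X_2,Y_1',Y_2')$ only. Your $V_0$ is defined as ``$N$ on the event $\{S_1=1\text{ or }S_2=1\}$ and an erasure otherwise,'' but the transmitter observes neither $N$ nor the states, and it cannot reconstruct them: when $Y_1'=Y_2'=0$ it cannot distinguish, e.g., $(S_1=S_2=0)$ from $(X_1\oplus N=X_2\oplus N=0)$. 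So the claim that ``the transmitter recovers the noise bit $N$ whenever $S_1=1$ or $S_2=1$'' (probability $q(2-q)$) is false; $N$ is recoverable only when $Y_1'=1$ or $Y_2'=1$. The paper avoids this by taking $V_0=X_1\oplus Y_1'$, which is always computable from $(U_1,Z)$ and \emph{happens} to equal $N$ when $S_1=1$ and $X_1$ when $S_1=0$; it is this specific algebraic identity (and its use of $Y_2'\oplus Y_1'=X_2\oplus X_1$ when $S_1=S_2=1$) that makes the penalty terms collapse to $qH_{\textnormal{b}}(t)-q(1-q)$.

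Two further points. First, a single choice of $V_0$ of this type is inherently asymmetric: the paper's first choice yields $\R_1\le 1$ but only $\R_2\le 1-(1-q)(H_{\textnormal{b}}(t)+q)$, and the symmetric region of the corollary is obtained only by also evaluating the mirrored choice $V_0=X_2\oplus Y_2'$ and convexifying. Your symmetric ``common refresh'' cannot be salvaged into a single admissible $V_0$ that gives both $\R_1\le1$ and $\R_2\le1$ directly. Second, you assert rather than verify that the backward penalties combine into $qH_{\textnormal{b}}(t)-q(1-q)$, and you locate the binding sum-rate constraint in \eqref{eq:inner4}, whereas with $\R_0=0$ the relevant constraint in the paper's evaluation is \eqref{eq:inner3} (the paper discards \eqref{eq:inner4} as inactive). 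Since the entire content of the corollary is precisely this bookkeeping of the terms $I(U_0,U_1,U_2,Z;V_0,V_k\mid S_k,Y_k)$ under a \emph{valid} joint law, the proof as proposed does not go through.
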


	%
	
	Fig.~\ref{general_example} shows our outer and inner bounds in Corollaries~\ref{corollary:DueckOuter} and \ref{corollary:DueckInner}  for $q=3/4$, where in the inner bound we consider the convex hull operation through convex combinations between values of $t> 0$ and $t=0$.  The figure also shows the performances of the basic and improved TS baseline schemes, {whose modes we explain next. (Recall that the basic TS scheme time-shares the sensing mode without communication and the communication mode without sensing, and the improved TS scheme time-shares the sensing mode with communication and the communication mode with sensing.)}

	\underline{Sensing mode with and without communication:}\\ [0.2em]
	In the sensing mode with communication, one can choose an arbitrary pmf for $X_0$, e.g., $X_0$ Bernoulli-$1/2$ because this input does not  affect the sensing. From \eqref{eq:DueckMinDist}, the minimum distortions of $\D_{\min,1}=\D_{\min,2}=5/32$ are achieved by setting $X_1=X_2$ with probability 1. For $X_1=X_2$ the sum-rate cannot exceed $\R_1+\R_2 \leq1$,  because $I(X_0,X_1,X_2; Y_1,Y_2)=I(X_0,X_2; Y_1,Y_2) \leq H(X_0) + I(X_2; Y_1',Y_2'|X_0) \leq 1$ as $Y_1'$ and $Y_2'$ are corrupted by the Bernoulli-$1/2$ noise $N$. On the other hand,  any rate pair $(\R_1,\R_2)$ of sum-rate $\R_1+\R_2=1$ is trivially achievable by communicating only with the noiseless    $X_0$-input. 
	
	We conclude that the sensing mode with communication achieves the rate-distortion tuple $(\R_1,\R_1,\D_1,\D_2)$ satisfying 
	\begin{IEEEeqnarray}{rCl}
		\R_1+\R_2\leq 1 \quad \textnormal{and} \quad  \D_k \geq  5/32 ,\;\; k=1,2. 
	\end{IEEEeqnarray}
	{If the transmitter cannot perform communication and sensing tasks simultaneously,  the same minimum distortions are achieved but the rates are trivially  zero. 
		\begin{IEEEeqnarray}{rCl}
			\R_1+\R_2=0 \quad \textnormal{and} \quad  \D_k \geq  5/32 ,\;\; k=1,2. 
		\end{IEEEeqnarray}
	}
	
	\underline{Communication mode with and without sensing: } \\ [0.2em]
	The optimal pmf $P_X$ achieving the capacity region in \eqref{ex:capacity} corresponds to  i.i.d. Bernoulli-$1/2$ distributed $X_0, X_1,X_2$ ( Appendix~\ref{app:ex_ach}). { The corresponding sum rate is $\R_1+\R_2=1+q^2=25/16$.}
	The minimum achievable distortions are thus obtained from \eqref{ex:distortin} by setting $t=\Prv{X_1\neq X_2}=1/2$, i.e., $\D_{\max,1}=\D_{\max,2}= 11/64$. The best constant estimator is $\hat{S}_1=\hat{S}_2=1$ because $3/4=P_{S_k}(1)>P_{S_k}(0)=1/4$, which achieves distortions $\D_{\textnormal{trivial},1}=\D_{\textnormal{trivial},2}=1/4$.
	We can conclude that the communication mode with sensing achieves all rate-distortion tuples $(\R_1,\R_1,\D_1,\D_2)$  satisfying
	\begin{IEEEeqnarray}{rCl}
			\R_1+\R_2&\leq& 25/16 ,
		\nonumber\\
		\R_k &\leq& 1   \quad \textnormal{and} \quad  \D_k \geq  11/64    \;\;\; k=1, 2 	
	\end{IEEEeqnarray}
	and the communication mode without sensing achieves 
	all rate-distortion tuples $(\R_1,\R_1,\D_1,\D_2)$  satisfying
	\begin{IEEEeqnarray}{rCl}
		 \R_1+\R_2&\leq& 25/16 
		 \nonumber\\
		 	\R_k&\leq& 1  \quad \textnormal{and}  \quad   \D_k\geq  1/4,  \;\;\; k=1, 2. 
	\end{IEEEeqnarray}
	
	\begin{figure}[ht]
		\centering
		\hspace{-1cm}
		\begin{tikzpicture}[every pin/.style={fill=white}, scale=.85]
			\begin{axis}[%
				xlabel={ \large{$\D_1=\D_2$ }}, 
				ylabel={ \large{$\R_1+\R_2$ }}, 
				at={(1.011in, 0.642in)}, 
				scale only axis, 
				xmin=0.1562, 
				xmax=0.181875, 
				ymin=0, 
				ymax=1.6, 
				axis background/.style={fill=white}, 
				legend style={at={(1, 0.6)}, anchor=north east}
				]
				\addplot [color=black,  line width=2.0pt]
				table[row sep=crcr]{%
					0.15625	1\\
					0.15725	1.11493262742641\\
					0.15825	1.19300663526176\\
					0.15925	1.25660468269536\\
					0.16025	1.31045924881934\\
					0.16125	1.35679912448532\\
					0.16225	1.39692059113385\\
					0.16325	1.43166498432848\\
					0.16425	1.46161569767371\\
					0.16525	1.48719457078673\\
					0.16625	1.50871456997003\\
					0.16725	1.52641087358079\\
					0.16825	1.54046022361587\\
					0.16925	1.55099339848532\\
					0.17025	1.55810337593429\\
					0.17125	1.5618506139973\\
					0.171875	1.5618506139973\\
					0.181875	1.5618506139973\\
				};
				\addlegendentry{Outer Bound of Corollary~\ref{corollary:DueckOuter}}
				
				\addplot [color=red, ,  line width=2.0pt]
				table[row sep=crcr]{%
					0.15625	1\\
					0.16125	1.28823216598042\\
					0.16225	1.34172745484513\\
					0.16325	1.38805331243798\\
					0.16425	1.42798759689828\\
					0.16525	1.46209276104897\\
					0.16625	1.49078609329337\\
					0.16725	1.51438116477439\\
					0.16825	1.53311363148783\\
					0.16925	1.5471578646471\\
					0.17025	1.55663783457905\\
					0.17125	1.5616341519964\\
					0.171875	1.5616341519964\\
					0.181875	1.5616341519964\\
				};
				\addlegendentry{Inner Bound  of Corollary~\ref{corollary:DueckInner}.}

				\addplot [color=mycolor1, dashed,   line width=2.0pt]
				table[row sep=crcr]{%
					0.15625	1\\
					0.1719 1.5626\\};
				\addlegendentry{Improved TS}

				\addplot [color=blue, dashed,   line width=2.0pt]
				table[row sep=crcr]{%
					0.15625	0\\
					0.25 1.5626\\};
				\addlegendentry{Basic TS}

			\end{axis}
		\end{tikzpicture}%
		\vspace{-0.2cm}
		\caption{Sum-rate $\R_1+\R_2$ vs. symmetric distortion $\D_1=\D_2$ for the state-dependent Dueck BC with $q=3/4$.}
		\label{general_example}
		\vspace{-.15cm}
		
	\end{figure}
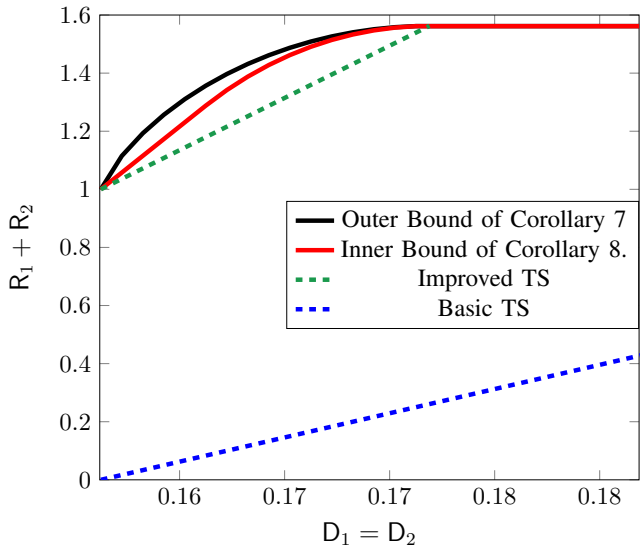
	
	
	\section{Conclusion}\label{section:conclusion}
	
	Motivated by the paradigm of  integrated sensing and communication systems, we studied joint sensing and communication in memoryless state-dependent channels. We fully characterized the capacity-distortion tradeoff for the single-user channels as well as physically-degraded broadcast channels. For  general broadcast channels, we presented inner and outer bounds on the capacity-distortion region.  Through a number of illustrative examples, we demonstrated that the optimal co-design scheme offers non-negligible gain compared to the basic time-sharing scheme that performs either sensing or communication, as well as compared to the improved time-sharing scheme that  \textcolor{black}{integrates both tasks into a single system but chooses the common waveform to \emph{prioritize} one of the  tasks}. Interestingly, there are ideal situations where the capacity is achieved without compromising  the sensing performance. \textcolor{black}{Our results also showed that for the single-transmitter systems studied in this paper the optimal sensing depends only on the employed waveform, but not on the underlying coding scheme. This holds also for broadcast channels where the two tasks are not only connected through the  employed waveform but also through the generalized feedback, which in this case should be exploited to improve the set of achievable rates.  Notice that the situation is different in multi-transmitter situations \cite{ahmadipour2022jointMAC}, such as multiple-access channels, where coding can be used to improve the sensing performance a the multiple transmitters (by conveying information from one transmitter to the other through the generalized feedback links) and thus the code construction used for data communication should be adapted to integrate also coding for sensing.}
	
An interesting line of future research is the characterization of the capacity-distortion tradeoff  for channels with memory.  In this case, feedback increases capacity even on the point-to-point channel. For channels with memory, obtaining good state estimation (sensing) and communication performances seem less contradicting goals, because a good state estimation is also useful to  improve communication.%
	
	
	\section*{Acknowledgements}
	The authors would like to thank Gerhard Kramer for his support on an early version of this paper.
	

	\appendices

	\section{Proof of Lemma~\ref{1user:lemma:Shat}} \label{app:lemma:Shat}
	Recall that $\hat{S}^n=h(X^n,  Z^n)$, and write for each $i=1, \cdots, n$:
	\begin{IEEEeqnarray}{rCl}\label{eq:step1}
	&&	\mathbb{E} \left[d(S_{i},  \hat{S}_{i})\right] 	=
	\nonumber\\
	&& \mathbb{E}_{X^n, Z^n}\left[ \mathbb{E}[d(S_{i},  \hat{S}_{i}) |X^n, Z^n] \right]  
	\nonumber	\\
		&\overset{\mathrm{(a)}}=&\sum_{x^n, z^n} P_{X^nZ^n}(x^n, z^n) 
		\nonumber
		\\
		&&
		\hspace{1cm}
		\sum_{\hat{s}\in \Sc}  P_{\hat{S}_{i}|X^nZ^n} (\hat{s}|x^n, z^n)
		\nonumber\\
	&&\hspace{2cm}	\sum_{s} P_{S_{i}|X_iZ_i}(s|x_i, z_i)  d(s,  \hat{s})
\nonumber		\\
		&\geq& \sum_{x^n, z^n} P_{X^nZ^n}(x^n, z^n) 		
		\nonumber\\
	&&\hspace{1.5cm}
		\min_{\hat{s}\in \Sc}\sum_{s} P_{S_{i}|X_iZ_i} (s|x_i,  z_i)  d( s, \hat{s}) 
	\nonumber	\\
		&=& \mathbb{E}[d(S_{i},  \hat{s}^*(X_i, Z_i))], 
	\end{IEEEeqnarray}
	where $(a)$ holds by the Markov chain 
	\begin{equation}
		\Big(X^{i-1}, X_{i+1}^{n},  Z^{i-1}, Z_{i+1}^n,  \hat{S}_{i}\Big) \markov (X_i, Z_i)\markov S_{i}.
	\end{equation}
	Summing over all $i=1,\ldots, n$, we thus obtain: 
	\begin{IEEEeqnarray}{rCl}\label{eq:sum}
		\Delta^{(n)} & = & \frac{1}{n}\sum_{i=1}^n \mathbb{E} \left[d(S_{i},  \hat{S}_{i})\right] \\
		& \geq &  \frac{1}{n} \sum_{i=1}^n \mathbb{E}[d(S_{i},  \hat{s}^*(X_i, Z_i))],
	\end{IEEEeqnarray}which yields the desired conclusion.

	\section{Proof of Theorem~\ref{th:tradeoff}} \label{app:main_result_P2P}

\subsubsection{Converse} 
Fix a sequence (in $n$) of $(2^{n\R},n)$ codes such that  Limits~\eqref{eq:asymptotics_user1} hold. By Fano's inequality there exists a sequence $\epsilon_n \to 0$ as $n\to \infty$ so that:
\begin{IEEEeqnarray}{rCl}
n\R & \leq& I(W; Y^n,  S^n) + n\epsilon_n \nonumber\\
&=&  I(W; Y^n \mid  S^n)+ n\epsilon_n\nonumber\\
&=& \sum_{i=1}^n H(Y_i \mid Y^{i-1}, S^n) 
\nonumber\\&&\hspace{1cm}
- H(Y_i  \mid W,  Y^{i-1} ,S^n)+ n\epsilon_n \nonumber \\
&\overset{{(a)}}\leq & \sum_{i=1}^n H(Y_i \mid   S_i) 
\nonumber\\&&\hspace{1cm}
- H(Y_i \mid X_i,  Y^{i-1},  W,  S^n) + n\epsilon_n \nonumber\\
&\overset{{(b)}} =&  \sum_{i=1}^n H(Y_i \mid  S_i) - H(Y_i \mid  X_i,   S_i)+ n\epsilon_n\nonumber \\
&=& \sum_{i=1}^n I(X_i;Y_i \mid  S_i)+ n\epsilon_n  
\end{IEEEeqnarray}
where $(a)$ holds because conditioning can only reduce entropy; and $(b)$ holds because $(W,   Y^{i-1},  S^{i-1}, S_{i+1}^n) - (S_i,  X_i) - Y_i $ form a Markov chain.
We continue as:
\begin{IEEEeqnarray}{rCl}
\R & \leq& \frac{1}{n} \sum_{i=1}^n I(X_i;Y_i \mid  S_i)+ \epsilon_n \nonumber \\
&\stackrel{(c)}{\leq}& \frac{1}{n} \sum_{i=1}^n \C_{\textnormal{inf}}
\Bigg(
\sum_x P_{X_i} (x) c(x),
\nonumber\\&&\hspace{3cm}
\sum_x P_{X_i} (x) b(x) 
 \Bigg) 
+ \epsilon_n \nonumber\\
&\stackrel{(d)}{\leq}& \C_{\textnormal{inf}}
\Bigg(\frac{1}{n} \sum_{i=1}^n\sum_x P_{X_i} (x) c(x), 
\nonumber\\
&&\hspace{2cm}\frac{1}{n} \sum_{i=1}^n \sum_x P_{X_i} (x) b(x) \Bigg)
+
 \epsilon_n \nonumber\\
&\stackrel{(e)}{\leq}& \C_{\textnormal{inf}}(\D, \B)
\end{IEEEeqnarray}
where $(c)$ holds by the definition of $\C_{\textnormal{inf}}(\D, \B)$, and  $(d)$ and $(e)$ hold by Lemma \ref{lemma:properties}.
\subsubsection{Achievability}
 Fix $P_X(\cdot)$ and functions $\hat{h}(x,z)$ that achieve $C(\D/(1+\epsilon), \B)$, where $\D$ is the desired distortion and $\B$ is the target cost,  for a small positive number $\epsilon>0$. We define the joint pmf $P_{SXY}:=P_S P_X P_{Y|SX}$. 
\paragraph{Codebook generation} 
Generate $2^{n\R}$ sequences $\{x^n(w)\}_{w=1}^{2^{n\R}}$ by randomly and 
independently  drawing each entry according to $P_X$. This defines the codebook $\Cc=\{x^n(w)\}_{w=1}^{2^{n\R}}$, which is revealed to the encoder and the decoder. 
\paragraph{Encoding} To send a message $w\in \Wc$,  the encoder transmits $x^n(w)$. 
\paragraph{Decoding} Upon observing outputs $Y^n=y^n$ and state sequence $S^n=s^n$, the decoder looks for an  index $\hat{w}$ such that 
\begin{equation}
(s^n,   x^n(\hat{w}),y^n) \in \Tc_{\epsilon}^{(n)} (P_{SXY}).
\end{equation}
If exactly one such index exists, it declares $\hat{W}=\hat{w}$. Otherwise, it declares an error.
\paragraph{Estimation} Assuming that it sent the input sequence $X^n=x^n$ and observed the feedback signal $Z^n=z^n$, the  encoder computes the reconstruction sequence as:
\begin{equation}
\hat{S}^n = ( \hat{s}^*(x_1, z_1), \hat{s}^*(x_2, z_2), \ldots, \hat{s}^*(x_n, z_ n)).
\end{equation}
\paragraph{Analysis} We start by  analyzing  the probability of error and the distortion averaged over the random code construction.  Given the symmetry of the code construction, we can condition on the event $W=1$. 

We then notice that the decoder makes an error, i.e., declares nothing or $\hat{W}\neq 1$ if, and only if, one or both of the following events occur: 
\begin{IEEEeqnarray}{rCl}
\Ec_1 &=& \big\{  (S^n,  X^n(1), Y^n) \notin \Tc_{\epsilon}^{(n)}  (P_{SXY})\big\} \\
\Ec_2 &=& \big\{  ( S^n,  X^n({w}'),Y^n) \in \Tc_{\epsilon}^{(n)} (P_{SXY}) 
\nonumber\\
&&\hspace{3cm}\text{for some $w'\neq 1$} \big\}. 
\end{IEEEeqnarray}
where we defined $P_{SXY}:=P_SP_XP_{Y\mid SX}$. Thus, by the union bound:
\begin{equation}
P_e^{(n)} = P(\Ec_1 \cup \Ec_2 )  \leq P(\Ec_1) +  P( \Ec_2).
\end{equation}
The first term goes to zero as $n\rightarrow \infty$ by the weak law of large numbers. The second term also tends to zero as $n\rightarrow \infty$ if $\R < I(X;Y|S)$ by the independence of the codewords and the packing lemma \cite[Lemma 3.1]{el2011network}. Therefore,  $ P_e^{(n)}$ tends to zero as $n\rightarrow \infty$ whenever $R < I(X;Y|S)$.

The expected distortion (averaged over the random codebook, state and channel noise) can   be upper bounded as
\begin{IEEEeqnarray}{rCl}
&&\hspace{-0.65cm}\Delta^{(n)}
\nonumber\\
&= &\frac{1}{n} \sum_{i=1}^n \E{d(S_i,  \hat{S}_i)} \\
& = &\frac{1}{n} \sum_{i=1}^n \E{d(S_i,  \hat{S}_i ) \big| \hat{W} \neq 1}  \Pr(\hat{W} \neq 1)  
\nonumber\\
&+& \frac{1}{n} \sum_{i=1}^n \E{d(S_i,  \hat{S}_i ) \big |\hat{W}=1} \Pr( \hat{W}=1) \\
&\leq& \D_{\max} P_e
 \nonumber\\
&&
\hspace{0.5cm}
+
 \frac{1}{n} \sum_{i=1}^n \E{d(S_i,  \hat{S}_i )\big |\hat{W}=1}   
 \cdot (1-P_e). \IEEEeqnarraynumspace \label{eq:Dn_code}
\end{IEEEeqnarray}
In the event of correct decoding, i.e., $\hat{W}=1$, 
\begin{equation}
(S^n,  X^n(1), Y^n) \in \Tc_{\epsilon}^{(n)}(P_{S}P_{X} P_{Y|SX}),
\end{equation}
and since $\hat{S}_i=\hat{s}^*(X_i,Z_i)$, also 
{\begin{equation}
(S^n,  X^n(1), \hat{S}^n) \in \Tc_{\epsilon}^{(n)}\left(P_{SX\hat{S}}\right),
\end{equation}
where $P_{SX\hat{S}}$ denotes the joint marginal pmf of  $P_{SXZ\hat{S}}(s,x,z,\hat{s}):=P_{S}(s)P_{X} (x)P_{Z|SX}(z|s,x) \mathbbm{1}\{\hat{s}=\hat{s}^*(x,z)\}$.}
Then,  
\begin{equation}\label{eq:Dn_comp}
\varlimsup_{n\to \infty} \frac{1}{n} \sum_{i=1}^n \E{d(S_i,  \hat{S}_i ) |\hat{W}=1}  \leq (1+\epsilon)  \E{d(S,  \hat{S})},
\end{equation}
{for $(S,\hat{S})$ following the marginal of the pmf $P_{SXZ\hat{S}}$ defined above.} Assuming that $\R<I(X;Y|S)$, and thus $P_e \to 0$ as $n\to \infty$, we obtain from  \eqref{eq:Dn_code} and \eqref{eq:Dn_comp}: 
\begin{equation}
\varlimsup_{n\to\infty} \Delta^{(n)}=  (1+\epsilon)   \E{d(S,  \hat{S})}.
\end{equation}
Taking finally $\epsilon \downarrow 0$, we can conclude that the error probability and distortion constraint \eqref{eq:asymptotics:Pe}, \eqref{eq:asymptotics:dist} hold (averaged over the random code constructions, the random states, and the noise in the channel) whenever 
\begin{IEEEeqnarray}{rCl}
\R &<& I(X;Y \mid S), \\
\E{d(S,  \hat{S})} & <&  \D.
\end{IEEEeqnarray}
Notice that the cost constraint \eqref{eq:cost_constraint} is fullfilled by  construction. 
By standard arguments it can then be shown that there must exist at least one sequence of  deterministic code books $\mathcal{C}_n$ so that constraints \eqref{eq:asymptotics_user1} hold.

	%
	
	\section{Blahut-Arimoto Type Algorithm  to evaluate Theorem~\ref{th:tradeoff}}\label{app:BA}
	Through simple time-sharing arguments, it can be shown that for given feasible $\B$, the set of achievable $(\R,\D)$ pairs over the single-receiver channel is convex. Its boundary is thus characterized by solving the following parameterized optimization problem for each $\mu \geq 0$:  
	\begin{IEEEeqnarray}{rCl}
	\hspace{-1cm}	L_{\mu}(\B)&:=&\max_{P_X\in \Pc_{\Bc}}  \Bigg[ \Ic(P_X, P_{Y\mid XS} \mid P_S) 
		\nonumber\\&&\hspace{3cm} - \mu \sum_{x\in \Xc} P_X(x) c(x) \Bigg].
		\label{obj3} 
	\end{IEEEeqnarray}
	Notice that the conditional mutual information functional can  explicitly be written as:
	\begin{IEEEeqnarray}{rCl}
		&&\hspace{-1cm}\Ic(P_X, P_{Y|XS} \mid  P_S) \nonumber\\
	&&= \sum_{x\in \Xc}\sum_{s\in \Sc}  \sum_{y\in \Yc} P_S(s)  P_X(x) P_{Y|XS}(y|xs)
	\nonumber\\
	&&\hspace{4.2cm} \log \frac{P_{Y|XS}(y|xs)}{P_{Y|S}(y|s)}. 
	\end{IEEEeqnarray} 
	for the state pmf $P_S$ and the SDMB transition law $P_{YZ|XS}$.  
	
For $\mu=0$, the optimization in \eqref{obj3}  yields the capacity of the SDMC under the input cost constraint (disgarding the distortion constraint), 
	while for $\mu \to \infty$, it yields the minimum possible distortion subject to the same input cost constraint. 
	We remark that the parameterized optimization problem above differs from the standard Blahut-Arimoto algorithm with cost constraints \cite[Section IV]{blahut1972computation} only in that 1) the objective function \eqref{obj3} includes an additional penalty term $- \mu \sum_{x\in \Xc} P_X(x) c(x)$
	and 2) {the mutual information functional} is $I(X;Y \mid S )$ 
	instead of $I(X;Y)$, {which  reflects} the state-dependent channel and the state knowledge at the receiver.  Since the penalty  
	term is additive and linear in $P_X$,  all concavity properties desired for a Blahut-Arimoto type algorithm remain valid. 
	The following Theorem~\ref{theorem:modifiedBA} 
	can then be proved by standard alternating optimization techniques, in analogy to the proof of the Blahut-Arimoto algorithm \cite{arimoto1972algorithm,blahut1972computation}. 
	
For any  conditional pmf $Q_{X|YS}$ on $X$ given  $(Y, S)$, define the function
		\begin{IEEEeqnarray}{rCl} 
\lefteqn{	J_\mu(P_X, P_{Y|XS},P_S,  Q_{X|YS})  : =  }  \hspace{1.3cm} 
	\nonumber\\
	&&	\sum_{x \in \Xc} \sum_{s\in \Sc} \sum_{y\in\Yc} P_X(x) P_S(s)P_{Y|XS}(y|x,s)  \hspace{1cm} \nonumber\\
	& & \hspace{2.3cm}\cdot\log \frac{Q_{X|YS}(x|y,s)}{P_X(x)} \nonumber \\
	&&- \mu \sum_{x\in \Xc} P_X(x) c(x).
	\end{IEEEeqnarray}
	\begin{theorem} \label{theorem:modifiedBA}
		Let the state pmf $P_S$ and the SDMC transition law $P_{YZ|XS}$ be given. The following statements hold: \\
		a) For any $\mu, \B \geq 0$:
		\begin{align} 
			L_{\mu}(\B) = \max_{P_X \in \Pc(\B)}  \max_{Q_{X|YS}} J_\mu(P_X,P_S, P_{YZ|XS}, Q_{X|YS} ). \label{BAa}
		\end{align} 
		b) Fix $P_X \in  \Pc(\B)$. Then, $J_\mu(P_X,P_S, P_{YZ|XS}, Q_{X|YS} )$ is maximized by choosing $Q_{X|YS}$ as 
		\begin{IEEEeqnarray}{rCl}
			Q^\star_{X|YS}(x|y,s) &=& \frac{P_X(x) P_{Y|XS}(y|xs)}{\sum_{x'} P_X(x') P_{Y|XS}(y|x's)}, \nonumber\\
			&&\qquad (x,y,s) \in\Xc\times \Yc \times \Sc, \label{BAb}
		\end{IEEEeqnarray}
		c) Fix $ Q_{X|YS}$.  Then, $J_\mu(P_X,P_S, P_{YZ|XS}, Q_{X|YS} )$
		is maximized by choosing $P_X \in \Pc(B)$ as 
		\begin{equation} \label{BAc}
			P_X^\star(x) = \frac{2^{g(x)}}
			{\sum_{x'} 2^{g(x')}}, \quad x \in \Xc,
		\end{equation}
		where 
		\begin{IEEEeqnarray}{rCl}
			g(x) &=&  \sum_s \sum_y P_S(s) P_{Y|XS}(y|xs) \log Q_{X|YS}(x|ys) \nonumber\\
			&&\hspace{3.5cm}- \lambda b(x)  -  \mu c(x)
		\end{IEEEeqnarray}
		and $\lambda \geq 0$ is chosen so that $\sum_{x\in \Xc}P_X^\star(x) b(x)= B$  	
	 when evaluated for $P_X^\star$ in \eqref{BAc}, or if no such $\lambda$ exists, then it is set to $\lambda=0$. 
		\hfill $\square$
	\end{theorem}
	\begin{IEEEproof}We  give the proofs for  the three results $a)$--$c)$
	\begin{enumerate}
	\item[$a)$] Fix pmfs $P_S,P_X, P_{Y|XS}$ and define $P_{SXY}(s,x,y):=P_S(s)P_X(x) P_{Y|XS}(y|x,s)$. Notice that 
\begin{IEEEeqnarray}{rCl} 
	\lefteqn{	J_\mu(P_X, P_{Y|XS},P_S,  Q_{X|YS})  } \nonumber \\
	& = & 
		\sum_{x \in \Xc} \sum_{s\in \Sc} \sum_{y\in\Yc} P_{SXY}(s,x,y) 
		\nonumber \\
	&&\hspace{2.3cm}
		\cdot\log \frac{Q_{X|YS}(x|ys)}{P_X(x)}
		\nonumber\\
		&& - \mu \sum_{x\in \Xc} P_X(x) c(x)\label{aa}\\
		& =& 		\sum_{x \in \Xc} \sum_{s\in \Sc} \sum_{y\in\Yc} P_{SXY}(s,x,y) 
			\nonumber \\
		&&\hspace{2.3cm}
		\cdot
		\log \frac{Q_{X|YS}(x|ys) P_{SY}(s,y)}{P_X(x)P_{SY}(s,y) } 
			\nonumber \\
		&&
		- \mu \sum_{x\in \Xc} P_X(x) c(x) \IEEEeqnarraynumspace\\
		& =& 		\sum_{x \in \Xc} \sum_{s\in \Sc} \sum_{y\in\Yc} P_{SXY}(s,x,y)  	\nonumber \\
		&& \hspace{1cm}\Bigg[ \log \frac{Q_{X|YS}(x|ys)P_{SY}(s,y)}{P_{SXY}(s,x,y)} 
		\nonumber\\
		&&\hspace{3cm}+ \log \frac{P_{SXY}(s,x,y)}{P_X(x) P_{SY}(s,y)}\Bigg]
			\nonumber \\
		&&
		-  \mu \sum_{x\in \Xc} P_X(x) c(x)
		\\
			& =&  -  \underbrace{D(P_{SXY} \| Q_{X\mid YS} P_{SY})}_{\leq 0}
			\nonumber \\
			&&\hspace{0cm}
			+ \sum_{x \in \Xc} \sum_{s\in \Sc} \sum_{y\in\Yc} P_{SXY}(s,x,y) 
			\nonumber \\
			&&\hspace{2.3cm}
		\cdot \log \frac{P_{SXY}(s,x,y)}{P_X(x) P_{SY}(s,y)}
			\nonumber\\
			&&
			-  \mu \sum_{x\in \Xc} P_X(x) c(x)\ \\
			& \leq & 	 \sum_{x \in \Xc} \sum_{s\in \Sc} \sum_{y\in\Yc} P_{SXY}(s,x,y)  
			\nonumber \\
			&&\hspace{2.3cm}
		\cdot \log \frac{P_{SXY}(s,x,y)}{P_X(x) P_{SY}(s,y)}
			\nonumber \\
		&&	- \mu \sum_{x\in \Xc} P_X(x) c(x)\ \\	
			& = & 	\sum_{x \in \Xc} \sum_{s\in \Sc} \sum_{y\in\Yc} P_{SXY}(s,x,y)  
				\nonumber \\
			&&\hspace{2.3cm}
		\cdot
			 \log \frac{P_{Y|XS}(y|x,s)}{P_{Y|S}(y|s)}
				\nonumber \\
			&&-  \mu \sum_{x\in \Xc} P_X(x) c(x)\ \\			
				&=& 	\Ic(P_X, P_{Y|XS} \mid  P_S)
			  -  \mu \sum_{x\in \Xc} P_X(x) c(x),\label{ee}\IEEEeqnarraynumspace
	\end{IEEEeqnarray}
	where $D( \cdot \| \cdot)$ denotes the Kullback-Leibler Divergence \cite{cover2006elements}. Above inequality  holds with equality when $Q_{X|YS}=P_{X|YS}$, where the latter stands for the conditional marginal pmf of $P_{SXY}$. Therefore, $\max_{Q_{X|YS}} J_\mu(P_X, P_{Y|XS},P_S,  Q_{X|YS})$ equals the right-hand side of \eqref{ee}, which directly implies \eqref{BAa}.
	\item[$b)$] For fixed $P_X$, according to \eqref{aa}--\eqref{ee}, $J_\mu(P_X, P_{Y|XS},P_S,  Q_{X|YS})$ is maximized by the choice
	\begin{IEEEeqnarray}{rCl}
	&&\hspace{-1.5cm}Q^\star_{X|YS}(x|y,s)
	\nonumber\\
	&=&\frac{P_{SXY}(s,x,y)}{P_{SY}(s,y)} 
	\nonumber\\
	&=& \frac{P_S(s) P_{X}(x) P_{Y|XS}(y|x,s)}{\sum_{x'} P_S(s)P_{X}(x') P_{Y|XS}(y|x',s) }.
	\end{IEEEeqnarray}
	\item[$c)$] The function $J_\mu(P_X, P_{Y|XS},P_S,  Q_{X|YS})$ is  concave in $P_X$ and we can thus  use the KKT conditions to find the maximum value $\max_{P_X}J_\mu(P_X, P_{Y|XS},P_S,  Q_{X|YS})$ over all pmfs $P_X$ satisfying $\sum_{x \in \Xc} P_X(x) b(x)\leq B$. In this case, the KKT conditions are summarized by the two constraints 
	\begin{IEEEeqnarray}{rCl}\label{eq:f}
&&	\sum_{s\in \Sc} \sum_{y\in\Yc}  P_S(s)P_{Y|XS}(y|x,s) \log \frac{Q_{X|YS}(x|y,s)}{P_X(x)} 
	\nonumber\\
	 &&-\sum_{s\in \Sc} \sum_{y\in\Yc}  P_S(s)P_{Y|XS}(y|x,s)  \ln(2)^{-1}
	 	\nonumber\\
	 &&
	 - \mu  c(x) - \lambda b(x) 	\nonumber\\
	 &&= \xi ,
	\end{IEEEeqnarray}
	and 
		\begin{equation}
	\sum_{x \in \Xc} P_X(x) b(x) \leq B,
	\end{equation}
	and $\lambda =0$ if above inequality is strict, and the Lagrange multiplier $\xi$ ensures that the pmf $P_X$ sums to $1$. 
Since 	$\sum_{s\in \Sc} \sum_{y\in\Yc}  P_S(s)P_{Y|XS}(y|x,s) =1$, Equation~\eqref{eq:f} is equivalent to 
		\begin{IEEEeqnarray}{rCl}
\lefteqn{\log P_X(x) } \; \nonumber\\
&=&\sum_{s\in \Sc} \sum_{y\in\Yc}  P_S(s)P_{Y|XS}(y|x,s) 
 \log Q_{X|YS}(x|y,s)  
 \nonumber\\
 &&\hspace{1.3cm}- \mu  c(x) - \lambda b(x)-\xi - \ln(2)^{-1},
	\end{IEEEeqnarray}
	and thus to
\begin{IEEEeqnarray}{rCl}
\lefteqn{P_X(x) =}
\nonumber\\
&&2^{ \underset{s\in \Sc}{\sum} \underset{y\in\Yc}{\sum}  P_S(s)P_{Y|XS}(y|x,s)  \log Q_{X|YS}(x|y,s)  - \mu  c(x) - \lambda b(x)}
\nonumber\\
&& \cdot  2^{-\xi -\ln(2)^{-1} }.
	\end{IEEEeqnarray}
	Choosing finally the Lagrange multipliers $\xi$ and  $\lambda$ so that the pmf $P_X$ sums to $1$ and the cost constraint   $\sum_{x \in \Xc} P_X(x) b(x)\leq B$ holds with equality,  we obtain the result in \eqref{BAc}. If no such $\lambda$ exists, then we set $\lambda=0$. 
	\end{enumerate}
	\end{IEEEproof}
Each of the two  maximizations in \eqref{BAa} is a convex optimization problem. The solution $L_{\mu}(B)$ can thus be obtained by an alternating maximization procedure. For our problem at hand, this alternating maximization procedure is described in Algorithm 1. 
	The algorithm  produces an optimal  convergent input distribution $P^{\infty}_{X,\mu}$, which can be used to compute
a pair of capacity-distortion values $(\C_\mu(\B) , \D_{\mu}(\B))$ on the boundary of the capacity-distortion tradeoff   for given input cost $\B$: 
	\begin{subequations}
		\begin{eqnarray}
			\C_\mu(\B) & = & \Ic\left(P^{(\infty)}_{X,\mu}, P_{Y|XS} \Big| P_S \right) \label{rate-mu} \\
			\D_\mu(\B) & = & \sum_x c(x) P^{(\infty)}_{X,\mu}(x). \label{distortion-mu}
		\end{eqnarray}
	\end{subequations}
Varying $\mu$,  the entire capacity-distortion tradeoff  is obtained for fixed input cost $\B$.
	Moreover, by varying the input cost $\B$, the whole 
	boundary of the achievable capacity-distortion-cost tradeoff region is obtained.

	\begin{algorithm}\label{alg:BA}
		\caption{Blahut-Arimoto Type Algorithm for SDMCs}
		Fix  $\mu \geq 0$. 
		\begin{algorithmic}[1]
			
			\Procedure{Tradeoff }{$\C_\mu(\B),\D_{\mu}(\B)$}
			\State Initialize 	$P^{(0)}_X(x) = \frac{1}{|\Xc|}$ for all $x\in \Xc$. 
			\For{$k = 1, 2, 3,  \ldots$} 
			
			\State \begin{equation} 
				Q^{(k)}_{X|YS}(x|y,s) = \frac{P^{(k-1)}_X(x) P_{Y|XS}(y|x,s)}{\sum_{x'} P^{(k-1)}_X(x') P_{Y|XS}(y|x',s)}.  \label{BAstep1}
			\end{equation}
			\State Choose $\lambda^{(0)} > 0$.
			\For {$\ell = 1, 2, \ldots$}  
			\State Compute  $p^{(\ell)}(x) = \frac{e^{g^{(\ell)}(x)}}{\sum_{x'} e^{g^{(\ell)}(x')}}$ with
			
			\begin{IEEEeqnarray}{rCl}\label{BAstep21}
				g^{(\ell)}(x) &=  &\sum_{s, y}  P_S(s) P_{Y|XS}(y|x,s) \log Q^{(k)}_{X|YS}(x|y,s) \nonumber \\
				&&	- \lambda^{(\ell-1)} b(x) 
				\nonumber \\
				&&\hspace{-1cm}
				- \mu \sum_{(x,s,z)\in \Xc\times \Sc \times \Zc}  P_X(x)P_S(s) P_{Z|XS}(z|x,s) 
					\nonumber \\
				&&\hspace{3.5cm}
				d(s, s^*(x,z))
			\end{IEEEeqnarray}
			
			\State  Update dual variables:
			\begin{IEEEeqnarray}{rCl} 
			\hspace{-0.5cm}	\lambda^{(\ell)} & = & \left [ \lambda^{(\ell-1)} + \alpha_\ell  \left ( \sum_x b(x) p^{(\ell)} (x) - B \right ) \right ]_+ 
			\end{IEEEeqnarray}
			\Comment {where $\alpha_\ell $ is the gradient adaptation step}
			
			\EndFor
			\State  Let $P^{(k)}_X(x) =\lim_{\ell \rightarrow \infty}  p^{(\ell)}(x)$. 
			\EndFor

			\EndProcedure
		\end{algorithmic}
	\end{algorithm}
	%
	%

	\section{Proof of Corollary~\ref{cor1:notradeoff}}\label{app:notradeoff}
	It suffices to show that under the described conditions, the distortion constraint \eqref{eq:asymptotics:dist} 
	does not depend on $P_X$. 
	To this end, we define $T=\psi(X,Z)$ and rewrite  the expected distortion as:
	\begin{IEEEeqnarray}{rCl}
		&&\mathbb{E}[d(S,  \hat{S})]
			\nonumber\\
		&&=\sum_{(x,z)\in\Xc \times  \Zc } P_{X Z}(x,z)\sum_{s \in\Sc} P_{S| XZ}(s| x,z) 
		\nonumber\\
		&&\hspace{4.5cm}
		\cdot d(s,  \hat{s}^*(x, z))  \\
		&&\stackrel{(a)}=\sum_{(x,z)\in\Xc \times  \Zc } P_{X Z}(x,z) 
			\nonumber\\
		&&\hspace{0.5cm}
	\cdot	\min_{s'\in\hat{\Sc}} \sum_{(s,t) \in\Sc\times \mathcal{T}} P_{ST| XZ}(s,t| x,z) d(s,  s')
		\\
		&&\stackrel{(b)}= \hspace{-4mm} \sum_{(x,z,t)\in\Xc \times  \Zc\times \mathcal{T}} \hspace{-4mm}
		P_{X Z}(x,z)\mathbbm{1}\{t=\psi(x,z)\} 
			\nonumber\\
		&&\hspace{2.5cm}
		 \cdot \min_{s' \in \hat \Sc}\sum_{s \in\Sc} P_{S| T}(s| t) d(s, s')
		\\
		&&=\sum_{t\in \mathcal{T}} P_{T}(t) \min_{s' \in \hat \Sc}\sum_{s \in\Sc} P_{S| T}(s| t) d(s, s')
	\end{IEEEeqnarray}
	{where ${(a)}$ holds by the definition of  $\hat{s}^*(x, z)$ and the law of total probability; and  $(b)$  by   the Markov chain  $S \markov T \markov (X,Z)$, see  \eqref{1user:cond2}, and because $T$ is a function of $X,Z$. } The independence of the pair $(T,S)$ with $X$ from \eqref{1user:cond1}, 
	together with the above expression implies that the expected distortion does not depend on the choice of the input distribution $P_X$. Hence, we can conclude that for any given $\B\geq 0$, the rate-distortion tradeoff function $\C(\D, \B)$ is constant over  all $\D\geq \D_{\min}$ and coincides with the capacity of the SDMC  $\C_{\textnormal{NoEst}}(\B)$.
\section{Proof of  Remark~\ref{rem:CSI}}\label{app:imperfectCSIR}
\subsubsection{Converse} 
Fix a sequence (in $n$) of $(2^{n\R},n)$ codes such that  Limits~\eqref{eq:asymptotics_user1} hold. By Fano's inequality there exists a sequence $\epsilon_n \to 0$ as $n\to \infty$ so that:
\begin{IEEEeqnarray}{rCl}
	n\R & \leq& I(W; Y^n,  S_R^n) + n\epsilon_n \nonumber\\
	&=&  I(W; Y^n \mid  S_R^n)+ n\epsilon_n\nonumber\\
	&=& \sum_{i=1}^n H(Y_i \mid Y^{i-1}, S_R^n)
		\nonumber\\&&\hspace{1cm}
	 - H(Y_i  \mid W,  Y^{i-1} ,S^n_R)+ n\epsilon_n \nonumber \\
	&\overset{{(a)}}\leq & \sum_{i=1}^n H(Y_i \mid   S_{R,i})
	\nonumber\\&&\hspace{1cm}
	 - H(Y_i \mid X_i,  Y^{i-1},  W,  S_R^n) + n\epsilon_n \nonumber\\
	&\overset{{(b)}} =&  \sum_{i=1}^n H(Y_i \mid  S_{R,i}) 
		\nonumber\\&&\hspace{1cm}
		- H(Y_i \mid  X_i,   S_{R,i})+ n\epsilon_n\nonumber \\
	&=& \sum_{i=1}^n I(X_i;Y_i \mid  S_{R,i})+ n\epsilon_n  
\end{IEEEeqnarray}
where $(a)$ holds because conditioning can only reduce entropy; and $(b)$ holds because $(W,   Y^{i-1},  S_{R}^{i-1}, S_{R,i+1}^n) - (S_{R,i},  X_i) - Y_i $ form a Markov chain.

Define
			\begin{equation}\label{eq:Cinf_imp}
			\C^{\textnormal{imp}}_{\textnormal{inf}}(\D, \B) := \max_{P_X \in \Pc_{\D} \cap \Pc_{\B} } I(X;Y\mid S_R) .
		\end{equation}
Then, we have
\begin{IEEEeqnarray}{rCl}
	\R & \leq& \frac{1}{n} \sum_{i=1}^n I(X_i;Y_i \mid  S_{R,i})+ \epsilon_n \nonumber \\
	&\stackrel{(c)}{\leq}& \frac{1}{n} \sum_{i=1}^n \C^{\textnormal{Imp}}_{\textnormal{inf}}
	\Bigg(
	\sum_x P_{X_i} (x) c(x),
	\nonumber\\
	&&
	\hspace{3cm}
	\sum_x P_{X_i} (x) b(x) 
	\Bigg) + \epsilon_n \nonumber\\
	&\stackrel{(d)}{\leq}& \C^{\textnormal{Imp}}_{\textnormal{inf}}
	\Bigg(
	\frac{1}{n} \sum_{i=1}^n\sum_x P_{X_i} (x) c(x), 
	\nonumber\\
	&&
	\hspace{2cm}\frac{1}{n} \sum_{i=1}^n \sum_x P_{X_i} (x) b(x) 
	\Bigg) + \epsilon_n \nonumber\\
	&\stackrel{(e)}{\leq}& \C^{\textnormal{Imp}}_{\textnormal{inf}}(\D, \B)
\end{IEEEeqnarray}
where $(c)$ holds by the definition of $\C^{\textnormal{Imp}}_{\textnormal{inf}}(\D, \B)$  in \eqref{eq:Cinf_imp}, and  $(d)$ and $(e)$ hold by similar monotonicity and concavity properties as stated in  Lemma \ref{lemma:properties}.
\subsubsection{Achievability}
Fix $P_X(\cdot)$ and a function $\hat{h}(x,z)$ that achieve $C(\D/(1+\epsilon), \B)$, where $\D$ is the desired distortion and $\B$ is the target cost,  for a small positive number $\epsilon>0$. We define the joint pmf $P_{SS_RXY}:=P_{SS_R} P_X P_{Y\mid SS_R X}$.  Codebook generation, encoding, and estimation are  as described in the proof of Theorem~\ref{th:tradeoff}; the only  difference is in the  decoding at the receiver, where the state  $S^n$ has to be replaced by $S_R$. In more details: 
\paragraph{Decoding} Upon observing outputs $Y^n=y^n$ and state sequence $S_R^n=s_R^n$, the decoder looks for an  index $\hat{w}$ such that 
\begin{equation}
	(s_R^n,   x^n(\hat{w}),y^n) \in \Tc_{\epsilon}^{(n)} (P_{S_RXY})
\end{equation}
where $P_{S_RXY}=\sum_{\mathcal{S}} P_{SS_RXY}$.
If exactly one such index exists, it declares $\hat{W}=\hat{w}$. Otherwise, it declares an error.
\paragraph{Analysis} We start by  analyzing  the probability of error and the distortion averaged over the random code construction.  Given the symmetry of the code construction, we can condition on the event $W=1$. 
We then notice that the decoder makes an error, i.e., declares nothing or $\hat{W}\neq 1$ if, and only if, one or both of the following events occur: 
\begin{IEEEeqnarray}{rCl}
	\Ec_1 &=& \big\{  (S_R^n,  X^n(1), Y^n) \notin \Tc_{\epsilon}^{(n)}  (P_{XS_RY})\big\} 
	\\
	&\textnormal{or}&\nonumber
	\\
	\Ec_2 &=& \big\{  ( S_R^n,  X^n({w}'),Y^n) \in \Tc_{\epsilon}^{(n)} (P_{XS_RY})
	\nonumber
	\\
	&&\hspace{3.3cm} \text{for some $w'\neq 1$} \big\}. \IEEEeqnarraynumspace
\end{IEEEeqnarray}
Thus, by the union bound:
\begin{equation}
	P_e^{(n)} = P(\Ec_1 \cup \Ec_2 )  \leq P(\Ec_1) +  P( \Ec_2),
\end{equation}
where we consider the average probability of error not only over the random channel noise and states but also over the random codeconstruction. 
The first term goes to zero as $n\rightarrow \infty$ by the weak law of large numbers.  By the independence of the codewords and the packing lemma \cite[Lemma 3.1]{el2011network}, the second term also tends to zero as $n\rightarrow \infty$ 
\begin{equation}\label{imp:rate}
	R < I(X;Y|S_R).
\end{equation}
%
%

\textcolor{black}{Following similar steps as in the analysis in Appendix~\ref{app:main_result_P2P}, and using the fact that by the weak law of large numbers with probability tending  to 1 as $n\to \infty$:
\begin{equation}
	(S^n,S_R^n,  X^n(1), Y^n) \in \Tc_{\epsilon}^{(n)}(P_{X}P_SP_{S_R} P_{Y|SS_RX}), 
\end{equation}
it can be shown that \begin{equation}
\varlimsup_{n\to\infty} \Delta^{(n)}=  (1+\epsilon)   \E{d(S,  \hat{s}^*(X,Z))}.
\end{equation}
Thus when $\epsilon \downarrow 0$,  the distortion constraint \eqref{eq:asymptotics:dist} holds (averaged over the random code constructions, the random states, and the noise in the channel) whenever 
\begin{IEEEeqnarray}{rCl}\label{imp:dist} 
\E{d(S,  \hat{s}^*(X,Z))} & <&  \D.
\end{IEEEeqnarray}
Notice that the cost constraint \eqref{eq:cost_constraint} is fullfilled by  construction. }

{\color{black}By standard arguments it can then be shown that there must exist at least one sequence of  deterministic code books $\mathcal{C}_n$ so that constraints \eqref{eq:asymptotics_user1} are satisfied under conditions  \eqref{imp:rate} and \eqref{imp:dist}.
}
	\section{Converse Proof of Theorem~\ref{Th:physically_BC}}\label{app:converse_proof}
	Fix a  sequence (in $n$) of $(2^{n\R_0},2^{n\R_1},  2^{n\R_2},  n)$ codes satisfying \eqref{eq:asymptotics}. 
	Fix a blocklength $n$ and start with Fano's inequality:
	\begin{IEEEeqnarray}{rCl}
		\R_0+\R_2&=&\frac{1}{n} H(W_0,W_2)\nonumber\\
		&\leq  & \frac{1}{n}\sum_{i=1}^{n}I(W_0,W_2;Y_{2i},  S_{2,i}\mid Y_2^{i-1},  S_{2}^{i-1})+\epsilon_n\nonumber\\
		&\leq& \frac{1}{n}\sum_{i=1}^{n}I(W_0,W_2,  Y_2^{i-1},  S_2^{i-1};Y_{2, i},  S_{2, i} )+\epsilon_n\nonumber\\
		&=&I(W_0,W_2,  Y_2^{T-1},  S_2^{T-1};  Y_{2, T},S_{2, T}\mid T)+\epsilon_n\nonumber\\
		&\leq &I(W_0,W_2,  Y_2^{T-1},  S_2^{T-1}, T;  Y_{2, T},S_{2, T})+\epsilon_n\nonumber \\
		&\stackrel{(a)} = & I(U;Y_2 \mid S_2) +\epsilon_n,\label{eq:R1_bound}
	\end{IEEEeqnarray}
	where $T$  is chosen  uniformly over $\{1, \cdots,  n\}$ and independent of $X^n,  Y^n_1,  Y^n_2,  W_0, W_1, W_2, S^n_1,  S^n_2$; $\epsilon_n$ is a sequence that tends to 0 as $n\to \infty$; and $U := (W_0,W_2,  Y_2^{T-1},  S_2^{T-1}, T)$,  $Y_2 := Y_{2, T}$ and $S_2 := S_{2, T}$. Here $(a)$ holds because $S_2 \sim P_{S_2}$  independent of $(U, X)$,  where we define $X:= X_T$. 
	
	Following similar steps,  we obtain:
	\begin{IEEEeqnarray}{rCl}
		R_1&=&\frac{1}{n}H(W_1 \mid W_0,W_2)
		\nonumber\\
		&\leq&\frac{1}{n}I(W_1;Y_1^n,  S^n_1 \mid W_0,W_2)+\epsilon_n
		\nonumber\\
		&\leq&\frac{1}{n}I(W_1;Y_1^n,   S^n_1,  Y_2^n,   S^n_2
		\mid W_0, W_2)+\epsilon_n
		\nonumber\\
		&=&\frac{1}{n}\sum_{i=1}^{n}I(W_1;Y_{1,i},  Y_{2,i}, S_{1,i},  S_{2,i}
			\nonumber\\
		&&\hspace{0.5cm}
		\mid Y_1^{i-1}, Y_2^{i-1}, 
		S^{i-1}_1,  S^{i-1}_2, W_0,W_2)+\epsilon_n\nonumber
		\\
		&\leq& \frac{1}{n}\sum_{i=1}^{n}I(X_i, W_1,  Y_1^{i-1},  S_{1}^{i-1}; Y_{1, i},  Y_{2, i},  S_{1, i},  S_{2, i}
		\nonumber\\
	&&\hspace{2cm}
		\mid Y_2^{i-1},  S_2^{i-1},  W_0, W_2)+\epsilon_n\nonumber
		\\
		&	\stackrel{(b)}=&\frac{1}{n}\sum_{i=1}^{n} I(X_i ; Y_{1, i},  S_{1, i} 
			\nonumber\\
		&&\hspace{2.5cm}
		\mid Y_2^{i-1},   S_2^{i-1},  W_0,W_2)+\epsilon_n
		\\
		&=&I(X_T; Y_{1T},  S_{1, T}\mid Y_2^{T-1}, S_2^{T-1},  {W_0}, W_2,  T)+\epsilon_n \nonumber \\
		& \stackrel{(c)}=& I(X; Y_1 \mid S_1, U) +\epsilon_n,  \label{eq:R2_bound}
	\end{IEEEeqnarray}
	where  we defined $Y_1 := Y_{1, T}$ and $S_{1}:= S_{1,T}$; and where $(b)$  holds by the physically degradedness of the SDMBC which implies the Markov chain $(W_0,W_2,W_1,Y_{1}^{i-1},S_1^{i-1}, Y_{2}^{i-1}, S_{2}^{i-1} )\to X_i \to (S_{1,i}, Y_{1,i}) \to (S_{2,i},Y_{2,i})$, and  $(c)$ holds because $S_1\sim P_{S_1}$  independent of $(U, X)$.
	
	Recall that we assume the optimal estimators \eqref{eq:BCestimator}  in Lemma~\ref{BC:lemma:Shat}. Using the definitions of $T$,  $X$,  $S_k$ above and  defining $Z:= Z_T$,  we can write the average expected distortions as:
	\begin{IEEEeqnarray}{rCl}\label{eq:D_boundd}
		\frac{1}{n} \sum_{i=1}^n \mathbb{E}[d_k(S_{k, i},  \hat{s}_{k}^*(X_i, Z_i)] =\mathbb{E}[ d_k ( S_{k},  \hat{s}_{k}^*(X, Z) ]. \IEEEeqnarraynumspace
	\end{IEEEeqnarray}
	
	Combining  \eqref{eq:R1_bound},  \eqref{eq:R2_bound},  and \eqref{eq:D_boundd} and letting $n\to \infty$,  we obtain that there exists a limiting pmf $P_{UX}$ such that the tuple $(U, X, S_1, S_2, Y_1, Y_2, Z) \sim P_{UX} P_{S_1S_2} P_{Y_1Y_2Z|S_1S_2X}$ satisfies the rate-constraints 
	\begin{IEEEeqnarray}{rCl}
		\R_0+\R_2 & \leq & I(U; Y_2\mid S_2)\\
		\R_1&\leq & I(X; Y_1 \mid S_1,  U) 
	\end{IEEEeqnarray}
	and the distortion constraints 
	\begin{IEEEeqnarray}{rCl} 
		\mathbb{E}[ d_k ( S_{k},  \hat{s}^*_{k}(X,  Z) ] \leq \D_k,  \quad k=1,    2,  
	\end{IEEEeqnarray}
	This completes the  proof.
	
	\section{Proof of Theorem~\ref{outer1}}\label{app:Converse_general}
	Fix a  sequence (in $n$) of $(2^{n\R_0},2^{n\R_1},  2^{n\R_2},  n)$ codes satisfying \eqref{eq:asymptotics}.  
	Fix then a blocklength $n$ and consider an enhanced SDMBC where Receiver  1 observes the pair of states $\tilde{S}_1=(S_1,S_2)$ and the pair of outputs $\tilde{Y}_1=(Y_1,Y_2)$. 
	The enhanced SDMBC is clearly physically degraded because for any input pmf $P_X$ the Markov chain
	\begin{equation}
		X \markov (\tilde{S}_1,\tilde{Y}_1) \markov (S_2,Y_2)
	\end{equation}
	holds. 
	
	Following the steps in the previous Appendix~\ref{app:converse_proof}, we can conclude that 
	\begin{IEEEeqnarray}{rCl}
		\R_0+\R_2 & \leq &  I(U_2;Y_2 \mid S_2) +\epsilon_n\\
		\R_0+\R_1+\R_2& \leq & I(X; Y_1, Y_2 \mid S_1, S_2) +\epsilon_n
	\end{IEEEeqnarray}
	and for $k=1,2$
	\begin{IEEEeqnarray}{rCl}\label{eq:D_bound}
		\frac{1}{n} \sum_{i=1}^n \mathbb{E}[d_k(S_{k, i},  \hat{s}_{k}^*(X_i, Z_i)] =\mathbb{E}[ d_k ( S_{k},  \hat{s}_{k}^*(X, Z) ]. \IEEEeqnarraynumspace
	\end{IEEEeqnarray}
	
	Consider next a reversely enhanced SDMBC where Receiver 1 observes only $(Y_1,S_1)$ but Receiver 2 observes both state sequences $\tilde{S}_2:=(S_1,S_2)$ and both outputs $\tilde{Y}_2:=(Y_1,Y_2)$.
	Following again the steps in the previous Appendix~\ref{app:converse_proof}, but now with exchanged indices $1$ and $2$, we obtain: 
	\begin{IEEEeqnarray}{rCl}
		\R_0+\R_1 & \leq &  I(U_1;Y_1 \mid S_1) +\epsilon_n\\
		\R_0+	\R_1+\R_2& \leq & I(X; Y_1, Y_2 \mid S_1, S_2) +\epsilon_n.
	\end{IEEEeqnarray}
	
	Combining all these inequalities and letting first $n\to \infty$ and then $\epsilon_n \downarrow 0$, establishes the desired converse result. 
	
	\section{Proofs for Dueck's State-Dependent BC}\label{app:B}
	
	\subsection{Optimal Estimator of Lemma~\ref{BC:lemma:Shat}}\label{appendix-sec:ex_estimator}
	\newcommand{\bx}{\overline{x}}
	We first derive the optimal estimator $\hat{s}^{*}_k (x_1,  x_2,  y_1',y_2')$ of Lemma~\ref{BC:lemma:Shat} for this example.

	\noindent{\bf Case $y_1'=y_2'=1$: } In this case,  $S_1=S_2=1$ deterministically,  and thus 
	\begin{align}\label{eq:est-11}
		\hat{s}^*_k (x_1,  x_2,  1, 1) = 1, \;\quad \forall (x_1,x_2),  \qquad  k=1, 2.
	\end{align}

	\noindent{\bf Case} $y_1=1'$ and $y_2'= 0$: In this case,  $S_1=1$ deterministically and 
	\begin{equation}\hat{s}^*_1 (x_1,  x_2,  1, 0) = 1, \;\quad \forall (x_1,x_2).
	\end{equation}
	To derive the optimal estimator for state $S_2$, we notice that $y_1'=1$ implies  $x_1\oplus N =1$, i.e., $N=x_1 \oplus 1$. As a consequence,
	\begin{equation}y_2'= (x_2 \oplus x_1 \oplus 1) S_2.
	\end{equation}
	So, for $x_2 = x_1$ we have $y_2'=S_2=0$ and  the optimal estimator sets
	\begin{equation}\label{eq:xeq}
		\hat{s}^*_2 (x_1,  x_2,  1,  0)=  0, \quad x_1=x_2.
	\end{equation}  
	Instead for $x_2 \neq x_1$,  the feedback output $y_2'=0$, irrespective of the state  $S_2$. The optimal estimator then is the constant estimator  
	\begin{equation}\label{eq:xneq}
		\hat{s}^*_2 (x_1,  x_2,  1,  0)=  \argmax_{\hat{s}\in\{0,1\}} P_S(\hat{s}), \quad x_1\neq x_2.
	\end{equation}

	{\bf Case} $y_1'=1, y_2'=0$: Symmetric to the previous case $y_1'=0, y_2'=1$. The optimal estimators are as in \eqref{eq:xeq} and \eqref{eq:xneq}, but with exchanged indices $1$ and $2$.
	\vspace{1mm}
	
	{\bf Case} $y_1'=y_2'=0$:   To find the optimal estimators, we calculate the conditional probabilities $P_{S_k|X_1X_2Y_1'Y_2'}(\cdot|x_1,x_2,y_1',y_2')$ for $y_1'=y_2'=0$. 
	
	We again distinguish the two cases $x_1=x_2$ and $x_1\neq x_2$ and start by considering $x_1=x_2$.  In this case,     $x_1\oplus N = x_2 \oplus N$,  and so if $S_k=1$ then $y_1'=y_2'=0$ only if $x_{1}\oplus N=x_2 \oplus N=0$,  which happens with probability $1/2$ because $N$ is Bernoulli-$1/2$. By the independence of the states and the inputs for  $x_1=x_2$ and $k=1,2$:
	\begin{subequations} \label{eq:x1eqx2}
		\begin{eqnarray}
			\lefteqn{P_{S_k|X_1X_2 Y_1'Y_2'}(1| x_1, x_2,  0,  0)  } \nonumber \quad \\
			& = & \frac{P_{S_k}(1) P_{Y_1'Y_2'|X_1X_2S_k}(0, 0|x_1, x_2, 1)}{P_{Y_1'Y_2'|X_1X_2}(0, 0|x_1, x_2)}\nonumber \\ 
			& = & \frac{P_S(1) 1/2}{P_{Y_1'Y_2'|X_1X_2}(0, 0|x_1, x_2)}.
		\end{eqnarray} 
		
		Let  $\bar{k}:=3-{k}$ for $k=1,2$. If $S_k=0$,  then $y_1'=y_2'=0$ happens when either $x_1\oplus N=x_2 \oplus N =0$ or when $S_{\bar{k}}=0$ and $x_1\oplus N=x_2 \oplus N =1$.  Since these are exclusive events and have total probability of $1/2 +  P_S(0) 1/2$,   we obtain for $x_1=x_2$ and $k\in\{1,2\}$:
		\begin{eqnarray}
			\lefteqn{P_{S_k|X_1X_2Y_1'Y_2'}(0| x_1, x_2, 0,  0)  } \nonumber \quad \\
			& = & \frac{P_{S_k}(0) P_{Y_1'Y_2'|X_1X_2S_k}(0, 0|x_1, x_2, 0)}{P_{Y_1'Y_2'|X_1X_2}(0, 0|x_1, x_2)}\nonumber \\ 
			& = & \frac{P_S(0) (1/2+  P_S(0)  1/2)}{P_{Y_1'Y_2'|X_1X_2}(0, 0|x_1, x_2)}.
		\end{eqnarray} 
	\end{subequations}

	We conclude from  \eqref{eq:x1eqx2} that for $y_1'=y_2'=0$ and $x=x_1=x_2$,  the optimal estimators are 
	\begin{IEEEeqnarray}{rCl}
&&	\hat{s}^*_k (x,x,  0,  0)  \nonumber\\
	&&
	\hspace{0cm}=	\mathbbm{1}\left\{P_{S}(0)(1+P_{S}(0))<P_{S}(1)\right\}, \; k=1,2.
	\end{IEEEeqnarray}

	We turn to the case $x_1\neq x_2$,  where $x_1 \oplus N =1 \oplus ( x_2  \oplus N)$. As before,  if $S_k =1$,  then $Y_k'=0$ only if $x_1\oplus N=0$,  which happens with probability $1/2$. Now this implies  $x_2 \oplus N=1$,  and thus  $Y_{\bar{k}}'=0$ only if $S_{\bar{k}}=0$,  which happens with probability $P_S(0)$. We thus obtain  for $x_1 \neq x_2$ and $k=1,2$:
	\begin{subequations}
		\label{eq:x1neqx2}
		\begin{eqnarray}
			\lefteqn{P_{S_k|X_1X_2Y_1'Y_2'}(1| x_1, x_2, 0,  0)  } \quad \nonumber \\
			& = & \frac{P_{S_k}(1) P_{Y_1'Y_2'|X_1X_2S_k}(0, 0|x_1, x_2, 1)}{P_{Y_1'Y_2'|X_1X_2}(0, 0|x_1, x_2)} \nonumber\\ 
			& = & \frac{P_S(1) P_{S}(0) 1/2}{P_{Y_1'Y_2'|X_1X_2}(0, 0|x_1, x_2)}.
		\end{eqnarray} 
		If $S_k=0$,  then $Y_1'=Y_2'=0$ happens when $x_{\bar{k}} \oplus N =0$ or when $x_{\bar{k}} \oplus N =1$ and $S_{\bar{k}}=0$. Since these are exclusive events with total probability $1/2+ P_S(0) 1/2$,  we obtain for $x_1 \neq x_2$ and $k=1,2$:
		\begin{eqnarray}
			\lefteqn{P_{S_k|X_1X_2Y_1'Y_2'}(0| x_1, x_2,  0,  0)  } \quad \nonumber \\
			& = & \frac{P_{S_k}(0) P_{Y_1'Y_2'|X_1X_2S_k}(0, 0|x_1, x_2, 0)}{P_{Y_1'Y_2'|X_1X_2}(0, 0|x_1, x_2)} \nonumber \\ 
			& = & \frac{P_S(0) (1/2+P_S(0)  1/2)}{P_{Y_1'Y_2'|X_1X_2}(0, 0|x_1, x_2)}.
		\end{eqnarray} 
	\end{subequations}
	{Since $P_S(1) < 1+ P_S(0)$,} we conclude that for $y_1'=0, y_2'= 0$ and $x_1\neq x_2$,  the optimal estimator is
	\begin{align}\label{eq:OptEstimator00}
		\hat{s}^*_k (x_1,  x_2,  0,  0)  
		=0,    \qquad  x_1\neq x_2, \; k=1,2.
	\end{align}

	\subsection{Minimum distortion} \label{app:ex_mindist}
	
	We evaluate the expected distortion of the optimal estimators in \eqref{ex:optimal_estimator}, for a given input pmf $P_{X_0X_1X_2}$. Let $t:= \Prv{X_1 \neq X_2}$. We first consider the distortion on state $S_2$:
	\begin{IEEEeqnarray}{rCl}
		\lefteqn{
			\mathbb{E}[ d( S_2, \hat{s}_2^*(X_1,X_2, Y_1',Y_2'))] } \nonumber \\
		& = & \sum_{(x_1,x_2,y_1',y_2')\in\{0,1\}^4}  \hspace{-4mm} P_{X_1X_2Y_1'Y_2'}(x_1,x_2,y_1',y_2')  
		\nonumber\\
		&&\cdot \text{Pr}\big[ S_2  \neq \hat{s}^*_2(x_1,x_2, y_1',y_2') \mid X_1=x_1,X_2=X_2,
			\nonumber\\&& 
			\hspace{5cm} Y_1'=y_1,Y_2'=y_2\Big] \nonumber \\  \\
		&\stackrel{(a)}{=}&\sum_{(x_1,x_2,y_1',y_2')\in\{0,1\}^4}P_{X_1X_2Y_1'Y_2'}(x_1,x_2,y_1',y_2') 	\nonumber\\&&
	\hspace{1.2cm}\cdot	\min_{\hat{s}\in\{0,1\}} P_{S_2|X_1X_2Y_1'Y_2'}(\hat{s}|x_1,x_2, y_1',y_2')  , \nonumber \\ \label{eq:L2}  
	\end{IEEEeqnarray}
	where $(a)$  follows by the definition of the function $\hat{s}^*_2$.
	
	In the previous Subsection~\ref{appendix-sec:ex_estimator}, we argued that for $y_2'=1$ or for $(y_2'=0,y_1'=1,x_1=x_2)$, the state $S_2$ is  deterministic   ($S_2=1$ in the former case and $S_2=0$ in the latter) and thus $\min_{\hat{s}\in\{0,1\}} P_{S_2|X_1X_2Y_1'Y_2'}(\hat{s}|x_1,x_2, y_1',y_2')=0$. We further argued that for $(y_1'=1,y_2'=0, x_1\neq x_2)$ the transmitter learns nothing about state $S_2$, which is thus still distributed according to $P_S$. Based on these observations, we  continue from \eqref{eq:L2} as:
	\begin{IEEEeqnarray}{rCl}
		\lefteqn{
			\mathbb{E}[ d( S_2, \hat{s}_2^*(X_1,X_2, Y_1',Y_2')] } \nonumber \\
		&= & \Prv{ X_1\neq X_2, Y_1'=1, Y_2'=0} \min \{ P_S(0), P_S(1)\} \nonumber \\
		& &+\sum_{(x_1,x_2)\in\{0,1\}^2}  P_{X_1X_2Y_1'Y_2'}(x_1,x_2,0,0)
		\nonumber\\
		&&\hspace{2cm}
		\min\{ P_{S_1|X_1X_2Y_1'Y_2'}(0|x_1,x_2,0,0), 
		\nonumber\\
		&&\hspace{3.5cm}
		P_{S_1|X_1X_2Y_1'Y_2'}(1|x_1,x_2,0,0)\} \nonumber \\
		&\stackrel{(b)}{=} & \Prv{ X_1\neq X_2, N=X_1\oplus 1, S_1=1}
		\nonumber\\
		&&\hspace{5cm}
		 \min \{ P_S(0), P_S(1)\} \nonumber \\
		& & +   \Prv{ X_1= X_2} \frac{1}{2} \min\{P_S(1), P_S(0) (1+  P_S(0))  \}\nonumber \\
		&& + \Prv{ X_1\neq X_2}  \frac{1}{2} P_{S}(0)   P_S(1)  \\
		& =& \frac{1}{2} t q \Big( \min \{q, (1-q)\} + (1-q) \Big) 
		\nonumber\\
		&&
		+ \frac{1}{2} (1-t) q 
			 \min \{q,(1-q)(2-q)\}. 
	\end{IEEEeqnarray}
	where in $(b)$ we used  \eqref{eq:x1eqx2}--\eqref{eq:OptEstimator00} and the fact that when $X_1\neq X_2$, then event $\{Y_1'=1,Y_2'=0\}$ is equivalent to event $\{N=X_1 \oplus 1, S_1=1\}$.
	
	\subsection{Proof of the Outer Bound}\label{app:ex_converse}		
	The outer bound is based on Theorem~\ref{outer1},  as detailed out in the following. 
	The single-rate constraints   \eqref{R1} specialize to
	\begin{IEEEeqnarray}{rCl}
		\R_k&\leq& I(U_k; Y_k',  X_0\mid S_1,  S_2)
		\\
		&\stackrel{(c)}{=}&I(U_k;  X_0)\\
		& \leq & 1, \label{eq:Rk_conv}
	\end{IEEEeqnarray}
	where the equality holds by the chain rule, because $(U_1,X_0)$ and $(S_1,S_2)$ are independent,  and because $I(U_1;Y_1'\mid X_0,S_1,S_2)=0$ due to the Bernoulli-$1/2$ noise $N$.
	
	Defining   $t := \Pr[X_1 \neq X_2]$, 	Bound \eqref{R2} specializes to:
	\begin{IEEEeqnarray}{rCl}
\lefteqn{\R_1+\R_2} \quad
		\nonumber\\
		&\leq& I(X_0, X_1, X_2; Y_1',  Y_2',X_0\mid S_1,  S_2)
		\\
		&\stackrel{(d)}{=}& H(X_0)+I(X_1, X_2;Y_2'\mid  S_1,  S_2, Y_1', X_0)\\
		&\stackrel{(e)}{=}& H(X_0)
			\nonumber\\
		&&
		+I(X_1, X_2;Y_2'\mid  S_1=1, S_2=1,Y_1',
		 X_0) \IEEEeqnarraynumspace\\
		&=& H(X_0)
			\nonumber\\
		&&
		+I(X_1, X_2;Y_2' \oplus Y_1' \mid  S_1=1,  S_2=1,  X_0)\IEEEeqnarraynumspace\\
		&\stackrel{(f)}{\leq}&H(X_0)+P_{S_1S_2}(1, 1) H(X_1 \oplus X_2) \IEEEeqnarraynumspace
		\\
		&\leq &1+q^2 H_b(t). \label{eq:sum-rate_conv}
	\end{IEEEeqnarray}
	where $(d)$ holds by the chain rule and because  $I(X_1,X_2;Y_1'\mid X_0,S_1,S_2)=0$ due to the Bernoulli-$1/2$ noise $N$;  $(e)$ holds because for $(s_1,s_2)\neq (1,1)$ the mutual information term $I(X_1, X_2;Y_2'\mid  S_1=s_1,  S_2=s_2, Y_1', X_0)=0$ due to the Bernoulli-$1/2$ noise $N$; and $(f)$ holds because for $S_1=S_2=1$ we have $Y_2'\oplus Y_{1}' =(X_2 \oplus N) \oplus (X_1 \oplus N) = X_2 \oplus  X_1$  and because conditioning can only reduce entropy.
	
	The sum-rate constraint \eqref{eq:sum-rate_conv} is maximized for $t=1/2$, which combined with  \eqref{eq:Rk_conv} establishes the converse to the capacity region in \eqref{ex:capacity}.

	\subsection{Proof of Achievability Results}\label{app:ex_ach}

	We evaluate Proposition~\ref{prp:inner} for different choices of the involved random variables. Since we ignore the common rate $\R_0$, bound \eqref{eq:inner4} is not active and can be ignored.
	
	\subsubsection{First choice}
	\begin{itemize}
		\item   $X_0,  X_1,  X_2$  Bernoulli-${1}/{2}$ with $X_0$ independent of $(X_1, X_2)$ and $X_1=X_2=x$ with probability  $\frac{1-t}{2}$ for all $x\in\{0, 1\}$; 
		\item $U_k=X_k$,   for $k=0,  1,  2$; 
		\item $V_1=(X_0, X_1)$,  $V_2=(X_0, X_2)$,     $V_0=X_1\oplus Y_1'$.
	\end{itemize}
	We plug this choice into Proposition~\ref{prp:inner}. Constraint \eqref{eq:inner1} evaluates to:
	\begin{IEEEeqnarray}{rCl}
		\R_1 &\leq& I(U_0, U_1;Y_1, V_1\mid S_1) 
		\nonumber\\
		&&- I(U_0, U_1, U_2,  Z;V_0, V_1 \mid S_1, Y_1) \\
		& = & I(X_0,X_1;X_0,Y_1', S_1,S_2, X_1 \mid S_1) 	\nonumber\\
		&&- I(X_0,X_1,X_2,Y_1',Y_2'; X_1\oplus Y_1',X_0,X_1 
			\nonumber\\
		&&\hspace{4cm}\mid S_1, S_2,X_0,Y_1') \IEEEeqnarraynumspace\\
		& \stackrel{(e)}{=} & H(X_0)+H(X_1) - H(X_1 \mid Y_1')\\
		& = & H(X_0) = 1
	\end{IEEEeqnarray}
	where $(e)$ holds because $Y_1'$ is independent of $X_1$ due to the Bernoulli-$1/2$ noise $N$. 
	
	Constraint \eqref{eq:inner2} evaluates to:
	\begin{IEEEeqnarray}{rCl}
		\R_2 &\leq& I(U_0, U_2; Y_2, V_2\mid S_2)
			\nonumber\\
		&&\hspace{1.4cm}
		 - I(U_0, U_1, U_2,  Z;V_0, V_2|S_2, Y_2)\\
		& = & I(X_0,X_2;X_0,Y_2', S_1,S_2, X_2 \mid S_1) 
		\nonumber\\
		&&
		- I(X_0,X_1,X_2,Y_1',Y_2'; X_1\oplus Y_1',X_0,X_2 
			\nonumber\\
		&&\hspace{4cm}
		\mid S_1, S_2,X_0,Y_2') \IEEEeqnarraynumspace \\
		& \stackrel{(f)}{=} & H(X_0)+H(X_2) - H(X_2) 
			\nonumber\\
		&&
		- H(X_1 \oplus Y_1' \mid S_1, S_2,X_0,Y_2' , X_2) \\
		& \stackrel{(g)}{=} & 1 - (1-q) (H_{\text{b}}(t)+q) ,
	\end{IEEEeqnarray}
	where $(f)$ holds because of the chain rule and the independence of $X_2$ and $Y_2'$;  and $(g)$ holds because for $S_1=0$ the XOR $X_1\oplus Y_1'=X_1$ and thus $H(X_1\oplus Y_1'\mid S_1, S_2,X_0,Y_2' , X_2)=H(X_1\mid X_2)$, for $S_1=S_2=1$ the XOR $X_1\oplus Y_1'= X_2 \oplus Y_2'$,  and finally for $S_1=1$ and $S_2=0$ the XOR $X_1 \oplus Y_1'=N$ independent of $(Y_2'=0,X_2)$.
	
	Constraint \eqref{eq:inner2} evaluates to:
	\begin{IEEEeqnarray}{rCl}
\lefteqn{\R_1+\R_2 } \;
		\nonumber\\
		&\leq & I(U_1; Y_1, V_1 \mid U_0,  S_1) + I(U_2; Y_2, V_2 \mid U_0,  S_2) 
		\nonumber\\
		&& + \min_{k\in\{1, 2\}}I(U_0;Y_k, V_k\mid S_k)
		- I(U_1;U_2|U_0)\nonumber \\
		&&-I(U_0, U_1, U_2,  Z;V_1 \mid V_0, S_1, Y_1) \nonumber \\
		&&-I(U_0, U_1, U_2,  Z;V_2 \mid V_0, S_2, Y_2)  \nonumber \\
		& & - \max_{k\in\{1, 2\}} I(U_0, U_1, U_2,  Z;V_0\mid S_k, Y_k)\\
		&=&\underbrace{ I(X_1;X_0,Y_1',S_1,S_2, X_1 \mid X_0, S_1)}_{=H(X_1)} 
		\nonumber\\
		&&+\underbrace{ I(X_2; X_0, Y_2' , S_1,S_2, X_2 \mid X_0, S_2) }_{=H(X_2)}  \nonumber \\
		& & + \min_{k \in \{1,2\}} \underbrace{I(X_0; X_0, Y_k', S_1, S_2, X_k \mid S_k)}_{=H(X_0)} 
		\nonumber\\&&- \underbrace{I(X_1;X_2)}_{=H(X_1)-H(X_1|X_2)} \nonumber \\
		&& - \underbrace{ I(\underline{X}, Y_1', Y_2'; X_0, X_1 \mid X_1 \oplus Y_1', X_0, \underline{S} , Y_1') }_{=0} \nonumber \\
		&& - \underbrace{I(\underline{X}, Y_1',Y_2'; X_0, X_2 \mid X_1 \oplus Y_1', X_0, \underline{S}, Y_2')}_{= H(X_2 \mid X_1\oplus Y_1', S_1, S_2, Y_2')}  \nonumber \\
		&& - \max_{k\in \{1,2\}} \underbrace{I(\underline{X}, Y_1',Y_2' ; X_1 \oplus Y_1' \mid \underline{S}, X_0, Y_k')}_{=H(X_1 \oplus Y_1' \mid S_1,S_2, Y_k')} \IEEEeqnarraynumspace \\
		& \stackrel{(h)}{=} & 2 + H_{\text{b}}(t)  - H(X_2 \mid X_1\oplus Y_1', S_1, S_2, Y_2') 
		\nonumber\\
		&&
		- H(X_1\oplus Y_1') \\
		&  \stackrel{(i)}{=}  & 1+H_{\text{b}}(t) - (1-q) H_{\text{b}}(t) - q(1-q) , 
	\end{IEEEeqnarray}
	where we used the abbreviations $\underline{X}=(X_0,X_1,X_2)$ and $\underline{S}=(S_1,S_2)$ and  $(h)$ holds because $X_1\oplus Y_1$ is independent of  $(S_1,S_2,Y_k')$, for $k=1,2$; and $(i)$ holds because for $S_1=S_2=1$ we have $X_2= Y_2' \oplus Y_1'\oplus X_1$ and thus  $H(X_2 \mid X_1\oplus Y_1', S_1=1, S_2=1, Y_2')=0$, for $S_1=0$ the XOR $X_1 \oplus Y_1'=X_1$ and thus  $H(X_2 \mid X_1\oplus Y_1', S_1=1, S_2, Y_2')=H(X_2|X_1)=H_{\text{b}}(t)$, and finally for $S_1=1$ and $S_2=0$, we have $X_1 \oplus Y_1'=N$ and $Y_2'=0$ and thus $H(X_2 \mid X_1\oplus Y_1', S_1=1, S_2=0, Y_2')=H(X_2)=1$.
	
	The presented choice of parameters can thus achieve all rate-distortion tuples $(\R_0,\R_1,\R_2,\D_1,\D_2)$ satisfying the distortion constraints in \eqref{ex:distortin} (which only depends on the probability $t:=\Prv{X_1\neq X_2}$) and 
	\begin{subequations}
		\begin{IEEEeqnarray}{rCl}
			\R_1 &\leq &1  \\
			\R_2 &\leq & 1- (1-q) (H_{\text{b}}(t)+q) \\
			\R_1 +\R_2 & \leq & 1+q H_{\text{b}}(t) - q(1-q) .
		\end{IEEEeqnarray}
	\end{subequations}
	
	\subsubsection{Second choice}
	
	Same as the first choice except that $V_0=X_2 \oplus Y_2'$. Following symmetric arguments as above, we conclude that for this choice the constraints in \eqref{eq:inner} evaluate to: 
	\begin{subequations}
		\begin{IEEEeqnarray}{rCl}
			\R_1 &\leq &1 - (1-q) (H_{\text{b}}(t)+q) \\
			\R_2 &\leq & 1 \\
			\R_1 +\R_2 & \leq & 1+q H_{\text{b}}(t) - q(1-q).
		\end{IEEEeqnarray}
	\end{subequations}
	
	\subsubsection{Combining the Choices and Time-Sharing}
	
	From the two previous subsections, we conclude that for any $t\in[0,1]$ the set of rate-distortion tuples $(\R_0,\R_1,\R_2,\D_1,\D_2)$ is achievable if it satisfies \eqref{ex:distortin} and 
	\begin{subequations}\label{ex:region}
		\begin{IEEEeqnarray}{rCl}
			\R_0+\R_1 &\leq &1  \label{eq:R1_ach}\\
			\R_0+\R_2 &\leq & 1  \label{eq:R2_ach}\\
			\R_0 +\R_1 +\R_2 & \leq & 1+q H_{\text{b}}(t) - q(1-q). \label{eq:R12_ach}
		\end{IEEEeqnarray}
	\end{subequations}
	
	As previously discussed, for $q \leq 1/2$, the distortion constraints \eqref{ex:distortin} do not depend on $t$, and thus without loss in optimality in \eqref{ex:region} one can set $t=1/2$, which results in a sum-rate constraint 
	\begin{equation}
		\R_0+\R_1+\R_2 \leq 1 + q^2.
	\end{equation}
	Combined with \eqref{eq:R1_ach} and \eqref{eq:R2_ach}, this sum-rate bound establishes the achievability of the capacity region in \eqref{ex:capacity}.

	For $q >1/2$ the distortion constraints \eqref{ex:distortin} are either increasing or decreasing in  $t$. The set of achievable rate-distortion tuples is then obtained by varying $t$ either over $[0,1/2]$ or over $[1/2,1]$. Numerical results indicate that the so obtained set is  not convex and the convex hull is obtained by considering convex combinations between different values of $t>0$ and $t=0$ for $q\in[2/3,1]$ and $t=1$ for $q\in[1/2,2/3]$. 

	\bibliographystyle{IEEEtran}
	\bibliography{main_v2,JRC}	

\begin{thebibliography}{10}
\providecommand{\url}[1]{#1}
\csname url@samestyle\endcsname
\providecommand{\newblock}{\relax}
\providecommand{\bibinfo}[2]{#2}
\providecommand{\BIBentrySTDinterwordspacing}{\spaceskip=0pt\relax}
\providecommand{\BIBentryALTinterwordstretchfactor}{4}
\providecommand{\BIBentryALTinterwordspacing}{\spaceskip=\fontdimen2\font plus
\BIBentryALTinterwordstretchfactor\fontdimen3\font minus
  \fontdimen4\font\relax}
\providecommand{\BIBforeignlanguage}[2]{{%
\expandafter\ifx\csname l@#1\endcsname\relax
\typeout{** WARNING: IEEEtran.bst: No hyphenation pattern has been}%
\typeout{** loaded for the language `#1'. Using the pattern for}%
\typeout{** the default language instead.}%
\else
\language=\csname l@#1\endcsname
\fi
#2}}
\providecommand{\BIBdecl}{\relax}
\BIBdecl

\bibitem{kobayashi2018joint}
M.~Kobayashi, G.~Caire, and G.~Kramer, ``Joint state sensing and communication:
  Optimal tradeoff for a memoryless case,'' in \emph{Proc. {IEEE} Int. Symp.
  Info. Theory {(ISIT)}}, 2018, pp. 111--115.

\bibitem{ahmadipour2021joint}
M.~Ahmadipour, M.~Wigger, and M.~Kobayashi, ``Joint sensing and communication
  over memoryless broadcast channels,'' in \emph{Proc. {IEEE} Info. Theory
  Workshop {(ITW)}}, 2021, pp. 1--5.

\bibitem{6G}
C.~de~Lima, D.~Belot, R.~Berkvens, A.~Bourdoux, A.~Dardari, M.~Guillaud, E.-S.
  Isomursu, M.and~Lohan, Y.~Miao, A.~N. Barreto, M.~R.~K. Aziz, J.~Saloranta,
  T.~Sanguanpuak, G.~Sarieddeen, H.and Seco-Granados, J.~Suutala, M.~Svensson,
  T.and~Valkama, H.~Wymeersch, and B.~E. van Liempd, ``6g white paper on
  localization and sensing [white paper],'' \emph{University of Oulu}, no.~12,
  2020.

\bibitem{bourdoux20206g}
G.~Flagship, ``{6G white paper on localization and sensing},'' \emph{University
  of Oulu, Finland}, no.~12, June 2020.

\bibitem{zheng2019radar}
L.~{Zheng}, M.~{Lops}, Y.~C. {Eldar}, and X.~{Wang}, ``Radar and communication
  coexistence: An overview: A review of recent methods,'' \emph{IEEE Signal
  Processing Magazine}, vol.~36, no.~5, pp. 85--99, 2019.

\bibitem{liu2020joint}
F.~Liu, C.~Masouros, A.~P. Petropulu, H.~Griffiths, and L.~Hanzo, ``Joint radar
  and communication design: Applications, state-of-the-art, and the road
  ahead,'' \emph{IEEE Trans. Commun.}, vol.~68, no.~6, pp. 3834--3862, 2020.

\bibitem{sturm2011waveform}
C.~{Sturm} and W.~{Wiesbeck}, ``Waveform design and signal processing aspects
  for fusion of wireless communications and radar sensing,'' vol.~99, no.~7,
  pp. 1236--1259, July 2011.

\bibitem{gaudio2019effectiveness}
L.~{Gaudio}, M.~{Kobayashi}, G.~{Caire}, and G.~{Colavolpe}, ``On the
  effectiveness of {OTFS} for joint radar parameter estimation and
  communication,'' \emph{IEEE Trans. Wireless Commun.}, vol.~19, no.~9, pp.
  5951--5965, 2020.

\bibitem{bliss2014cooperative}
D.~W. Bliss, ``Cooperative radar and communications signaling: The estimation
  and information theory odd couple,'' in \emph{Radar Conf., 2014 IEEE}.\hskip
  1em plus 0.5em minus 0.4em\relax IEEE, 2014, pp. 0050--0055.

\bibitem{chiriyath2016inner}
A.~R. Chiriyath, B.~Paul, G.~M. Jacyna, and D.~W. Bliss, ``Inner bounds on
  performance of radar and communications co-existence.'' \emph{IEEE Trans.
  Signal Process.}, vol.~64, no.~2, pp. 464--474, 2016.

\bibitem{paul2017survey}
B.~Paul, A.~R. Chiriyath, and D.~W. Bliss, ``Survey of {RF} communications and
  sensing convergence research,'' \emph{IEEE Access}, vol.~5, pp. 252--270,
  2017.

\bibitem{kumari2018ieee}
P.~{Kumari}, J.~{Choi}, N.~{González-Prelcic}, and R.~W. {Heath}, ``{IEEE}
  802.11ad-based radar: An approach to joint vehicular communication-radar
  system,'' vol.~67, no.~4, pp. 3012--3027, April 2018.

\bibitem{kumari2017performance}
P.~Kumari, D.~H. Nguyen, and R.~W. Heath, ``Performance trade-off in an
  adaptive ieee 802.11 ad waveform design for a joint automotive radar and
  communication system,'' in \emph{IEEE Int. Conf. Acoustics, Speech and Signal
  Proc. (ICASSP)}.\hskip 1em plus 0.5em minus 0.4em\relax IEEE, 2017, pp.
  4281--4285.

\bibitem{kim2008state}
Y.-H. Kim, A.~Sutivong, and T.~M. Cover, ``State amplification,'' \emph{IEEE
  Trans. Info. Theory}, vol.~54, no.~5, pp. 1850--1859, 2008.

\bibitem{zhang2011joint}
W.~Zhang, S.~Vedantam, and U.~Mitra, ``Joint transmission and state estimation:
  A constrained channel coding approach,'' \emph{IEEE Trans. Info. Theory},
  vol.~57, no.~10, pp. 7084--7095, 2011.

\bibitem{choudhuri2013causal}
C.~{Choudhuri}, Y.~{Kim}, and U.~{Mitra}, ``Causal state communication,''
  \emph{IEEE Trans. Info. Theory}, vol.~59, no.~6, pp. 3709--3719, 2013.

\bibitem{Shraga}
S.~I. Bross and A.~Lapidoth, ``The {Gaussian} source-and-data-streams
  problem,'' \emph{IEEE Transactions on Communications}, vol.~67, no.~8, pp.
  5618--5628, 2019.

\bibitem{Sibi}
V.~Ramachandran, S.~R.~B. Pillai, and V.~M. Prabhakaran, ``Joint state
  estimation and communication over a state-dependent gaussian multiple access
  channel,'' \emph{IEEE Transactions on Communications}, vol.~67, no.~10, pp.
  6743--6752, 2019.

\bibitem{kobayashi2019joint}
M.~Kobayashi, H.~Hamad, G.~Kramer, and G.~Caire, ``Joint state sensing and
  communication over memoryless multiple access channels,'' in \emph{Proc.
  {IEEE} Int. Symp. Info. Theory {(ISIT)}}, 2019, pp. 270--274.

\bibitem{ahmadipour2022jointMAC}
M.~Ahmadipour, M.~Wigger, and M.~Kobayashi, ``Coding for sensing: An improved
  scheme for integrated sensing and communication over macs,'' 2022.

\bibitem{shayevitz2012capacity}
O.~{Shayevitz} and M.~{Wigger}, ``On the capacity of the discrete memoryless
  broadcast channel with feedback,'' \emph{IEEE Trans. Info. Theory}, vol.~59,
  no.~3, pp. 1329--1345, 2013.

\bibitem{venkataramanan2013achievable}
R.~Venkataramanan and S.~S. Pradhan, ``An achievable rate region for the
  broadcast channel with feedback,'' \emph{IEEE Trans. Info. Theory}, vol.~59,
  no.~10, pp. 6175--6191, 2013.

\bibitem{gastpar2014coding}
M.~Gastpar, A.~Lapidoth, Y.~Steinberg, and M.~Wigger, ``Coding schemes and
  asymptotic capacity for the gaussian broadcast and interference channels with
  feedback,'' \emph{IEEE Trans. Info. Theory}, vol.~60, no.~1, pp. 54--71,
  2014.

\bibitem{el2011network}
A.{El Gamal} and Y.{H. Kim}, \emph{Network Information Theory}.\hskip 1em plus
  0.5em minus 0.4em\relax Cambridge University Press, 2011.

\bibitem{Marton-noFB}
K.~Marton, ``A coding theorem for the discrete memoryless broadcast channel,''
  \emph{IEEE Transactions on Information Theory}, vol.~25, no.~3, pp. 306--311,
  1979.

\bibitem{Gohari2020OuterBC}
A.~Gohari and C.~Nair, ``New outer bounds for the two-receiver broadcast
  channel,'' in \emph{2020 IEEE International Symposium on Information Theory
  (ISIT)}, 2020, pp. 1492--1497.

\bibitem{arimoto1972algorithm}
S.~Arimoto, ``An algorithm for computing the capacity of arbitrary discrete
  memoryless channels,'' \emph{IEEE Trans. Info. Theory}, vol.~18, no.~1, pp.
  14--20, 1972.

\bibitem{blahut1972computation}
R.~{Blahut}, ``Computation of channel capacity and rate-distortion functions,''
  \emph{IEEE Trans. Info. Theory}, vol.~18, no.~4, pp. 460--473, 1972.

\bibitem{MUIT_Kramer}
G.~Kramer, ``Topics in multi-user information theory,'' \emph{Foundations and
  Trends® in Communications and Information Theory}, vol.~4, no. 4–-5, pp.
  265--444, 2008.

\bibitem{wang2012capacity}
C.~C. Wang, ``On the capacity of 1-to-$k$ broadcast packet erasure channels
  with channel output feedback,'' \emph{IEEE Trans. Info. Theory}, vol.~58,
  no.~2, pp. 931--956, 2012.

\bibitem{gatzianas2013multiuser}
M.~Gatzianas, L.~Georgiadis, and L.~Tassiulas, ``Multiuser broadcast erasure
  channel with feedback: capacity and algorithms,'' \emph{IEEE Trans. Info.
  Theory}, vol.~59, no.~9, pp. 5779--5804, 2013.

\bibitem{DUECK}
G.~{Dueck}, ``Partial feedback for two-way and broadcast channels,''
  \emph{Information and Control}, vol.~46, no.~1, pp. 1--15, 1980.

\bibitem{cover2006elements}
T.~M. Cover and J.~A. Thomas, \emph{{Elements of Information Theory (2nd
  Edition)}}.\hskip 1em plus 0.5em minus 0.4em\relax Wiley-Interscience, 2006.

\end{thebibliography}
	%
		\begin{IEEEbiographynophoto}{Mehrasa Ahmadipour} received the B.Sc. in electrical engineering from the Iran University of Science and Technology in 2015 and received the M.Sc.\ degree in electrical engineering from the University of Tehran in 2019.
			She is currently a Ph.D. student in Telecom Paris, France. Her particular fields of interest include multi-terminal information theory, networks with states,  user cooperation, and statistical learning.
			\end{IEEEbiographynophoto}
		\begin{IEEEbiographynophoto}
			{Mari Kobayashi} (M’06–SM’15) received the B.E. degree in electrical engineering from Keio University, Yokohama, Japan, in 1999, and the M.S. degree in mobile radio and the Ph.D. degree from École Nationale Supérieure des Télécommunications, Paris, France, in 2000 and 2005, respectively. From November 2005 to March 2007, she was a postdoctoral researcher at the Centre Tecnològic de Telecomunicacions de Catalunya, Barcelona, Spain. In May 2007, she joined the Telecommunications department at Centrale Supélec, Gif-sur-Yvette, France, where she has been professor since 2016. She is currently with Apple Technology Engineering B.V.  Co. KG. 
			She is the recipient of the Newcom++ Best Paper Award in 2010, and IEEE Comsoc/IT Joint Society Paper Award in 2011, and ICC Best Paper Award in 2019. She was an Alexander von Humboldt Experienced Research Fellow (September 2017- April 2019) and an August-Wihelm Scheer Visiting Professor (August 2019-April 2020) at Technical University of Munich (TUM). 
			\end{IEEEbiographynophoto}
	\begin{IEEEbiographynophoto}{Mich\`ele Wigger}
		(S'05--M'09--SM'14) received the M.Sc.\ degree in electrical
		engineering, with distinction, and the Ph.D.\ degree in electrical
		engineering both from ETH Zurich in 2003 and 2008, respectively. In
		2009, she was first a post-doctoral fellow at the University of
		California, San Diego, USA, and then joined Telecom Paris, France,
		where she is currently a full professor. She has held
		visiting professor appointments at the Technion--Israel Institute of
		Technology and ETH Zurich.  Mich\`ele Wigger has previously served as an
		Associate Editor of the IEEE Communication Letters and as an
		Associate Editor for Shannon Theory for the IEEE Transactions on
		Information Theory. During 2016--2019 she also served on the Board
		of Governors of the IEEE Information Theory Society. Her 
		research interests are in multi-terminal information theory, in
		particular in distributed source coding and in capacities of
		networks with states, feedback, user cooperation, or caching.
	\end{IEEEbiographynophoto}

	\begin{IEEEbiographynophoto}
{Giuseppe Caire} (S '92 -- M '94 -- SM '03 -- F '05) 
was born in Torino in 1965. He received the 
B.Sc. in Electrical Engineering  from Politecnico di Torino in 1990, 
the M.Sc. in Electrical Engineering from Princeton University in 1992, and the Ph.D. from Politecnico di Torino in 1994. 
He has been a post-doctoral research fellow with the European Space Agency (ESTEC, Noordwijk, The Netherlands) in 1994-1995,
Assistant Professor in Telecommunications at the Politecnico di Torino, Associate Professor at the University of Parma, Italy, 
Professor with the Department of Mobile Communications at the Eurecom Institute,  Sophia-Antipolis, France,
a Professor of Electrical Engineering with the Viterbi School of Engineering, University of Southern California, Los Angeles,
and he is currently an Alexander von Humboldt Professor with the Faculty of Electrical Engineering and Computer Science at the
Technical University of Berlin, Germany.

He received the Jack Neubauer Best System Paper Award from the IEEE Vehicular Technology Society in 2003,  the
IEEE Communications Society and Information Theory Society Joint Paper Award in 2004 and in 2011, 
the Okawa Research Award in 2006,   
the Alexander von Humboldt Professorship in 2014, the Vodafone Innovation Prize in 2015, an ERC Advanced Grant in 2018, 
the Leonard G. Abraham Prize for best IEEE JSAC paper in 2019, the IEEE Communications Society Edwin Howard Armstrong Achievement Award in 2020, 
and he is a recipient of the 2021 Leibniz Prize  of the German National Science Foundation (DFG). 
Giuseppe Caire is a Fellow of IEEE since 2005.  He has served in the Board of Governors of the IEEE Information Theory Society from 2004 to 2007,
and as officer from 2008 to 2013. He was President of the IEEE Information Theory Society in 2011. 
His main research interests are in the field of communications theory, information theory, channel and source coding
with particular focus on wireless communications.   
	\end{IEEEbiographynophoto}
\end{document}